\renewcommand\footnotetextcopyrightpermission[1]{} % removes footnote with conference information in first column
\def\mdseries@tt{m}             %1
\newtheorem{theorem}{Theorem}[section]
\newtheorem{lemma}[theorem]{Lemma}
\theoremstyle{definition}
\newcommand{\xor}{\oplus}
\newcommand{\Xor}{\bigoplus}
\newcommand{\band}{\odot}
\newcommand{\cmark}{\ding{51}}
\newcommand{\xmark}{\ding{55}}
\newcommand{\iseq}{\ensuremath{\stackrel{?}{=} }}
\newcommand{\Z}[1]{\ensuremath{\mathbb{Z}}_{2^{#1}}}
\setlist[description]{style=unboxed,leftmargin=0cm}
\newenvironment{myitemize}{
	\begin{list}{{$\bullet$}}{
			\setlength\partopsep{0pt}
			\setlength\parskip{0pt}
			\setlength\parsep{0pt}
			\setlength\topsep{2pt}
			\setlength\itemsep{1pt}
			\setlength{\itemindent}{0pt}
			\setlength{\leftmargin}{9pt}
		}
	}{
		\vspace{1mm}
	\end{list}
}
\newcounter{itemcount}
\newcommand{\tabref}[1]{Table~\protect\ref{tab:#1}}
\newcommand{\secref}[1]{Section~\protect\ref{sec:#1}}
\newcommand{\figref}[1]{Figure~\ref{fig:#1}}
\newcommand{\figlab}[1]{\label{fig:#1}}
\newenvironment{boxfig*}[2]{% {#1}{#2} = {Caption}{label}
	\begin{figure*}[h!]		
		\fontsize{5}{5}\selectfont
		\newcommand{\FigCaption}{#1}
		\newcommand{\FigLabel}{#2}
		\vspace{-.05cm}
		\begin{center}
			\begin{small}			 
				\begin{adjustbox}{max width=\textwidth}
					\begin{tabular}{@{}|@{~~}l@{~~}|@{}}
						\hline
						%\rule[-1ex]{0pt}{1ex}\begin{minipage}[!htb]{\textwidth}   
						\rule[-1ex]{0pt}{1ex}\begin{minipage}[b]{.95\linewidth}
							\vspace{1ex}	
						}{%
						\end{minipage}\\
						\hline
					\end{tabular}	
				\end{adjustbox}		
			\end{small}
			\vspace{-0.1cm}
			\caption{\FigCaption}
			\figlab{\FigLabel}
		\end{center}
		\vspace{-.38cm}
	\end{figure*}
}
\newenvironment{myboxfig*}[2]{% {#1}{#2} = {Caption}{label}
	\begin{figure*}[!htb]		
		\fontsize{5}{5}\selectfont
		\newcommand{\FigCaption}{#1}
		\newcommand{\FigLabel}{#2}
		\vspace{-.10cm}
		\begin{center}
			\caption{\FigCaption}
			\begin{small}			 
				\begin{adjustbox}{max width=\textwidth}
					\begin{tabular}{@{}|@{~~}l@{~~}|@{}}
						\hline
						%\rule[-1ex]{0pt}{1ex}\begin{minipage}[!htb]{\textwidth}   
						\rule[-1ex]{0pt}{1ex}\begin{minipage}[b]{.95\linewidth}
							\vspace{1ex}	
						}{%
						\end{minipage}\\
						\hline
					\end{tabular}	
				\end{adjustbox}		
			\end{small}
			\vspace{-0.25cm}
			\figlab{\FigLabel}
		\end{center}
		\vspace{-.38cm}
	\end{figure*}
}
\newcommand{\boxref}[1]{Figure~\ref{#1}}
\newenvironment{titlebox}[5]
{\mdfsetup{
		style=#2,
		innertopmargin=1.1\baselineskip,
		skipabove={\dimexpr0.7\baselineskip+\topskip\relax},
		skipbelow={1em},needspace=3\baselineskip,
		singleextra={\node[#3,right=10pt,overlay] at (P-|O){~{\sffamily\bfseries #1 }};},%
		firstextra={\node[#3,right=10pt,overlay] at (P-|O) {~{\sffamily\bfseries #1 }};},
		frametitleaboveskip=9em,
		innerrightmargin=5pt
	}
	\newcommand{\TitleCaption}{#4}
	\newcommand{\TitleLabel}{#5}
	\begin{mdframed}[font=\small]
		\setlist[itemize]{leftmargin=13pt}\setlist[enumerate]{leftmargin=13pt}\raggedright% 
	}
	{\end{mdframed}
	\vspace{-1em}
	{\captionof{figure}{\small \TitleCaption}\label{\TitleLabel}}
	\medskip
}
\tikzstyle{normal} = [thick, fill=white, text=black, draw, rounded corners, rectangle, minimum height=.7cm, inner sep=3pt]
\tikzstyle{gray} = [thick, fill=gray!90, text=white, rounded corners, rectangle, minimum height=.7cm, inner sep=3pt]
\newenvironment{systembox*}[3]
{\begin{strip}
		\vspace{\baselineskip}\begin{titlebox}{Functionality \normalfont #1}{roundbox}{normal}{#2}{#3}}
		{\end{titlebox}
\end{strip}}
\newenvironment{gsystembox*}[3]
{\begin{strip}
		\vspace{\baselineskip}\begin{titlebox}{Global Functionality \normalfont #1}{roundbox}{normal}{#2}{#3}}
		{\end{titlebox}
\end{strip}}
\newenvironment{protocolbox*}[3]
{\begin{strip}
		\begin{titlebox}{Protocol \normalfont #1}{commonbox}{normal}{#2}{#3}}
		{\end{titlebox}
\end{strip}}
\newenvironment{algobox*}[3]
{\begin{strip}
		\begin{titlebox}{Algorithm \normalfont #1}{commonbox}{normal}{#2}{#3}}
		{\end{titlebox}
\end{strip}}
\newenvironment{reductionbox*}[3]
{\begin{strip}
		\begin{titlebox}{Reduction \normalfont #1}{commonbox}{normal}{#2}{#3}}
		{\end{titlebox}
\end{strip}}
\newenvironment{gamebox*}[3]
{\begin{strip}
		\begin{titlebox}{Game \normalfont #1}{commonbox}{gray}{#2}{#3}}
		{\end{titlebox}
\end{strip}}
\newenvironment{simulatorbox*}[3]
{\begin{strip}
		\begin{titlebox}{Simulator \normalfont #1}{commonbox}{normal}{#2}{#3}}
		{\end{titlebox}
\end{strip}}
\newenvironment{titlebox*}[5]
{\mdfsetup{
		style=#2,
		innertopmargin=0.3\baselineskip,
		skipabove={0.4em},
		skipbelow={1em},needspace=3\baselineskip,
		frametitleaboveskip=5em,
		innerrightmargin=5pt
	}
	\newcommand{\TitleCaption}{#4}
	\newcommand{\TitleLabel}{#5}
	\begin{mdframed}[font=\small]
		\setlist[itemize]{leftmargin=13pt}\setlist[enumerate]{leftmargin=13pt}\raggedright% 
	}
	{\end{mdframed}
	\vspace{-2em}
	{\captionof{figure}{\normalfont \TitleCaption}\label{\TitleLabel}}
	\medskip
}
\newenvironment{mysystembox}[3]
{\begin{titlebox*}{Functionality \normalfont #1}{myroundbox}{normal}{#2}{#3}}
	{\end{titlebox*}}
\newenvironment{myprotocolbox}[3]
{\begin{titlebox*}{Protocol \normalfont #1}{mycommonbox}{normal}{#2}{#3}}
	{\end{titlebox*}}
\newenvironment{mysimulatorbox}[3]
{\begin{titlebox*}{Simulator \normalfont #1}{mycommonbox}{normal}{#2}{#3}}
	{\end{titlebox*}}
\newenvironment{mysystembox*}[3]
{\begin{strip}
		\vspace{\baselineskip}\begin{titlebox*}{Functionality \normalfont #1}{myroundbox}{normal}{#2}{#3}}
		{\end{titlebox*}
\end{strip}}
\newenvironment{mygsystembox*}[3]
{\begin{strip}
		\vspace{\baselineskip}\begin{titlebox*}{Global Functionality \normalfont #1}{myroundbox}{normal}{#2}{#3}}
		{\end{titlebox*}
\end{strip}}
\newenvironment{myprotocolbox*}[3]
{\begin{strip}
		\begin{titlebox*}{Protocol \normalfont #1}{mycommonbox}{normal}{#2}{#3}}
		{\end{titlebox*}
\end{strip}}
\newenvironment{myalgobox*}[3]
{\begin{strip}
		\begin{titlebox*}{Algorithm \normalfont #1}{mycommonbox}{normal}{#2}{#3}}
		{\end{titlebox*}
\end{strip}}
\newenvironment{myreductionbox*}[3]
{\begin{strip}
		\begin{titlebox*}{Reduction \normalfont #1}{mycommonbox}{normal}{#2}{#3}}
		{\end{titlebox*}
\end{strip}}
\newenvironment{mygamebox*}[3]
{\begin{strip}
		\begin{titlebox*}{Game \normalfont #1}{mycommonbox}{gray}{#2}{#3}}
		{\end{titlebox*}
\end{strip}}
\newenvironment{mysimulatorbox*}[3]
{\begin{strip}
		\begin{titlebox*}{Simulator \normalfont #1}{mycommonbox}{normal}{#2}{#3}}
		{\end{titlebox*}
\end{strip}}
\newcommand{\algoHead}[1]{\vspace{0.2em} \underline{\textbf{#1}} \vspace{0.3em}}
\algnewcommand{\ExtendedState}[1]{\State
	\parbox[t]{\dimexpr\linewidth-\ALG@thistlm}{\hangindent=\algorithmicindent\strut\hangafter=3#1\strut}}
\algnewcommand\algorithmicinput{\textbf{Input:}}
\algnewcommand\Input{\item[\algorithmicinput]}
\algrenewcommand{\algorithmiccomment}[1]{{\color{gray}// #1}}
\let\emptyset\varnothing
\newcommand{\ckt}{\ensuremath{\mathsf{ckt}}}
\newcommand{\gate}{\mathsf{g}}
\newcommand{\wire}{\mathsf{w}}
\newcommand{\PS}{\ensuremath{\mathsf{A}}} 
\newcommand{\IS}{\ensuremath{\mathsf{I}}}
\newcommand{\OS}{\ensuremath{\mathsf{O}}}  
\newcommand{\MS}{\ensuremath{\mathsf{M}}} 
\newcommand{\DF}{\ensuremath{\mathsf{D}}}
\newcommand{\wx}{x}
\newcommand{\wy}{y}
\newcommand{\wz}{z}
\newcommand{\Wx}{\wire_{\wx}}
\newcommand{\Wy}{\wire_{\wy}}
\newcommand{\Wz}{\wire_{\wz}}
\newcommand{\Wxyz}{\ensuremath{\Wx, \Wy, \Wz}}
\newcommand{\Wxyzj}{\ensuremath{{\Wx}_j, {\Wy}_j, {\Wz}_j}}
\newcommand{\MPC}{\ensuremath{\mbox{MPC}}}
\newcommand{\negl}{\ensuremath{\mathsf{negl}}}
\newcommand{\csec}{\kappa}
\newcommand{\sparam}{\ensuremath{s}}
\newcommand{\abort}{\ensuremath{\mathtt{abort}}}
\newcommand{\continue}{\ensuremath{\mathtt{continue}}}
\newcommand{\flag}{\ensuremath{\mathsf{flag}}}
\newcommand{\Partyset}{\ensuremath{\mathcal{P}}}
\newcommand{\semi}{\ensuremath{\mathsf{s}}}
\newcommand{\mal}{\ensuremath{\mathsf{m}}}
\newcommand{\SELECT}{\ensuremath{\mathsf{select}}}
\newcommand{\INPUT}{\ensuremath{\mathsf{Input}}}
\newcommand{\OUTPUT}{\ensuremath{\mathsf{Output}}}
\newcommand{\Adv}{\ensuremath{\mathcal{A}}}
\newcommand{\Sim}{\ensuremath{\mathcal{S}}}
\newcommand{\Hash}{\ensuremath{\mathsf{H}}}
\newcommand{\rtt}{\ensuremath{\mathsf{rtt}}}
\newcommand{\ESet}{\ensuremath{P_1, P_2}}		%Evaluator
\newcommand{\EInSet}{\ensuremath{ \{1, 2\}}}	%Evaluator Index Set
\newcommand{\PInSet}{\ensuremath{ \{0, 1, 2\}}}	%Party Index Set
\newcommand{\val}{\ensuremath{\mathsf{v}}} 
\newcommand{\pad}{\ensuremath{\mathsf{\lambda}}} 
\newcommand{\mask}{\ensuremath{\mathsf{m}}}
\newcommand{\mpad}{\ensuremath{\mathsf{\delta}}} 
\newcommand{\sqd}{\ensuremath{[\cdot]}}
\newcommand{\sqr}[1]{[#1]}
\newcommand{\shrd}{\ensuremath{\llbracket \cdot \rrbracket}}
\newcommand{\shr}[1]{\ensuremath{\llbracket #1 \rrbracket}}
\newcommand{\Mask}[1]{\ensuremath{\mask_{#1}}}
\newcommand{\Mstar}[1]{\ensuremath{\mask_{#1}^{\star}}}
\newcommand{\Mbar}[1]{\ensuremath{\mask_{#1}^*}}
\newcommand{\Pad}[1]{\ensuremath{\pad_{#1}}}
\newcommand{\PadA}[1]{\ensuremath{\pad_{#1,1}}}
\newcommand{\PadB}[1]{\ensuremath{\pad_{#1,2}}}
\newcommand{\GammaV}[1]{\ensuremath{\gamma_{#1}}}
\newcommand{\GammaA}[1]{\ensuremath{\gamma_{#1,1}}}
\newcommand{\GammaB}[1]{\ensuremath{\gamma_{#1,2}}}
\newcommand{\Gammaxy}{\ensuremath{\gamma_{\wx \wy}}}
\newcommand{\GammaxyA}{\ensuremath{\gamma_{\wx \wy,1}}}
\newcommand{\GammaxyB}{\ensuremath{\gamma_{\wx \wy,2}}}
\newcommand{\Gammaxyj}{\ensuremath{\gamma_{{\wx}_j {\wy}_j}}}
\newcommand{\GammaxyjA}{\ensuremath{\gamma_{{\wx}_j {\wy}_j,1}}}
\newcommand{\GammaxyjB}{\ensuremath{\gamma_{{\wx}_j {\wy}_j,2}}}
\newcommand{\MPad}[1]{\ensuremath{\mpad_{#1}}}
\newcommand{\shrB}[1]{\ensuremath{{\llbracket #1 \rrbracket}^{\bf B}}}
\newcommand{\wma}{\mathsf{a}}
\newcommand{\wmb}{\mathsf{b}}
\newcommand{\wmc}{\mathsf{c}}
\newcommand{\wmd}{\mathsf{d}}
\newcommand{\wme}{\mathsf{e}}
\newcommand{\wmf}{\mathsf{f}}
\newcommand{\PiSemi}{\ensuremath{\Pi_{\mathsf{3pc}}^{\semi}}}
\newcommand{\PiShS}{\ensuremath{\Pi_{\mathsf{Sh}}^{\semi}}}
\newcommand{\PiAdd}{\ensuremath{\Pi_{\mathsf{Add}}}}
\newcommand{\PiMult}{\ensuremath{\Pi_{\mathsf{Mul}}}}
\newcommand{\PiMultS}{\ensuremath{\Pi_{\mathsf{Mul}}^{\semi}}}
\newcommand{\PiRecS}{\ensuremath{\Pi_{\mathsf{Rec}}^{\semi}}}
\newcommand{\PiMal}{\ensuremath{\Pi_{\mathsf{3pc}}^{\mal}}}
\newcommand{\PiShM}{\ensuremath{\Pi_{\mathsf{Sh}}^{\mal}}}
\newcommand{\PiMultM}{\ensuremath{\Pi_{\mathsf{Mul}}^{\mal}}}
\newcommand{\PiRecM}{\ensuremath{\Pi_{\mathsf{Rec}}^{\mal}}}
\newcommand{\PiFRec}{\ensuremath{\Pi_{\mathsf{fRec}}}}
\newcommand{\PiTripCheck}{\ensuremath{\Pi_\mathsf{prc}}}
\newcommand{\PiTripGen}{\ensuremath{\Pi_\mathsf{trip}}}
\newcommand{\PiRand}{\ensuremath{\Pi_{\mathsf{rand}}}}
\newcommand{\FTRIPLES}{\ensuremath{\mathcal{F}_{\mathsf{trip}}}}
\newcommand{\FTHREEMPC}{\ensuremath{\mathcal{F}_{\mathsf{3pc}}}}
\newcommand{\FSETUP}{\ensuremath{\mathcal{F}_{\mathsf{setup}}}}
\newcommand{\FTHREEMPCABORT}{\ensuremath{\mathcal{F}_{\mathsf{3pc}}^{\mathsf{Abort}}}}
\newcommand{\SPiSemi}{\ensuremath{{\mathcal S}_{\mathsf{3pc}}^{\semi}}}
\newcommand{\SPiMal}{\ensuremath{{\mathcal S}_{\mathsf{3pc}}^{\mal}}}
\newcommand{\SSetupMal}{\ensuremath{{\mathcal S}_{\mathsf{setup}}^{\mal}}}
\newcommand{\Va}{\mathsf{a}}
\newcommand{\Vb}{\mathsf{b}}
\newcommand{\Vp}{\mathsf{p}}
\newcommand{\Vq}{\mathsf{q}}
\newcommand{\Vr}{\mathsf{u}}
\newcommand{\Vu}{\mathsf{u}}
\newcommand{\VPad}[1]{\ensuremath{\overrightarrow{\pad_{#1}}}}
\newcommand{\VMask}[1]{\ensuremath{\overrightarrow{\mask_{#1}}}}
\newcommand{\MA}{\ensuremath{\vec{\mathbf{a}}}}
\newcommand{\MB}{\ensuremath{\vec{\mathbf{b}}}}
\newcommand{\MP}{\ensuremath{\vec{\mathbf{p}}}}
\newcommand{\MQ}{\ensuremath{\vec{\mathbf{q}}}}
\newcommand{\MW}{\ensuremath{\vec{\mathbf{w}}}}
\newcommand{\MX}{\ensuremath{\vec{\mathbf{x}}}}
\newcommand{\MZ}{\ensuremath{\vec{\mathbf{z}}}}
\newcommand{\VPQ}{\ensuremath{\mathbf{p}\mathbf{q}}}
\newcommand{\sval}{\mathsf{a}}
\newcommand{\msp}{\mathsf{p}}
\newcommand{\msq}{\mathsf{q}}
\newcommand{\sr}{\mathsf{r}}
\newcommand{\su}{\mathsf{u}}
\newcommand{\sv}{\mathsf{v}}
\newcommand{\MSB}[1]{\mathsf{msb}(#1)}
\newcommand{\Model}{\ensuremath{\mathtt{M}}}
\newcommand{\Client}{\ensuremath{\mathtt{C}}}
\newcommand{\sign}{\ensuremath{\mathsf{sign}}}
\newcommand{\BitExt}{\ensuremath{\mathsf{BitExt}}}
\newcommand{\piBitExtS}{\ensuremath{\Pi^{\semi}_{\BitExt}}}
\newcommand{\PiBitExtS}[2]{\ensuremath{\Pi^{\semi}_{\BitExt}({#1}, {#2})}}
\newcommand{\piBitExtM}{\ensuremath{\Pi^{\mal}_{\BitExt}}}
\newcommand{\PiBitExtM}[2]{\ensuremath{\Pi^{\mal}_{\BitExt}({#1}, {#2})}}
\newcommand{\funcML}[1]{\ensuremath{f_{\mathsf{#1}}}}
\newcommand{\PiDot}{\ensuremath{\Pi_{\mathsf{dp}}}}
\newcommand{\PiDotSemi}{\ensuremath{\Pi_{\mathsf{dp}}^{\semi}}}
\newcommand{\PiDotMal}{\ensuremath{\Pi_{\mathsf{dp}}^{\mal}}}
\newcommand{\share}[1]{\ensuremath{\llbracket#1\rrbracket}}
\newcommand{{\piaSh}}[1]{\ensuremath{\Pi^{#1}_{\Sh}}}
\newcommand{\shareB}[1]{\ensuremath{{\llbracket#1\rrbracket}^{\bf B}}}
\begin{document}
\sloppy
\date{}
\fancyhead{}
\title{ASTRA: High Throughput 3PC over Rings with Application to Secure Prediction}
\titlenote{This article is the full and extended version of an earlier article to appear in ACM CCSW 2019}

%-----------------------------------------------------------------------------------------------------------------------------------------------
\author{Harsh Chaudhari}
\affiliation{\institution{Indian Institute of Science, Bangalore India}}
\email{chaudharim@iisc.ac.in}

\author{Ashish Choudhury}
\authornote{This Publication is an
	outcome of the R\&D work undertaken in the project under the Visvesvaraya
	PhD  Scheme of  Ministry of Electronics \&
	Information  Technology,
	Government of India,
	being implemented by Digital India Corporation (formerly Media Lab Asia)}
\affiliation{\institution{International Institute of Information Technology Bangalore, India}}
\email{ashish.choudhury@iiitb.ac.in}

\author{Arpita Patra}
\authornote{Arpita Patra  would like to acknowledge financial support by Tata Trust Travel Grant 2019 and SERB Women Excellence Award 2017 (DSTO 1706).}
\affiliation{\institution{Indian Institute of Science, Bangalore India}}
\email{arpita@iisc.ac.in}

\author{Ajith Suresh}
\affiliation{\institution{Indian Institute of Science, Bangalore India}}
\email{ajith@iisc.ac.in}
%-----------------------------------------------------------------------------------------------------------------------------------------------

\maketitle
%-----------------------------------------------------------------------------------------------------------------------------------------------
\subsection*{Abstract}
%-----------------------------------------------------------------------------------------------------------------------------------------------
The concrete efficiency of secure computation has been the focus of many recent works. In this work, we present concretely-efficient  protocols for secure $3$-party computation (3PC) over a ring of integers modulo $2^{\ell}$  tolerating one corruption, both with semi-honest and malicious security.  Owing to  the fact that computation over ring emulates computation over the real-world system architectures, secure computation over  ring has gained momentum of late.  

Cast in the offline-online paradigm, our constructions present the most efficient online phase in concrete terms. In the semi-honest setting, our protocol  requires communication of $2$ ring elements  per multiplication gate during the  {\it online} phase, attaining a per-party cost of {\em less than one element}. This  is achieved for the first time in the regime of 3PC. In the {\it malicious} setting, our protocol requires communication of $4$  elements per multiplication gate during the online phase, beating the state-of-the-art protocol by $5$ elements. Realized with both the security notions of selective abort and  fairness, the malicious protocol with fairness  involves slightly more communication than its counterpart with abort security for the output gates {\em alone}.

We apply our techniques from $3$PC in the regime of secure server-aided  machine-learning (ML) inference for a range of prediction functions--  linear regression,  linear SVM regression,  logistic regression,  and linear SVM classification. Our setting considers a model-owner with trained model parameters and a client with a query, with the latter willing to learn the prediction of her query based on the model parameters of the former. The inputs and computation are outsourced to a set of three non-colluding servers. Our constructions catering to both semi-honest and the malicious world, invariably perform better than the existing constructions. 
%----
\section{Introduction}
\label{sec:intro}
Secure Multi-Party Computation (MPC) \cite{Yao82,GMW87,BGW88},   the holy grail of secure distributed computing,  enables a set of $n$ mutually distrusting parties to perform joint computation on their private inputs, in a way that no coalition of $t$ parties can learn more information  than the output (privacy) or affect the true output of the computation (correctness).  While MPC, in general, has been a subject of extensive research, the area of $\MPC$ with a small number of parties in the {\it honest majority} setting \cite{MRZ15,AFLNO16,FLNW17,ChandranGMV17,ByaliJPR18} has drawn popularity of late mainly due to its efficiency and simplicity. Furthermore, most real-time applications involve a small number of parties. Applications such as statistical and financial data analysis \cite{BogdanovTW12}, email-filtering \cite{LaunchburyADM14}, distributed credential encryption \cite{MRZ15}, Danish sugar beet auction \cite{BogetoftCDGJKNNNPST09} involve 3 parties. Well-known  MPC frameworks such as VIFF \cite{Gei07}, Sharemind \cite{BogdanovLW08} have been explored with 3 parties. Recent advances in secure machine learning  (ML) based on  MPC  have shown  applications with a small number of parties \cite{MohasselZ17, MakriRSV17,RiaziWTS0K18, MR18, WaghGC18}. MPC with a small number of parties helps  solve MPC over large population as well via server-aided computation, where a small number of servers jointly hold the input data of the large population and run an MPC protocol evaluating the desired function.

With motivations galore, the specific problem of three-party computation (3PC) tolerating one corruption has received phenomenal attention of late   \cite{AFLNO16,FLNW17,ABFLLNOWW17,LN17,CGHIKLN18,NV18,MRZ15,IshaiKKP15,PatraR18,ByaliJPR18,NV18}.  Leveraging honest majority,  this setting allows to attain stronger security goals such as \textit{fairness} (corrupt party receives the output only if all honest parties receive output) which are otherwise impossible with dishonest-majority \cite{Cleve86}. In this work, we revisit  the concrete efficiency of  3PC and to be specific, the efficiency of the input-dependent computation. 

The two typical lines of constructions that the regime  of MPC over small population offer are-- high-throughput~\cite{AFLNO16,FLNW17,ArakiBFLNO16,ABFLLNOWW17,CGHIKLN18,NV18}, and low-latency~\cite{IshaiKKP15,GordonR018,MRZ15, ChandranGMV17, ByaliJPR18, PatraR18} protocols. Relying on secret sharing mechanism,  the former category requires  low communication overhead (bandwidth) and simple computations. Catering to low-latency networks, this category takes a number of communication rounds proportional to the multiplicative depth of the circuit representing the function to be computed. On the other hand, the other category, relying on garbled circuits, requires a constant number of communication rounds and  serve better in high-latency networks such as the Internet.  The focus of this work is high-throughput 3PC.

Almost all high-throughput protocols evaluate a circuit that represents the function $f$ to be computed in a  secret-shared fashion. Informally,  the parties jointly maintain the invariant that for each wire in the circuit, the exact value over that wire is available in a secret-shared fashion among the parties, in a way that the adversary learns no information about the exact value from the shares of the corrupt parties. Upon completion of  the circuit evaluation, the parties jointly reconstruct the secret-shared function output. Intuitively, the security holds as no intermediate value is revealed during the computation. The deployed secret-sharing schemes  are typically linear, ensuring non-interactive evaluation of the linear gates. The communication is required {\em only} for the non-linear  (i.e.multiplication) gates in the circuit. The focus then turns on  improving the communication overhead per multiplication gate.  Recent literature has seen a  range of customized linear secret-sharing schemes over a small number of parties, boosting the performance for multiplication gate spectacularly \cite{FLNW17,ABFLLNOWW17,GordonR018}.

In an interesting direction towards improving efficiency,  MPC protocols are suggested  to be cast in two phases-- an offline phase that performs  {\it input-independent}  computation and an online phase that performs {\em fast input-dependent} computation utilizing the offline computation \cite{Bea91}.  The offline phase, run  in advance,  generates `raw material' in  a relatively expensive way to yield a  blazing-fast online phase.  This is very useful in a scenario where a set of parties agreed to  perform a specific computation repetitively over a period of time.  The parties can batch together the offline computations and generate a large volume of offline data to support the execution of multiple online phases. Popularly referred as offline-online paradigm \cite{Bea91}, there are constructions abound that show effectiveness of this paradigm  both in the theoretical \cite{Bea91,Bea95,BH06,BH08,BFO12,CP17}  and practical  \cite{DPSZ12,SPDZ2,SPDZ3,KOS16,BaumDTZ16,DamgardOS17,CramerDESX18,RiaziWTS0K18,KellerPR18}  regime. 

In yet another direction to improve practical efficiency,  secure computation for arithmetic circuits  over rings has gained momentum of late, while traditionally fields  have been the default choice.  Computation over rings models computation in the real-life computer architectures such as computation over CPU words of size 32 or 64 bits. In 3PC setting, the work of \cite{BogdanovLW08}  supports arithmetic circuits over arbitrary rings  with passive security, while   \cite{ABFLLNOWW17} offers active security.  The works of \cite{DamgardOS17, EeriksonOPPS19} improve online communication over arbitrary rings with active security, yet fall back to computation over large prime-order fields  in the offline phase. This forces the developer to depend on external libraries for fields (which are $10\times$-$100\times$ slower) compared to the real-world system architectures based on 32-bit and 64-bit rings.

\subsection{Our Contribution}
%----------------------------------------------------------------
In this work, we follow the offline-online paradigm and propose 3PC constructions over a ring $\Z{\ell}$ (that include Boolean ring $\Z{1}$) with  the most efficient online phase in concrete terms. Though the focus lies on the online phase,   the cost of offline phase is respected and is kept in check.  We present a range of constructions satisfying semi-honest and malicious security. We apply our techniques for secure prediction for a range of prediction functions in the outsourced setting and build a number of constructions tolerating semi-honest and malicious adversary. A common feature that all our constructions exude is that function-dependent communication is needed amongst {\em fewer} than three pairs in the online phase, yielding better online performance. We elaborate on our contributions.
%----------------------------------------------------------------
\paragraph{Secure 3PC}
%----------------------------------------------------------------
Our 3PC protocol with semi-honest security requires communication of two  elements per multiplication during the online phase.  The per-party online cost of our protocol is less than one element per multiplication, a property achieved for the first time in the 3PC setting. This improvement comes from the use of a form of linear secret-sharing scheme inspired from the work of \cite{GordonR018} that allows offloading the task of one of the parties in the offline phase and requires  {\em only} two parties to talk to each other in the online phase. This essentially implies that the evaluation of multiplication gates in the online phase requires the presence of just two parties, unlike the  previous protocols \cite{AFLNO16,FLNW17,ABFLLNOWW17,LN17,CGHIKLN18} that insist all the three parties be awake throughout the computation. One exception is the case of Chameleon \cite{RiaziWTS0K18}, where two parties perform the online computation with the help of correlated randomness generated by a {\it semi-trusted} party in the offline phase. Though the model looks similar in the semi-honest setting, we achieve a stronger security guarantee by allowing the third party to be maliciously corrupted. Moreover, our multiplication protocol in the semi-honest setting requires an online communication of 2 ring elements as opposed to 4 of \cite{RiaziWTS0K18}. We achieve this $2\times$ improvement while maintaining the same offline cost (1 element) of \cite{RiaziWTS0K18}.

For the malicious case, our protocol requires a {\it total} communication of four elements per multiplication during the online phase. The state-of-the-art protocol over {\it rings} requires nine ring elements per multiplication in the online phase. Lastly, we  boost the security of our malicious protocol to fairness  without affecting its cost  per multiplication. The inflation inflicted is purely for the output gates and to be specific for output reconstruction. The key contribution of the fair protocol lies in constructing a fair reconstruction protocol that ensures a corrupt party receives the output if and only if the honest parties receive.  The fair reconstruction does not resort to a broadcast channel and instead rely on a  new concept of `proof of origin'  that tackles the confusion a sender can infuse in the absence of broadcast channel by sending different messages to its fellow parties over private channels. 

In \tabref{concrete3PC}, we compare our work with the most relevant works. The communication specifies the number of bits that needs to be communicated per multiplication gate in the amortized sense.
%-----
\begin{table}[htb!]
	\centering
	\resizebox{0.46\textwidth}{!}
	{
		\begin{tabu} to 1\textwidth {  l | l | l | l | l | l | l}
			\toprule
			\multicolumn{3}{c|}{Semi-honest} & \multicolumn{4}{c}{Malicious}  \\ 	
			\midrule
			Ref. &  Offline & Online & Ref.  & Offline & Online & Fair?\\ 
			\midrule     
			%---
			\cite{AFLNO16} 			& $0$        & $3  \ell$  
			& \cite{ABFLLNOWW17}	& $12  \ell$ & $9 \ell$ 	& \xmark \\ \midrule
			{\bf This} 				& $ \ell$ 	 & $\mathbf{2  \ell}$  
			& {\bf This}			& $21 \ell$  & $\mathbf{4  \ell}$ & \cmark \\
			%---	
			\bottomrule
			%-----
		\end{tabu}
	}
	\caption{Concrete Comparison of our 3PC protocols\label{tab:concrete3PC}}
\end{table}
%------
%----------------------------------------------------------------
\paragraph{Secure ML Prediction}\label{sec:privMLPrelims}
%----------------------------------------------------------------
The growing influx of data makes ML a promising applied science, touching human life like never before. Its potential can be leveraged to  advance areas such as medicine \cite{EstevaKNKSBT17}, facial recognition \cite{SchroffKP15}, banking, recommendation services, threat analysis, and authentication technologies.  Many technology giants  such as Amazon, Microsoft, Google, Apple are offering cloud-based ML services to their customers both in the form of training platforms that train models on customer data and pre-trained models that can be used for inference, often referred as `ML as a Service (MLaaS)'.  However, these huge promises can only be unleashed when  rightful privacy concerns, due to ethical, legal or competitive reasons,  can be brought to control via privacy-preserving techniques. This is when privacy-preserving techniques such as MPC meets ML, with the former serving extensively in an effective way both for secure training and prediction \cite{MohasselZ17, RiaziWTS0K18, MR18, WaghGC18, LiuJLA17, LaurLM06, Dahl18}. This has a huge impact on the efficiency 

In this work, we target secure prediction  where a model-owner  holding the model parameters enables a client to receive a prediction result to its query as per the model, respecting  privacy concerns of each other. Following the works of \cite{MohasselZ17, MakriRSV17,RiaziWTS0K18, MR18, WaghGC18}, we envision a server-aided setting where the inputs and computation are outsourced to a set of servers. We consider some of the widely used ML algorithms, namely linear regression and linear support vector machines (SVM) regression for {\em regression} task and  logistic regression and SVM classification for {\em classification} task \cite{DRHPSG2000,Bishop06}. We propose an efficient protocol for {\em secure comparison}  that is an important building block for  classification task. We exploit the asymmetry in our secret sharing scheme and forgo expensive primitives such as garbled circuits or parallel prefix adders, which are used in \cite{MohasselZ17} and \cite{MR18}. As emphasized below, our technique allows attaining a constant round complexity for classification tasks.

In \tabref{concreteML}, we compare our results with the best-known construction  of  ABY3 \cite{MR18} that uses 3-server setting.   As the main focus of ABY3 is training, they develop an efficient technique for fixed-point multiplication in shared fashion, tackling the overflow and  accuracy issues in the face of repeated multiplications. Such techniques can be avoided for functions inducing circuit of multiplicative depth one.  Hence we compare with the version of ABY3 that skips this and present below a consolidated comparison in terms of communication.  Following the works in the domain of server-aided prediction, we only count the cost incurred by the servers to compute the output in shared form from the inputs in shared form, ignoring the cost  for sharing the inputs and reconstructing the output.  `Reg' denotes regression, `Class' denotes classification and `Round' denotes the number of online rounds. Here $\ell$ denotes the size of the underlying ring $\Z{\ell}$ (in bits) and $d$ denotes  the number of features. 
%--------
\begin{table}[htb!]
	\centering
	\resizebox{0.46\textwidth}{!}
	{
		\begin{tabu} to 1\textwidth { c | c | c | c | c | c }
			\toprule
			\multirow{2}[2]{*}{Ref.} 	& \multirow{2}[2]{*}{Param.} & \multicolumn{2}{c|}{Semi-honest} & \multicolumn{2}{c}{Malicious}  \\
			\cmidrule{3-6}
			&      & Reg  & Class & Reg & Class \\
			\midrule
			\multirow{3}[2]{*}{ABY3}
			& Offline & $0$ 	 & $0$      
			& $12 d \ell$ 
			& $12 d \ell + 24 \ell$ \\ \cmidrule{2-6}
			& Online  & $3 \ell$ & $9 \ell$ 
			& $9 d \ell$   
			& $9 d \ell + 18 \ell$\\ \cmidrule{2-6}
			& Round  & $1$ & $\log \ell + 1$ & $1$ & $\log \ell + 1$ \\
			\midrule
			\multirow{3}[2]{*}{\bf This}
			& Offline 	& $\ell$  				& $\ell$ 
			& $21 d \ell$ 			& $21 d \ell + 46 \ell$\\ \cmidrule{2-6}
			& Online  & $\mathbf{2 \ell}$ 	& $\mathbf{4 \ell + 2}$
			& $\mathbf{2 d \ell + 2 \ell}$ 	& $\mathbf{2 d \ell + 8 \ell + 1}$\\ \cmidrule{2-6}
			& Round   & $1$ & $\mathbf{3}$ & $1$ & $\mathbf{4}$ \\
			%---
			\bottomrule
			%-----
		\end{tabu}
	}
	\caption{Concrete Comparison of Our ML Protocols\label{tab:concreteML}}
\end{table}
%---------
The values in \tabref{concreteML} indicate that our protocol clearly outperforms ABY3, in terms of online communication in all the settings. In the semi-honest setting, this is achieved since we are able to shift $33\%$ of the overall communication to the offline phase. In the malicious setting, online communication is further improved because of our efficient dot-product protocol. Moreover, our novel construction for secure comparison allows the classification protocols to be {\em round constant} unlike ABY3 which requires $\log{\ell} + 1$ rounds. 

%----------------------------------------------------------------
\paragraph{Implementation}
%----------------------------------------------------------------
For 3PC, we implement our protocols over a ring $\Z{32}$  and compare with the state-of-the-art protocols, namely \cite{AFLNO16}  in the semi-honest setting and \cite{ABFLLNOWW17} in the malicious setting. We use latency (runtime) and online throughput as the parameters for the comparison. The online throughput in LAN setting is computed as the number of AES circuits computed per second in the online phase. As an AES circuit requires more than a second  in WAN setting, we take a different measure which is  the number of AND gates per second. We observe that our protocols improve the online throughput of the existing one by a factor of $1.05\times$ to $1.51\times$ over various settings. For the WAN setting,  this improvement translates to computing {\em additional} AND gates of the range  $1.44$  to $4.39$ millions per second.

For secure prediction, we implement our work using MNIST~\cite{MNIST10} dataset where $d=784$ and with $\ell= 64$  in both LAN and WAN setting. We observe an improvement of $1.02\times$ to $2.56\times$ over ABY3 \cite{MR18}, in terms of online throughput, over various settings for regression algorithms. For classification algorithms, the improvement ranges from $1.5\times$ to $2.93\times$.

%----------------------------------------------------------------
%\paragraph{Open Problems}
%----------------------------------------------------------------
%Our techniques are tailor-made for 3PC with 1 corruption. Extending these techniques to the case of an arbitrary $Q^{(2)}$ adversary structure~\cite{SmartW19} is left as an open problem.

%----
%\clearpage
\section{Preliminaries and Definitions}
\label{sec:prelim}

We consider a set of three parties $\Partyset = \{ P_0, P_1, P_2 \}$ that are connected by pair-wise private and authentic channels in a synchronous network. The function $f$ to be evaluated is expressed as a circuit $\ckt$ over an arithmetic ring $\Z{\ell}$, consisting of $2$-input addition and multiplication gates. The topology of the circuit is assumed to be publicly known. The term $\DF$ denotes the multiplicative depth of the circuit, while $\IS, \OS, \PS, \MS$ denote the number of input wires, output wires, addition gates and multiplication gates respectively in $\ckt$. We use the notation $\Wx$ to denote a wire $\wire$ with value $\wx$ flowing through it. We use $\gate = (\Wxyz)$ to denote a gate in the $\ckt$ with left input wire $\Wx$, right input wire $\Wy$ and output wire $\Wz$. In our protocols, we divide $\Partyset$ into disjoint sets $\{P_0 \}$ and $\{P_1, P_2 \}$, where $P_0$ acts as a ``distributor" to do the ``pre-processing" during the offline phase, which is utilized by the ``evaluators" $P_1, P_2$ to evaluate $\ckt$ during the online phase. We use the superscripts ``$\semi$" and ``$\mal$" to distinguish the protocols in the semi-honest and malicious setting respectively. The protocols over boolean ring $\Z{1}$ can be obtained by replacing the arithmetic operations addition ($+$) and multiplication ($\times$) with XOR ($\xor$) and AND ($\cdot$) respectively. Below, we present the tools needed for our protocol.
%----------------------------------------------------------------
\subsection{Collision Resistant Hash}
\label{sec:hash}
%----------------------------------------------------------------
Consider a hash function family $\Hash = \mathcal{K}\times \mathcal{L} \rightarrow \mathcal{Y}$. The hash function $\Hash$ is said to be collision resistant if for all probabilistic polynomial-time adversaries $\Adv$, given the description of $\Hash_k$ where {$k \in_R \mathcal{K}$}, there exists a negligible function $\negl()$ such that $\Pr[ (x_1,x_2) \leftarrow \Adv(k):(x_1 \ne x_2) \wedge \Hash_k(x_1)=\Hash_k(x_2)] \leq \negl(\csec)$, where $m = \mathsf{poly}(\csec)$ and $x_1,x_2 \in_R \{0,1\}^m$.

%----------------------------------------------------------------
\subsection{Shared Key Setup}
\label{sec:keysetup}
%----------------------------------------------------------------
To save communication between the parties, a one-time setup that establishes pre-shared random keys for a pseudo-random function (PRF) $F$ is used. A similar  setup has been used in the known protocols in the 3PC setting \cite{FLNW17, ABFLLNOWW17, MR18}.   Here $F : {0, 1}^{\csec} \times {0, 1}^{\csec} \rightarrow X$ is a secure PRF, with co-domain $X$ being $\Z{\ell}$. The set of keys are:
%-------------
\begin{myitemize}
	\item[--] One key shared between every pair-- $k_{01}, k_{02},\allowbreak k_{12}$ for the parties $(P_0, P_1), (P_0, P_2), (P_1, P_2)$ respectively.
	\item[--] One shared key amongst all-- $k_{\Partyset}$. 
\end{myitemize}
%--------------
If parties $P_0, P_1$ wish to sample a random value $r$ non-interactively, they invoke $F_{k_{01}}(id_{01})$ to obtain $r$, where $id_{01}$ is a counter that the parties update locally after every PRF invocation. The key used to sample a value will be clear from the context (from the identities of the pair that samples or from the fact that it is sampled by all) and will  be omitted. We model the key setup via a functionality $\FSETUP$ that  can be realized using any secure  MPC protocol. 
%----
\section{Sharing Semantics}
\label{sec:Semantics}
In this section, we explain two variants of secret sharing that are used in this work. Both the variants operate over arithmetic ($\Z{\ell}$) and boolean ($\Z{1}$) rings.
\vspace{1mm}
%----------------------------------------------------------------
\paragraph{$\sqd$-sharing}
%----------------------------------------------------------------
A value $\val$ is said to be $\sqd$-shared among parties $\ESet$, if the parties $P_1$ and $P_2$ respectively holds the values $\val_1$ and $\val_2$ such that $\val = \val_1 + \val_2$. We use $\sqd_{P_i}$ to denote the $\sqd$-share of  party $P_i$ for $i \in \EInSet$.
\vspace{1mm}
%----------------------------------------------------------------
\paragraph{$\shrd$-sharing}
%----------------------------------------------------------------
A value $\val$ is said to be $\shrd$-shared among parties $P_0, P_1$ and $P_2$, if 
\begin{myitemize}
	%---
	\item[--] there exists values $\Pad{\val}, \Mask{\val}$ such that $\val = \Mask{\val} - \Pad{\val}$.
	%---
	\item[--] $P_0$ holds   $\PadA{\val}$ and $\PadB{\val}$. 
	%---
	\item[--] $P_1$ and $P_2$ hold $(\Mask{\val}, \PadA{\val})$ and $(\Mask{\val}, \PadB{\val})$ respectively. 
	%---
\end{myitemize}
We denote $\shrd$-share of the parties as $\shr{\val}_{P_0} = (\PadA{\val}, \PadB{\val})$, $\shr{\val}_{P_1} = (\Mask{\val}, \PadA{\val})$ and $\shr{\val}_{P_2} = (\Mask{\val}, \PadB{\val})$. We use $\shr{\val} = (\Mask{\val}, \sqr{\Pad{\val}})$ to denote the $\shrd$-share of $\val$. 
\vspace{1mm}
%----------------------------------------------------------------
\paragraph{Linearity of the secret sharing schemes}
%----------------------------------------------------------------
Given the $\sqd$-sharing of $\wx, \wy \in \Z{\ell}$ and public constants $c_1, c_2 \in \Z{\ell}$, parties can locally compute $\sqr{c_1 \wx + c_2 \wy}$. To see this,
%------------
\begin{align*}
	\sqr{c_1 \wx + c_2 \wy} = (c_1 \wx_1 + c_2 \wy_1, c_1 \wx_2 + c_2 \wy_2) = c_1 \sqr{\wx} + c_2 \sqr{\wy}
\end{align*}
%------------
It is easy to see that the linearity trivially extends to $\shrd$-sharing as well. That is, given the $\shrd$-sharing of $\wx, \wy$ and public constants $c_1, c_2$, parties can locally compute $\shr{c_1 \wx + c_2 \wy}$.
%------------
\begin{align*}
	\shr{c_1 \wx + c_2 \wy} &= (c_1 \Mask{\wx} + c_2 \Mask{\wy}, c_1 \sqr{\Pad{\wx}} + c_2 \sqr{\Pad{\wy}})\\ &= c_1 \shr{\wx} + c_2 \shr{\wy}
\end{align*}
%------------
The linearity property enables parties to {\em locally} perform the operations such as addition and multiplication with a public constant.
%----
%\clearpage
\section{Our 3PC Protocol}
\label{sec:ThreePC}
We start with our 3PC protocol $\PiSemi$ that securely evaluates any arithmetic circuit over $\Z{\ell}$ for $\ell \geq 1$, tolerating semi-honest adversaries.
%----------------------------------------------------------------
\subsection{3PC with semi-honest security}
\label{sec:3pcsemi}
%----------------------------------------------------------------
Our  protocol $\PiSemi$ has three stages-- input-sharing, circuit-evaluation, and output-reconstruction. During input-sharing stage, each party generates a random $\shrd$-sharing of its input. During the circuit-evaluation stage, the parties evaluate $\ckt$ in a $\shrd$-shared fashion. During the output-reconstruction stage, the parties reconstruct the $\shrd$-shared circuit outputs. All the stages (except output-reconstruction) can be cast in the offline and online phase, where steps independent of the actual inputs can be executed in the offline phase. At a high level,  the $\sqd$-sharing needed behind every $\shrd$-shared value in the online phase is precomputed, while the $\shrd$-sharing of values themselves are computed in the online phase.  We distinguish these steps as {\em Offline} and {\em Online} steps respectively. While the {\em Offline}  steps are executed {\em only} by the distributor $P_0$,  the {\em Online} steps are executed {\em only} by the evaluators $P_1$ and $P_2$. We now individually elaborate on each of the stages. 

%----------------------------------------------------------------
\paragraph{Input-sharing Stage}
%----------------------------------------------------------------
Protocol $\PiShS(P_i, \wx)$ (\boxref{fig:PiShS}) allows party $P_i \in \Partyset$,  the designated party to give input $\wx \in \Z{\ell}$ to wire $\Wx$, to $\shrd$-share its input. In the offline step, parties locally sample $\PadA{\wx}$ and $\PadB{\wx}$ using their shared randomness such that parties $P_0$ and $P_i$ learns the entire $\Pad{\wx}$. In the online step, $P_i$ computes $\Mask{\wx}$ using $\lambda_\wx$ and sends it to the evaluators. 
%--------
\begin{myprotocolbox}{}{Protocol $\PiShS(P_i, \wx)$}{fig:PiShS}
	%-----
	\justify 
	\algoHead{Offline:}
	\begin{myitemize}
		%-----
		\item[--] If $P_i = P_0$, parties $P_0, P_j$ for $j \in \{1,2\}$ locally sample a random $\Pad{\wx, j} \in \Z{\ell}$.
		%-----
		\item[--] If $P_i = P_1$, parties $P_0, P_1$ sample a random $\PadA{\wx} \in \Z{\ell}$ while all the parties in $\Partyset$ sample a random $\PadB{\wx}$. % using $k_{\Partyset}$.
		%-----
		\item[--] If $P_i = P_2$, parties $P_0, P_2$ sample a random $\PadB{\wx} \in \Z{\ell}$ while all the parties in $\Partyset$ sample a random $\PadA{\wx}$. % using $k_{\Partyset}$. \commentA{what is the reason for using different common keys for these two cases?}\commentAJ{Not needed. Changed}
		%-----
	\end{myitemize}
	%
	%-----
	\justify 
	\algoHead{Online:} $P_i$ sends $\Mask{\wx} = \wx + \Pad{\wx}$ to every $P_j$ for $j \in \{1,2\}$ who then sets $\shr{\wx}_{P_j} = (\Mask{\wx}, \Pad{\wx, j})$.
	%------
\end{myprotocolbox}
%------

%----------------------------------------------------------------
\paragraph{Circuit-evaluation Stage}
%----------------------------------------------------------------
Here parties evaluate each gate $\gate$ in the $\ckt$ in the {\it topological} order, where they maintain the invariant that given inputs of $\gate$ in $\shrd$-shared fashion, parties generate $\shrd$-sharing for the output of $\gate$.  If $\gate$ is an addition gate $(\Wxyz)$, then this is done locally using the linearity of $\shrd$-sharing, as per the protocol $\PiAdd$ (\boxref{fig:PiAdd}).
%------
\begin{myprotocolbox}{}{Protocol $\PiAdd(\Wxyz)$}{fig:PiAdd}
	%----
	\justify
	\algoHead{Offline:}  $P_0, P_1$ set $\PadA{\wz} = \PadA{\wx} + \PadA{\wy}$, while $P_0, P_2$ set $\PadB{\wz} = \allowbreak \PadB{\wx} + \PadB{\wy}$.  
	%---
	
	\justify
	\algoHead{Online:} $P_1$ and $ P_2$ set $\Mask{\wz} = \Mask{\wx} + \Mask{\wy}$.   
	%---
	%---     
\end{myprotocolbox}
%------
If $\gate = (\Wxyz)$ is a multiplication gate, then given $\shr{\wx} = (\Mask{\wx}, \sqr{\Pad{\wx}})$ and $\shr{\wy} = (\Mask{\wy}, \sqr{\Pad{\wy}})$, the parties compute $\shr{\wz}$ by running the protocol $\PiMultS$ (\boxref{fig:PiMultS}). During the offline phase, parties generate $\Pad{\wz}$ for the gate output. In addition, $P_0$ also $\sqr{\cdot}$-shares the product of the masks of the gate inputs ($\Pad{\wx} \Pad{\wy}$), both of which are  known to $P_0$ as a part of  $\shr{\wx}_{P_0}$ and $\shr{\wy}_{P_0}$. Online phase is executed by $\{P_1, P_2 \}$, where they locally generate $\sqr{\Mask{\wz}}$, followed by reconstructing $\Mask{\wz}$.
%-----
\begin{myprotocolbox}{}{Protocol $\PiMultS(\Wxyz)$}{fig:PiMultS}
	%-----
	\justify
	\algoHead{Offline:}  
	%-------
	\begin{myitemize}
		%-----
		\item[--]  $P_0$ and $ P_1$ locally sample random $\PadA{\wz}, \GammaxyA \in \Z{\ell}$, while $P_0 $ and $ P_2$ locally sample a random $\PadB{\wz}$ .                  
		%----
		\item[--] $P_0$ computes $\Gammaxy = \Pad{\wx} \Pad{\wy}$ and sends $\GammaxyB = \Gammaxy - \GammaxyA$ to $P_2$.                  
		%-----
	\end{myitemize}
	%------     
	
	\justify
	\algoHead{Online:} 
	%------
	\begin{myitemize}  
		%----
		\item[--]  $P_i$ for $i \in \EInSet$ locally computes $\sqr{\Mask{\wz}}_{P_i} = (i-1)\Mask{\wx} \Mask{\wy} - \allowbreak \Mask{\wx} \sqr{\Pad{\wy}}_{P_i} - \Mask{\wy} \sqr{\Pad{\wx}}_{P_i} + \sqr{\Pad{\wz}}_{P_i} +  \sqr{\Gammaxy}_{P_i}$. 
		%---    
		\item[--] $\ESet$ mutually exchange their shares and reconstruct $\Mask{\wz}$.
		%---
	\end{myitemize}
	%----  
\end{myprotocolbox}
%-----

%----------------------------------------------------------------
\paragraph{ Output-reconstruction Stage}
%----------------------------------------------------------------
To reconstruct the output from $\shr{\wy}$, we observe that the missing share of party $P_i$, for $i \in \PInSet$, is held by the other two parties. Thus, one among the other two parties can send the missing share to $P_i$, who then computes the output as $\wy = \Mask{\wy} - \PadA{\wy} - \PadB{\wy}$. We call the resultant protocol as $\PiRecS$. 
%------

We combine the aforementioned stages and present $\PiSemi$ in \boxref{fig:PiSemi}. 
\begin{myprotocolbox}{}{The semi-honest 3PC protocol $\PiSemi$}{fig:PiSemi}
	\justify
	\algoHead{Pre-processing (Offline Phase):}
	\begin{myitemize}      
		%---         
		\item[--] {\em Input wires}: For $j = 1, \ldots, \IS$, corresponding to the circuit-input $x_j$, parties execute the offline steps of the instance
		$\PiShS(P_i, x_j)$.
		%---         
		\item[--] For each gate $\gate$ in $\ckt$ in the topological order,  execute the offline steps of the instance $\PiMultS(\Wxyzj)$  if $\gate$ is the  $j$th multiplication gate where $j \in \{1, \ldots, \MS \}$  or respectively the offline steps of the instance $\PiAdd(\Wxyzj)$ if $\gate$ is  the $j$th  addition gate where $j \in \{1, \ldots, \PS \}$.    \\
		%----  
	\end{myitemize}
	%-----
	
	\justify
	\algoHead{Circuit Evaluation (Online Phase):}
	%------------
	\begin{myitemize}
		%---         
		\item[--] {\em Sharing Circuit-input Values}: For $j = 1, \ldots, \IS$, corresponding to the circuit-input $x_j$, party $P_i$ executes the online steps of the instance
		$\PiShS(P_i, x_j)$, where $P_i$ is the party designated to provide $x_j$.
		%----
		\item[--] {\em Gate Evaluation}: For each gate $\gate$ in $\ckt$ in the topological order, $P_1, P_2$ execute the online steps of the instance $\PiMultS(\Wxyzj)$  if $\gate$ is the  $j$th multiplication gate where $j \in \{1, \ldots, \MS \}$  or respectively the online steps of the instance $\PiAdd(\Wxyzj)$ if $\gate$ is  the $j$th  addition gate  where $j \in \{1, \ldots, \PS \}$. 
		%----     
		\item[--] {\em Output Reconstruction}:  Let $\shr{y_1 }, \ldots, \shr{y_{\OS}}$ be the shared function outputs, where for $j = 1, \ldots, \OS$, we have $\shr{y_j}_{P_0} = \allowbreak \sqr{\Pad{y_j}}, \shr{y_j}_{P_1} = (\Mask{y_j}, \sqr{\Pad{y_j}}_{P_1})$ and $\shr{y_j}_{P_2} = (\Mask{y_j}, \sqr{\Pad{y_j}}_{P_2})$. The parties in $\Partyset$ reconstruct $y_j$ by executing the instance $\PiRecS(\shr{y_j}, \Partyset)$.
		%---
	\end{myitemize}
\end{myprotocolbox}
%------------------

\paragraph*{Correctness and Security} We prove correctness and argue security informally below. 
%--------------
\begin{theorem}[Correctness]
	Protocol $\PiSemi$ is correct.
\end{theorem}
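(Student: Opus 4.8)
The plan is to prove correctness by induction on the gates of $\ckt$ processed in topological order, maintaining the invariant that after a wire $\Wx$ carrying the true value $\wx$ (the value an honest evaluation of $f$ assigns to that wire) has been processed, the parties hold a well-formed $\shrd$-sharing $\shr{\wx}$: there exist $\PadA{\wx}, \PadB{\wx}, \Mask{\wx}$ with $\wx = \Mask{\wx} - (\PadA{\wx} + \PadB{\wx})$, and $P_0, P_1, P_2$ hold $(\PadA{\wx}, \PadB{\wx})$, $(\Mask{\wx}, \PadA{\wx})$, $(\Mask{\wx}, \PadB{\wx})$ respectively. Correctness of $\PiSemi$ then follows once we show the invariant is established on input wires, is preserved by addition and multiplication gates, and is correctly opened at the output stage.

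For the base case, I would observe that in $\PiShS(P_i, \wx)$ the evaluators learn $\Mask{\wx} = \wx + \Pad{\wx}$ while $P_0$ (and $P_i$) knows $\Pad{\wx} = \PadA{\wx} + \PadB{\wx}$; inspection of the offline sampling shows the three parties end up with exactly the share components required above, and $\wx = \Mask{\wx} - \Pad{\wx}$. The addition gate is then immediate from the linearity of $\shrd$-sharing recorded in Section~\ref{sec:Semantics}: $\PiAdd$ sets $\shr{\wz} = \shr{\wx} + \shr{\wy}$ componentwise, so $\wz = (\Mask{\wx}+\Mask{\wy}) - (\Pad{\wx}+\Pad{\wy}) = \wx + \wy$, and well-formedness is inherited.

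The crux is the multiplication gate. Using the invariant for the inputs, $\wx\wy = (\Mask{\wx} - \Pad{\wx})(\Mask{\wy} - \Pad{\wy}) = \Mask{\wx}\Mask{\wy} - \Mask{\wx}\Pad{\wy} - \Mask{\wy}\Pad{\wx} + \Pad{\wx}\Pad{\wy}$. I would then add the two locally computed shares from $\PiMultS$: setting $i=1$ and $i=2$ in $\sqr{\Mask{\wz}}_{P_i} = (i-1)\Mask{\wx}\Mask{\wy} - \Mask{\wx}\sqr{\Pad{\wy}}_{P_i} - \Mask{\wy}\sqr{\Pad{\wx}}_{P_i} + \sqr{\Pad{\wz}}_{P_i} + \sqr{\Gammaxy}_{P_i}$, the coefficients $(i-1)$ contribute one copy of $\Mask{\wx}\Mask{\wy}$; the $\sqr{\Pad{\wy}}$, $\sqr{\Pad{\wx}}$, $\sqr{\Pad{\wz}}$ components sum over $P_1,P_2$ to $\Pad{\wy}, \Pad{\wx}, \Pad{\wz}$; and $\sqr{\Gammaxy}_{P_1} + \sqr{\Gammaxy}_{P_2} = \GammaxyA + \GammaxyB = \Gammaxy = \Pad{\wx}\Pad{\wy}$ by the way $P_0$ shares $\Gammaxy$ in the offline step. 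Hence the reconstructed $\Mask{\wz} = \sqr{\Mask{\wz}}_{P_1} + \sqr{\Mask{\wz}}_{P_2} = \wx\wy + \Pad{\wz}$, so $\wz = \Mask{\wz} - \Pad{\wz} = \wx\wy$, and $\shr{\wz}$ is well-formed because $P_0$ sampled both $\PadA{\wz}, \PadB{\wz}$ while $P_1, P_2$ hold $\PadA{\wz}, \PadB{\wz}$ respectively together with the recovered $\Mask{\wz}$. I expect this algebraic identity to be the only step requiring care — specifically, keeping $P_1$'s $\sqd$-component indexed by $1$ and $P_2$'s by $2$ in every term so the sum telescopes exactly; everything else is an unrolling of definitions.

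Finally, for output reconstruction I would note that for each shared output $\shr{y_j}$ the one share component missing from $P_i$ is held by both other parties, so after $\PiRecS$ every party forms $\Mask{y_j} - \PadA{y_j} - \PadB{y_j}$, which by the invariant equals the true value carried by that output wire, i.e.\ the corresponding coordinate of $f$ on the given inputs. Combining the base case, the two inductive steps, and this observation yields correctness of $\PiSemi$.
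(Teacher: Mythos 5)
Your proposal is correct and follows essentially the same route as the paper: maintain the invariant that every wire value is $\shrd$-shared (input wires via $\PiShS$, addition via linearity, multiplication via the identity $\Mask{\wz} = \Mask{\wx}\Mask{\wy} - \Mask{\wx}\Pad{\wy} - \Mask{\wy}\Pad{\wx} + \Pad{\wz} + \Gammaxy = \wx\wy + \Pad{\wz}$), then conclude by correct reconstruction of the output sharings. Your explicit summation of the two $\sqd$-shares of $\Mask{\wz}$ is just a more detailed rendering of the paper's appeal to linearity of $\sqd$-sharing.
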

%-----
\begin{proof}
	We claim that for every wire  in $\ckt$, the parties hold a $\shrd$-sharing of the wire value in $\PiSemi$. The correctness then follows from the fact that for the circuit-output wires, the corresponding $\shrd$-sharing is reconstructed correctly. The claim for circuit-input wires follows from $\PiShS$, while for addition gates it follows from the linearity of $\shrd$-sharing. Consider a multiplication gate $(\Wxyz)$, evaluated as per $\PiMultS$, where $\Mask{\wx} = \wx + \Pad{\wx} $, $\Mask{\wy} = \wy + \Pad{\wy}$ and $\Gammaxy = \Pad{\wx} \Pad{\wy}$. We argue that $\Mask{\wz}$ as computed in online step of $\PiMultS$ results in  $\wx \wy + \Pad{\wz}$ and hence  at the end of $\PiMultS$, the parties hold $\shr{\wz}$. This is because $\Mask{\wz} = \Mask{\wx}  \Mask{\wy} -  \Mask{\wx} \Pad{\wy} - \Mask{\wy} \Pad{\wx} + \Pad{\wz} + \Gammaxy = (\Mask{\wx} - \Pad{\wx})(\Mask{\wy} - \Pad{\wy}) + \Pad{\wz} = \wx \wy + \Pad{\wz}$. The linearity of $\sqd$-sharing implies that $P_1$ and $P_2$ correctly compute a $\sqd$-sharing of $\Mask{\wz}$.
\end{proof}
%-----

The security is  argued as follows. If $P_0$ is corrupt, then the security follows since $P_0$ never sees the masked values over the intermediate wires. If one of the evaluators is corrupt, then the security holds since the corrupt evaluator knows only one of the shares of the mask while the other share is picked at random. The detailed security proof appear in Appendix~\ref{app:3PCSemi} where we show our protocol emulates  the functionality $\FTHREEMPC$ for computing a $3$-party function $f$ in the semi-honest setting as given in \boxref{fig:FTHREEMPC}. 
%---------
\begin{mysystembox}{}{Functionality $\FTHREEMPC$}{fig:FTHREEMPC}
	\justify $\FTHREEMPC$ interacts with the parties in $\Partyset$ and the adversary $\Sim$ and is parameterized by a $3$-ary function $f$, represented by a publicly known arithmetic circuit $\ckt$ over $\Z{\ell}$.
	%---
	\justify Upon receiving the input $\wx_1, \ldots, \wx_{\IS}$ from the respective parties in $\Partyset$, where  each $\wx_i \in \Z{\ell}$, the functionality computes $(\wy_1, \ldots, \wy_{\OS}) = f(\wx_1, \newline \ldots, \wx_{\IS})$  and sends $\wy_1, \ldots, \wy_{\OS}$ to the parties in $\Partyset$.
\end{mysystembox}
%---------
%---------
\begin{theorem}
	\label{thm:PiSemiCost}
	%-----
	$\PiSemi$ requires one round with communication of $\MS$ ring elements during the offline phase. In the online phase, $\PiSemi$ requires one round with communication of at most $2\IS$ ring elements in the Input-sharing stage, $\DF$ rounds with communication of $2\MS$  ring elements for circuit-evaluation stage and one round with communication of $3\OS$  elements for the output-reconstruction stage.
	%-------
\end{theorem}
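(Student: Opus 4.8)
This is a pure cost accounting, so the plan is to walk through the four sub-protocols that $\PiSemi$ invokes---$\PiShS$, $\PiAdd$, $\PiMultS$, and $\PiRecS$---record the round and communication cost of a single instance of each directly from the protocol boxes, and then aggregate over the whole circuit while arguing which instances run in parallel. First I would note the per-instance costs: an offline $\PiShS$ and an offline $\PiAdd$ use only non-interactive shared-key sampling and local additions, hence cost $0$; an offline $\PiMultS$ has $P_0$ send the single element $\GammaxyB$ to $P_2$, hence $1$ element; an online $\PiAdd$ is purely local; an online $\PiMultS$ has $P_1$ and $P_2$ exchange their $\sqd$-shares of $\Mask{\wz}$, i.e.\ one element each, $2$ in total; an online $\PiShS(P_i,\wx)$ has $P_i$ send $\Mask{\wx}$ to the evaluators that do not already hold it---both of them when $P_i = P_0$, and the single evaluator $\ne P_i$ when $P_i \in \{P_1,P_2\}$ (since that evaluator can itself recompute $\Mask{\wx}$ from its pad share)---hence \emph{at most} $2$ elements; and $\PiRecS$ on one output wire sends one missing share to each of the three parties ($\Mask{\wy}$ to $P_0$, $\PadB{\wy}$ to $P_1$, $\PadA{\wy}$ to $P_2$), hence $3$ elements.

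Summing over $\ckt$ then yields the claimed totals: $\MS$ elements offline (one per multiplication gate), at most $2\IS$ elements for input sharing, $2\MS$ for circuit evaluation, and $3\OS$ for output reconstruction. The more delicate part is the round count. For the offline phase I would argue it collapses to a single round: every pad value $\Pad{\cdot}$ on every wire is fixed without interaction---pads of input and multiplication-output wires are sampled via the shared-key setup, and the pad of an addition-output wire is a local sum of the input pads---so $P_0$ can compute all $\MS$ products $\Gammaxy = \Pad{\wx}\Pad{\wy}$ up front and send all the summands $\GammaxyB$ to $P_2$ in one simultaneous message. For the online phase, input sharing is one round because all designated parties send their $\Mask{\wx}$ concurrently; output reconstruction is one round because the three missing-share messages for all $\OS$ output wires are mutually independent; and circuit evaluation takes exactly $\DF$ rounds by processing $\ckt$ layer by layer in topological order---within one multiplicative layer every multiplication gate has both masked inputs already available from the previous layer (addition gates being resolved locally and for free), so the whole layer is evaluated in parallel, contributing one round per layer.

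I expect the main obstacle to be nothing deep but rather making the round-parallelism arguments airtight: specifically, justifying that the offline phase does not incur one round per multiplicative layer (which hinges on the observation that $\Gammaxy$ depends only on pad values, and pads are generated non-interactively and propagate through addition gates purely locally), and pinning down the $\DF$-round bound for circuit evaluation via the standard topological-layering argument for a circuit of multiplicative depth $\DF$. The only other point needing care is the ``at most'' qualifier in the input-sharing bound, handled by the case distinction on whether the input owner is $P_0$ or an evaluator. With these in place the theorem follows by straightforward addition of the per-stage counts.
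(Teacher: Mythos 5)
Your proposal is correct and follows essentially the same route as the paper's own proof: a per-instance cost tally of $\PiShS$, $\PiAdd$, $\PiMultS$, and $\PiRecS$, aggregated over the circuit, with the offline phase collapsing to one round because all pads are fixed non-interactively and the online evaluation taking $\DF$ rounds by layer-wise parallelism (your treatment of these round-parallelism points is in fact slightly more explicit than the paper's). The only nitpick is the parenthetical in the input-sharing case: when $P_i\in\{P_1,P_2\}$ the saving comes from $P_i$ itself knowing the \emph{entire} pad $\Pad{\wx}$ (both shares are available to it from the offline sampling) and hence $\Mask{\wx}$, not merely ``its pad share''---but this does not affect the count of at most $2\IS$ elements.
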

%-----------
\begin{proof}
	%-----
	During the offline phase, the $\sqd$-shares of every $\pad$ are generated non-interactively. For the multiplication gates, generating $\sqd$-sharing of $\Gammaxy$ values requires one round and communication of $\MS$ elements. During the online phase, generating the $\shrd$-sharing of circuit-inputs requires one round. For each input of $P_0$, generating the $\shrd$-sharing  requires a communication of $2$ elements, while the same for $P_1$/$P_2$ requires one element.  So, the Input-sharing phase needs one round and communication of at most $2\IS$ elements. Evaluating the addition gates is free, while the same for each multiplication gate requires one round and communication of $2$ elements to reconstruct the $\Mask{\wz}$ value. Hence the circuit-evaluation phase needs $\DF$ rounds and communication of $2\MS$ elements. Reconstructing the circuit-outputs require one round and communication of $3\OS$ elements. 
	%-----
\end{proof}

%----------------------------------------------------------------
\subsection{3PC with malicious security}
\label{sec:3pcmal}
%----------------------------------------------------------------
In this section, we describe our maliciously secure 3PC protocol $\PiMal$ that securely evaluates any arithmetic circuit over $\Z{\ell}$. Similar to $\PiSemi$, protocol $\PiMal$ has three stages-- input-sharing, circuit-evaluation and output-reconstruction. 

%----------------------------------------------------------------
\paragraph{Input Sharing and Output Reconstruction Stages}
We begin with the sharing and reconstruction protocols in the malicious setting, which can  readily replace $\PiShS$ and $\PiRecS$ in  $\PiMal$ to help obtain maliciously-secure  input sharing and output reconstruction stage.

In the malicious setting, we need to ensure that the shares possessed by the honest parties are {\em consistent}. By consistent shares, we mean that the common share possessed by the honest parties should be the same. In protocol $\PiShS$, the $\pad$-shares will be consistent since they are generated non-interactively. But, if a corrupt $P_0$ owns a value $\wx$ and wants to create an inconsistent $\shr{\wx}$-sharing, he can send two different versions of $\Mask{\wx}$ to $P_1$ and $P_2$. To detect this inconsistency, $P_1, P_2$ exchange $\Hash(\Mask{\wx})$ and $\abort$ if there is a mismatch. The parties can exchange a combined hash for all the wires where $P_0$ is the owner and thus the cost reduces to two hash values in the amortized sense. We call the resultant protocol as $\PiShM$.

For reconstruction, let $\shr{\wy}$ be a sharing to be reconstructed where $\shr{\wy}_{P_0} =  (\PadA{\wy}, \PadB{\wy})$, $\shr{\wy}_{P_1} = (\Mask{\wy}', \PadA{\wy}')$ and $\shr{\wy}_{P_2} = (\Mask{\wy}'', \PadB{\wy}')$ (the distinction in the notation is done to differentiate the shares held by each party). Protocol $\PiRecM(\shr{\wy}, \Partyset)$ (\boxref{fig:PiRecM}) enables each {\it honest} party in $\Partyset$ to either compute $\wy$ or output $\bot$. 
%-----
\begin{myprotocolbox}{}{Protocol $\PiRecM(\shr{\wy}, \Partyset)$}{fig:PiRecM}   
	
	\algoHead{Online:} 
	%------
	\begin{myitemize}  
		%----
		\item[--] $P_0$ and $P_2$ send $\PadB{\wy}$ and $\Hash(\PadB{\wy}')$ respectively to $P_1$. 
		%---    
		\item[--] $P_0$ and $P_1$ send $\PadA{\wy}$ and $\Hash(\PadA{\wy}')$ respectively to $P_2$. 
		%---
		\item[--] $P_1$ and $P_2$ send $\Mask{\wy}'$ and $\Hash(\Mask{\wy}'')$ respectively to $P_0$. 
		%---
	\end{myitemize}
	%----  
	$P_i$ for $i \in \PInSet$ $\abort$ if the received values mismatch. Else $P_i$ sets $\wy = \Mask{\wy} - \PadA{\wy} - \PadB{\wy}$.	
\end{myprotocolbox}
%-----
Now the input sharing and output reconstruction stages in $\PiMal$ are similar to those in $\PiSemi$ apart from protocols $\PiShS$ and $\PiRecS$ being replaced with $\PiShM$ and $\PiRecM$ respectively.  

%----------------------------------------------------------------
\paragraph{Circuit Evaluation Stage}
%---------------------------------------------------------------- 
Protocol $\PiAdd$ remains secure in the malicious setting as well since it involves local operations only. The challenge lies in turning  the multiplication protocol $\PiMultS$  to  one that tolerates malicious behaviour.  We start with the observation  that $\PiMultS$ suffers in two {\em mutually-exclusive}  ways in the face of one malicious corruption, each under different corruption scenario. When $P_0$ is corrupt, the only possible  violation in  $\PiMultS$ comes in the form of  sharing $\Gammaxy \neq \Pad{\wx} \Pad{\wy}$ during the offline phase.  When $P_1$ (or $P_2$) is corrupt, the violation occurs when  a wrong share of $\Mask{\wz}$ is handed over  to the fellow honest evaluator during the online phase, causing  reconstruction of a wrong $\Mask{\wz}$. 
While the attacks are quite distinct in nature following the asymmetric roles played by the two sets  $\{P_0\}$ and $\{P_1,P_2\}$ in  $\PiMultS$, our novel construction solves both issues at the same time via checking product-relation of a single $\shrd$-shared  triple. We start with the technique to tackle a corrupt evaluator ($P_1$ or $P_2$) during the online phase. To identify if an incorrect $\Mask{\wz}$ is reconstructed by an honest evaluator, say $P_1$, he can seek the help of $P_0$ as follows: $P_1$ can send $\Mask{\wx}, \Mask{\wy}$ to $P_0$, who can then compute $\Mask{\wz}$, as $P_0$ already has knowledge of $\Pad{\wx}, \Pad{\wy}$ and $\Pad{\wz}$ from the offline phase and send back to $P_1$. Note that sending $\Mask{\wx}, \Mask{\wy}$ in clear to $P_0$ breaks privacy of the scheme and hence $P_1$ sends padded version of the same to $P_0$, namely $\Mstar{\wx} = \Mask{\wx} + \MPad{\wx}$ and $\Mstar{\wy} = \Mask{\wy} + \MPad{\wy}$. $P_0$ then computes $\Mstar{\wz} = -  \Mstar{\wx} \Pad{\wy} - \Mstar{\wy} \Pad{\wx} + \Pad{\wz} + 2\Gammaxy$. Note that,
%----
\begin{align*}
	\Mbar{\wz} &= - \Mstar{\wx} \Pad{\wy} - \Mstar{\wy} \Pad{\wx} + \Pad{\wz} + 2\Gammaxy\\
	&= -  (\Mask{\wx} + \MPad{\wx}) \Pad{\wy} - (\Mask{\wy} + \MPad{\wy}) \Pad{\wx} + \Pad{\wz} + 2\Gammaxy\\
	&= (\Mask{\wz} - \Mask{\wx} \Mask{\wy}) - (\MPad{\wx} \Pad{\wy} + \MPad{\wy} \Pad{\wx} - \Gammaxy)\\
	&= (\Mask{\wz} - \Mask{\wx} \Mask{\wy}) - \chi
\end{align*}
%----
Assuming that $P_0$ knows $\chi = \MPad{\wx} \Pad{\wy} + \MPad{\wy} \Pad{\wx} - \Gammaxy$, he can then compute $\Mbar{\wz} + \chi$ and send it back to $P_1$. Given the knowledge of $\Mask{\wx}, \Mask{\wy}$, $P_1$ can verify the correctness of $\Mask{\wz}$. The case for a honest $P_2$ follows similarly. Now we describe how to enable $P_0$ obtain $\chi = \MPad{\wx} \Pad{\wy} + \MPad{\wy} \Pad{\wx} - \Gammaxy$. First of all, note that revealing $\chi$ in clear to $P_0$ leads to breach of privacy. Because, $P_0$ knows $\Pad{\wx} , \Pad{\wy}, \Gammaxy$ from the offline phase and he receives $\Mask{\wx} + \MPad{\wx} , \Mask{\wy} + \MPad{\wy}$ during the online phase. With this information, $P_0$ can deduce a relation between $\Mask{\wx}$ and $\Mask{\wy}$. Hence, we tweak the value of $\chi$ to $\MPad{\wx} \Pad{\wy} + \MPad{\wy} \Pad{\wx} + \MPad{\wz} - \Gammaxy$ incorporating a random mask  $\MPad{\wz}$. To generate $\chi$, in the offline phase, parties $P_1, P_2$ locally sample random elements $\MPad{\wx},\MPad{\wy}, \MPad{\wz} \in \Z{\ell}$, compute a $\sqd$-sharing of $\chi$ and sends the shares to $P_0$. Let $\sqr{\chi}_{P_i} = \chi_i$ for $i \in \EInSet$. $P_0$ locally adds the $\sqd$-shares and obtains $\chi$. In the above step, a corrupt evaluator can introduce an error while computing the $\sqd$-share of $\chi$, affecting the correctness of the protocol. Thus, it is crucial to ensure the correctness of $\chi$ computed by $P_0$. 

To summarize, we now have two issues to tackle in the offline phase--  (i) as we pointed out earlier, during the offline phase, a corrupt $P_0$ can incorrectly share $\Gammaxy$; (ii) a corrupt evaluator can send a wrong $\sqd$-share of $\chi$ to $P_0$. Towards tackling these, once $P_0$ obtains the value $\chi$, parties locally compute $\shrd$-shares of values $\wma = \MPad{\wx} - \Pad{\wx}, \wmb = \MPad{\wy} - \Pad{\wy}$ and $\wmc = (\MPad{\wz} + \MPad{\wx} \MPad{\wy}) - \chi$ as follows: 
%-----
\vspace{-2mm}
\begin{center}
	\resizebox{0.49\textwidth}{!}
	{
		\begin{tabu} to 1\textwidth {  l  l  l }
			%-----
			$\shr{\wma}_{P_0} = (\PadA{\wx}, \PadB{\wx})$,
			& $\shr{\wmb}_{P_0} = (\PadA{\wy}, \PadB{\wy})$, 
			& $\shr{\wmc}_{P_0} = (\chi_1, \chi_2)$\\ 
			%-----
			$\shr{\wma}_{P_1} = (\MPad{\wx}, \PadA{\wx})$, 
			& $\shr{\wmb}_{P_1} = (\MPad{\wy}, \PadA{\wy})$,
			& $\shr{\wmc}_{P_1} = (\MPad{\wz} + \MPad{\wx} \MPad{\wy}, \chi_1)$\\
			%-----
			$\shr{\wma}_{P_2} = (\MPad{\wx}, \PadB{\wx})$
			& $\shr{\wmb}_{P_2} = (\MPad{\wy}, \PadB{\wy})$
			& $\shr{\wmc}_{P_2} = (\MPad{\wz} + \MPad{\wx} \MPad{\wy}, \chi_2)$\\
			%-----
		\end{tabu}
	}
\end{center}
%-----
\vspace{2mm}
Now   $(\shr{\wma}, \shr{\wmb}, \shr{\wmc})$ is a multiplication triple  ($\wmc = \wma \wmb$) if and only if  $P_0$ shares $\Gammaxy$ correctly (when it is corrupt) and $P_0$ reconstructs $\chi$ correctly (when one of the evaluators is corrupt). This is because, 
%----
\begin{align*}
	\wma \wmb &= (\MPad{\wx} - \Pad{\wx})(\MPad{\wy} - \Pad{\wy}) = \MPad{\wx} \MPad{\wy} + \Pad{\wx} \Pad{\wy} - \MPad{\wx} \Pad{\wy} - \MPad{\wy} \Pad{\wx}\\
	&= (\MPad{\wx} \MPad{\wy} + \MPad{\wz}) - (\MPad{\wx} \Pad{\wy} + \MPad{\wy} \Pad{\wx} + \MPad{\wz} - \Gammaxy)\\
	&= (\MPad{\wx} \MPad{\wy} + \MPad{\wz}) - \chi = \wmc
\end{align*}

%----
We first recall the two standard components needed to check the validity of a multiplication triple-- i) a tool for generating $\shrd$-shared random multiplication triple and ii) a technique to check  securely  the product relation of a $\shrd$-shared triple, given a valid $\shrd$-shared multiplication triple (often referred to as sacrificing technique). With a lot of constructions specifically available for the former one \cite{FLNW17,ABFLLNOWW17},  we choose to model it as an ideal functionality $\FTRIPLES$ and use it for our purpose without going into the details. For the latter component, we quickly recall the known protocol.

%----- 
\begin{myprotocolbox}{}{Protocol $\PiTripCheck$ to check product-relation of a triple}{fig:PiCheck}
	%----
	\justify
	\begin{myitemize}
		%--
		\item[--] Parties locally compute $\shr{\rho} = \shr{\wma} - \shr{\wmd}$ and $\shr{\sigma} = \shr{\wmb} - \shr{\wme}$.
		%---
		\item[--] Parties reconstruct $\rho$ and $\sigma$ by executing $\PiRecM(\shr{\rho}, \Partyset)$ and \newline $\PiRecM(\shr{\sigma}, \Partyset)$ respectively. 
		%---
		\item[--] Parties locally compute
		$\shr{\tau} = \shr{\wmc} - \shr{\wmf} - \sigma \shr{\wmd} - \rho \shr{\wme} - \sigma \rho$.
		%---
		\item[--] Parties reconstruct $\tau$ by executing $\PiRecM(\shr{\tau}, \Partyset)$ and output $\bot$, if $\tau \neq 0$.
		%---
	\end{myitemize}
	%----
\end{myprotocolbox}
%----   

$\FTRIPLES$, by now  a standard functionality \cite{FLNW17,ABFLLNOWW17},  allows to generate a set of $\shrd$-sharing of multiplication triples over $\Partyset$,  each of which, say $(\wmd, \wme, \wmf)$ satisfies the following-- i)  $\wmd, \wme$ and $\wmf$ are random and private and ii) $\wmf = \wmd \wme$. In Appendix~\ref{app:FTRIPLES}, we present an instantiation of this functionality, namely $\PiTripGen$ (\boxref{fig:TripGenII}), using the techniques proposed by \cite{FLNW17,ABFLLNOWW17}.

Protocol $\PiTripCheck$ \cite{CP17,FLNW17} (`prc' stands for product-relation check) takes a pair of $\shrd$-shared random and private triples as input, say $(\wma, \wmb, \wmc)$ and $(\wmd, \wme, \wmf)$, over $\Z{\ell}$, verifies if the former is a multiplication triple or not and nothing beyond, given the latter is a valid triple. The protocol appears in \boxref{fig:PiCheck} and its properties in Appendix~\ref{app:PiTripCheck}. 

By exploiting the definition of $\shrd$-sharing,  we reduce  the cost of $\PiTripCheck$ to just $2$, instead of $3$, instances of $\PiRecM$, in an amortized sense. Recall that the goal of the third invocation of $\PiRecM$ inside $\PiTripCheck$ is to reconstruct $\shr{\tau} = (\Mask{\tau}, \sqr{\Pad{\tau}})$, followed by checking if $\tau = 0$. It follows that $\tau = 0$ if and only if $\Mask{\tau} - \Pad{\tau} = 0$   implying $\Mask{\tau} = \Pad{\tau}$. Hence checking $\tau = 0$ is equivalent to checking if $\Mask{\tau} = \PadA{\tau} + \PadB{\tau}$, which can be translated to three  pair-wise checks -- (i) $P_0$ and $P_1$ can verify if $\Mask{\tau} - \PadA{\tau} \iseq \PadB{\tau}$; (ii) $P_1$ and $P_2$ can verify if $\Mask{\tau} - \PadB{\tau} \iseq \PadA{\tau}$; (iii) $P_0$ and $P_2$ can verify if $\Mask{\tau} - \PadB{\tau} \iseq \PadA{\tau}$. Parties in $\Partyset$ can mutually perform the above checks for all the instances of $\PiTripCheck$ together at the end by exchanging hash of all the required values.

%----
\begin{myprotocolbox}{}{Protocol $\PiMultM(\Wxyz)$:
	}{fig:PiMultM}
	
	%----          
	\justify
	\algoHead{Offline}:
	%-------
	\begin{myitemize}
		%----
		\item[--] Parties $P_0, P_1$ locally sample random $\PadA{\wz}, \GammaxyA \in \Z{\ell}$, while $P_0, P_2$ locally sample a random $\PadB{\wz}$. $P_0$ locally computes $\Gammaxy = \Pad{\wx} \Pad{\wy}$ and sends $\GammaxyB = \Gammaxy - \GammaxyA$ to $P_2$.
		%----
		\item[--] Parties execute $\PiTripGen$ to generate triple $(\shr{\wmd}, \shr{\wme}, \shr{\wmf})$.
		%----
		\item[--] Parties $P_1, P_2$ locally sample random $\MPad{\wx}, \MPad{\wy}, \MPad{\wz} \in \Z{\ell}$ and compute $\sqr{\MPad{\wz}}$ non-interactively.    
		%----
		\item[--] $P_i$ for   $i \in \EInSet$ computes $\sqr{\chi}_{P_i} = \MPad{\wx} \sqr{\Pad{\wy}}_{P_i} +  \MPad{\wy} \sqr{\Pad{\wx}}_{P_i} + \allowbreak \sqr{\MPad{\wz}}_{P_i} - \sqr{\Gammaxy}_{P_i}$ and sends $\sqr{\chi}_{P_i}$ to $P_0$, who computes $\chi$. 
		%----
		\item[--] Parties locally compute the $\shrd$-shares of the values $\wma = \MPad{\wx} - \allowbreak \Pad{\wx}, \wmb = \MPad{\wy} - \Pad{\wy}$ and $\wmc = (\MPad{\wz} + \MPad{\wx} \MPad{\wy}) - \chi$.
		%----
		\item[--] Parties execute $\PiTripCheck$ on $(\shr{\wma}, \shr{\wmb}, \shr{\wmc})$ and $(\shr{\wmd}, \shr{\wme}, \shr{\wmf})$.
		%----
	\end{myitemize}
	%------  
	
	\justify          
	\algoHead{Online}:  
	%------
	\begin{myitemize}  
		%----
		\item[--] $P_i$ for $i \in \EInSet$ locally computes $\sqr{\Mask{\wz}}_{P_i} = (i-1)\Mask{\wx} \Mask{\wy} - \allowbreak \Mask{\wx} \sqr{\Pad{\wy}}_{P_i} - \Mask{\wy} \sqr{\Pad{\wx}}_{P_i} + \sqr{\Pad{\wz}}_{P_i} +  \sqr{\Gammaxy}_{P_i}$. $\ESet$ mutually exchange their shares and reconstruct $\Mask{\wz}$.
		%----
		\item[--] $P_1$ sends $\Mstar{\wx} = \Mask{\wx} + \MPad{\wx}, \Mstar{\wy} = \Mask{\wy} + \MPad{\wy}$ to $P_0$, while $P_2$ sends $\Hash(\Mstar{\wx} || \Mstar{\wy})$ to $P_0$.  $P_0$ outputs $\bot$, if the received values are inconsistent.
		%---
		\item[--] $P_0$ computes $\Mstar{\wz} = -  \Mstar{\wx} \Pad{\wy} - \Mstar{\wy} \Pad{\wx} + \Pad{\wz} + 2\Gammaxy + \chi$ and sends $\Hash(\Mstar{\wz})$ to both $P_1$ and $P_2$.
		%---
		\item [--] $P_i$ for $i \in \EInSet$ $\abort$ if $\Hash(\Mstar{\wz}) \ne \Hash(\Mask{\wz} - \Mask{\wx} \Mask{\wy} + \MPad{\wz})$.
	\end{myitemize}
	%----  
\end{myprotocolbox}
%------------------

With the building blocks set, we present our maliciously-secure multiplication protocol $\PiMultM$ in \boxref{fig:PiMultM}. Note that the use of hash function improves the amortized cost in the online phase of $\PiMultM$-- (i) $P_2$ can send a single hash of all the $\Mstar{\wx}$ and $\Mstar{\wy}$ values for all the instances of $\PiMultM$ to $P_0$ in the end of the circuit-evaluation;  (ii)  $P_0$ can send a single hash of all the $\Mstar{\wz}$ values for all the instances of $\PiMultM$ to the evaluators at the end of the circuit-evaluation. The former step can be coupled with the communication of $(\Mstar{\wx}, \Mstar{\wy})$ by $P_1$ to $P_0$. Party $P_1$ sending to $P_0$ attributes to the increase of the communication cost per multiplication gate in the malicious setting, compared to the semi-honest setting. On the positive note, coupling the above communication for all the multiplication gates together results in a couple of rounds overhead compared to the semi-honest protocol.  As a consequence, the latency of the malicious protocol remains as good as the semi-honest protocol. 

The correctness of the protocol $\PiMultM$ is stated in Lemma \ref{lemma:PiMultM}.
%----
\begin{lemma}[Correctness]
	\label{lemma:PiMultM}
	%----
	In the protocol $\PiMultM$, the following holds:
	During the offline phase, if $P_0$ is corrupt and $\sqd$-shares $\Gammaxy \neq \Pad{\wx} \Pad{\wy}$, then the honest evaluators output $\bot$. On the other hand, if one of the evaluators is corrupt and enforces the honest $P_0$ to obtain an incorrect $\chi$, then the honest parties output $\bot$.  During the online step, if one of the evaluators is corrupt and enforces the honest evaluator to obtain an incorrect $\Mask{\wz}$, then the honest evaluator outputs $\bot$         
\end{lemma}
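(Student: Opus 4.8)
The plan is to treat the three conditionals separately, but all of them funnel into two facts already developed around $\PiMultM$: for the two offline-phase deviations, the shared triple $(\shr{\wma},\shr{\wmb},\shr{\wmc})$ fails the product relation, and $\PiTripCheck$ — run against a genuine random triple $(\shr{\wmd},\shr{\wme},\shr{\wmf})$ supplied by $\FTRIPLES$ — catches this; for the online-phase deviation, the value $\Mstar{\wz}$ that an honest $P_0$ returns is pinned down algebraically, so a mismatching $\Mask{\wz}$ at an honest evaluator makes the final hash comparison fail. Throughout I would use that every embedded call to $\PiRecM$ is robust-with-abort (it reconstructs the correct value, or some honest party outputs $\bot$), which follows from its pairwise hash cross-checks and collision resistance of $\Hash$; hence the only slack a corrupt party has is the additive error it injects into a single value it is supposed to send.

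For the offline cases, write $e_\Gamma = \Gammaxy - \Pad{\wx}\Pad{\wy}$ for the error a corrupt $P_0$ may put into the shared $\Gammaxy$, and $e_\chi$ for the additive error a corrupt evaluator may put into the $\sqd$-share of $\chi$ it sends to $P_0$, so that $P_0$ obtains $\chi + e_\chi$. Repeating the computation of $\wma\wmb$ from the text with these errors present gives $\wmc - \wma\wmb = e_\Gamma - e_\chi$. Since at most one party is corrupt, at most one of $e_\Gamma, e_\chi$ is non-zero, so each of the two hypotheses ($P_0$ shares $\Gammaxy \ne \Pad{\wx}\Pad{\wy}$; an evaluator forces an incorrect $\chi$) forces $\wmc \ne \wma\wmb$. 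I would then invoke the property of $\PiTripCheck$ recalled in Appendix~\ref{app:PiTripCheck}: given a genuine triple from $\FTRIPLES$, the quantity $\tau$ reconstructed in its last step equals exactly $\wmc - \wma\wmb$, so $\PiTripCheck$ outputs $\bot$ on any non-triple, except with probability negligible in $\csec$ coming from a hash collision inside the calls to $\PiRecM$. This yields the first two claims.

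For the online case $P_0$ is honest. The cross-check where $P_2$ sends $\Hash(\Mstar{\wx}\|\Mstar{\wy})$ guarantees that $P_0$ either uses the correct $\Mstar{\wx},\Mstar{\wy}$ or outputs $\bot$; in the former case the displayed identity for $\Mbar{\wz}$ together with $\Gammaxy = \Pad{\wx}\Pad{\wy}$ gives $\Mstar{\wz} = \Mask{\wz} - \Mask{\wx}\Mask{\wy} + \MPad{\wz} + e_\chi$, where $\Mask{\wz}$ is the correct value $\wx\wy + \Pad{\wz}$ and $e_\chi$ is as above. Suppose a corrupt evaluator forces the honest evaluator to reconstruct $\Mask{\wz}' = \Mask{\wz} + \epsilon$ with $\epsilon \ne 0$. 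The honest evaluator accepts only if $\Hash(\Mstar{\wz}) \iseq \Hash(\Mask{\wz}' - \Mask{\wx}\Mask{\wy} + \MPad{\wz})$, i.e. only if $\Hash(\Mask{\wz} - \Mask{\wx}\Mask{\wy} + \MPad{\wz} + e_\chi) = \Hash(\Mask{\wz} - \Mask{\wx}\Mask{\wy} + \MPad{\wz} + \epsilon)$. If $e_\chi = 0$ the two arguments differ (as $\epsilon \ne 0$), so the comparison fails except on a hash collision and the honest evaluator outputs $\bot$; if $e_\chi \ne 0$, then the first two claims already apply, so the honest parties — in particular the honest evaluator — output $\bot$ already in the offline phase via $\PiTripCheck$. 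Either way the honest evaluator outputs $\bot$, giving the third claim.

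The part I expect to cost the most words — and the only real obstacle — is justifying that every malicious deviation is, for these checks, equivalent to injecting one additive error: a corrupt $P_0$ can misbehave only by sending a wrong $\GammaxyB$ (absorbed into $e_\Gamma$) or wrong messages inside $\PiTripGen$/$\PiRecM$ (absorbed by robustness of those subprotocols); a corrupt evaluator's deviations — a wrong $\sqd$-share of $\chi$, an inconsistent use of the jointly-sampled $\MPad{\cdot}$ values in forming $\shr{\wmc}$, a wrong online $\sqd$-share of $\Mask{\wz}$, or wrong messages inside $\PiRecM$ — either surface as $e_\chi \ne 0$ or $\epsilon \ne 0$, or are caught as a hash mismatch in $\PiRecM$. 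Pinning down this reduction precisely and tracking the negligible hash-collision terms is where the work lies; the algebra is a direct reuse of the identities already displayed for $\PiMultM$.
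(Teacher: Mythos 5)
Your proposal is correct and follows essentially the same route as the paper's own proof: the offline deviations ($\Gammaxy \ne \Pad{\wx}\Pad{\wy}$ or a wrong $\sqd$-share of $\chi$) are shown to make $(\shr{\wma},\shr{\wmb},\shr{\wmc})$ fail the product relation, which $\PiTripCheck$ (Lemma~\ref{lemma:TripCheck}) then flags, and the online deviation is caught because the honest $P_0$'s $\Mstar{\wz}$ disagrees with the value $\Mask{\wz} - \Mask{\wx}\Mask{\wy} + \MPad{\wz}$ recomputed by the honest evaluator, so the hash comparison fails. Your unified additive-error bookkeeping ($e_\Gamma$, $e_\chi$, $\epsilon$) and explicit tracking of hash-collision probabilities are only a more careful packaging of the same case analysis the paper performs.
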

%-----
\begin{proof}
	%----   
	For correctness, first consider the case when $P_0$ is {\it corrupt} and $\sqd$-shares $\Gammaxy \neq \Pad{\wx} \Pad{\wy}$ during offline step. Let $\Gammaxy = \Pad{\wx} \Pad{\wy} + \Delta$ where $\Delta$ is the error introduced by $P_0$. Now, 
	%----
	%{\small
	\begin{align*}
		\wmc &= (\MPad{\wx} \MPad{\wy} + \MPad{\wz}) - (\MPad{\wx} \Pad{\wy} + \MPad{\wy} \Pad{\wx} + \MPad{\wz} - (\Gammaxy - \Delta))\\
		&= (\MPad{\wx} - \Pad{\wx})(\MPad{\wy} - \Pad{\wy}) - \Delta = \wma - \Delta \ne \wma
	\end{align*}
	%}
	%----
	and thus $(\wma, \wmb, \wmc)$ is not a multiplication triple. Then, from Lemma~\ref{lemma:TripCheck}, honest evaluators output $\bot$.
	
	Second, we consider the case when one of the evaluators, say $P_1$, sends $\chi_1 + \Delta$ to $P_0$ who reconstructs $\chi' = \chi + \Delta$. Then, the value 	
	%----
	%{\small
	\begin{align*}
		\wmc &=  (\MPad{\wx} \MPad{\wy} + \MPad{\wz}) - \chi' = (\MPad{\wx} \MPad{\wy} + \MPad{\wz}) - (\chi + \Delta)\\
		&= (\MPad{\wx} \MPad{\wy} + \MPad{\wz}) - (\MPad{\wx} \Pad{\wy} + \MPad{\wy} \Pad{\wx} + \MPad{\wz} - \Gammaxy) - \Delta\\
		&= (\MPad{\wx} - \Pad{\wx})(\MPad{\wy} - \Pad{\wy}) - \Delta = \wma - \Delta \ne \wma
	\end{align*}
	%}
	%----
	and hence $(\wma, \wmb, \wmc)$ is not a multiplication triple. Thus, similar to the previous case, honest parties output $\bot$.
	
	Lastly, we consider the case, when one of the evaluators, say $P_1$, is {\it corrupt} and during online step sends $\sqr{\Mask{\wz}}_{P_1} + \Delta$ for some non-zero $\Delta$ during the reconstruction, so that $P_2$ reconstructs $\Mask{\wz} + \Delta$, instead of $\Mask{\wz}$. In this case, the honest $P_0$ would have $\chi = \MPad{\wx} \Pad{\wy} + \MPad{\wy} \Pad{\wx} + \MPad{\wz} - \Gammaxy$  from offline step. Moreover, during online step, $P_0$ correctly learns $\Mstar{\wx} = \Mask{\wx} + \MPad{\wx}$ and $\Mstar{\wy} = \Mask{\wy} + \MPad{\wy}$. Furthermore, $\Gammaxy = \Pad{\wx}  \Pad{\wy}$ holds. It then follows that $\Mstar{\wz}$ received by $P_2$ from $P_0$ will be different from $\Mask{\wz} + \Delta - \Mask{\wx} \Mask{\wy} + \MPad{\wz}$ locally computed by $P_2$ and hence $P_2$ will output $\bot$.
	%----
\end{proof}
%----
The informal privacy argument of $\PiMultM$ is as follows. We first consider the case when $P_0$ is corrupt, where $\shr{\wx}, \shr{\wy}$ and  $\shr{\wz}$ are defined by the shares of $P_1, P_2$. The privacy for this case follows from the fact that $P_0$ does not learn anything about $\Mask{\wx}, \Mask{\wy}$ and $\Mask{\wz}$, neither during the offline step, nor during the online step.  Clearly, the communication between $P_0$ and $P_1, P_2$ during offline step is independent of $\Mask{\wx}, \Mask{\wy}$ and $\Mask{\wz}$. Moreover, the value $\chi$ reveals nothing about $\MPad{\wx}$ and $\MPad{\wx}$ since it is padded with a random $\MPad{\wz}$. During the online step, $P_0$ learns $\Mstar{\wx}$ and $\Mstar{\wy}$, which reveals nothing about $\Mask{\wx}, \Mask{\wy}$, as  $\MPad{\wx}$ and $\MPad{\wy}$ remains random and private for $P_0$. We next consider the case when one of the evaluators, say $P_1$ is corrupt. The privacy for this case follows from the fact that $\Pad{\wx}, \Pad{\wy}, \Pad{\wz}$ and $\Gammaxy$ remains private from the view point of $P_1$. On the other hand, no additional information is revealed from $\Mstar{\wz}$ during the online step, as adversary will already know that $\Mstar{\wz} = \Mask{\wz} - \Mask{\wx} \Mask{\wy} + \MPad{\wz}$.

We present  a detailed security proof for our 3PC protocol $\PiMal$  in Appendix~\ref{app:3PCMal}, showing that it emulates the functionality $\FTHREEMPCABORT$ as given in \boxref{fig:FTHREEMPCABORT}. 
%---------
\begin{mysystembox}{}{Functionality $\FTHREEMPCABORT$}{fig:FTHREEMPCABORT}
	\justify $\FTHREEMPCABORT$ interacts with the parties in $\Partyset$ and the adversary $\Sim$ and is parameterized by a $3$-ary function $f$, represented by a publicly known arithmetic circuit $\ckt$ over $\Z{\ell}$.
	%---
	\begin{description}
		%---
		\item {\bf Input: } Upon receiving the input $\wx_, \ldots, \wx_{\IS}$ from the respective parties in $\Partyset$, do the following: if $(\INPUT, *)$ message was received from $P_j$ corresponding to $\wx_j$, then ignore. Otherwise record $\wx_j' = \wx_j$ internally. If $\wx_j' \ne \Z{\ell}$, consider $\wx_j' = \abort$.
		%----
		\item {\bf Output to adversary: } If there exists $j \in \{1, \ldots, \IS \}$ such that $\wx_j' = \abort$, send $(\OUTPUT, \bot)$ to all the parties. Else, send $(\OUTPUT, (\wy_1,\ldots,\newline \wy_{\OS}))$ to the adversary $\Sim$, where $(\wy_1,\ldots,\wy_{\OS}) = f(\wx_1',\ldots,\wx_{\IS}')$.
		%--- 
		\item {\bf Output to selected honest parties: } Receive $(\SELECT, \{I\})$ from adversary $\Sim$, where $\{I\}$ denotes a subset of the honest parties. If an honest party belongs to $I$, send $(\OUTPUT, \bot)$, else send $(\OUTPUT, (\wy_1,\ldots,\wy_{\OS}))$.
		%---
	\end{description}
	%----
\end{mysystembox}
%---------

We now prove the communication complexity of protocol $\PiMal$ below.
%--------       
\begin{theorem}
	\label{thm:PiMal}
	Protocol $\PiMal$ has the following complexities.
	\begin{myitemize}
		%------
		\item[]{\bf Input-sharing Stage:} It is non-interactive during the offline phase and requires one round and an amortized communication of at most $2\IS$ ring elements during the online phase.
		%------
		\item[]{\bf Circuit-evaluation Stage:} Assuming $\MS = 2^{20}$ and a statistical security parameter $s = 40$,  in the amortized sense, evaluating each multiplication gate requires $4$ rounds and communication of $21$ ring elements  in the offline phase, while the online phase needs $1$ round with a communication of $4$ ring elements. 
		%------
		\item[]{\bf Output-reconstruction Stage:} It requires one round and an amortized communication of $3\OS$  ring elements.
		%------
	\end{myitemize}
\end{theorem}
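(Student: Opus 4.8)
The plan is to derive the three claimed complexities by a componentwise accounting of the sub-protocols that make up $\PiMal$, in each case separating (i) messages that are single ring elements, (ii) hash values that batch across all gates or wires and hence vanish in the amortized count, and (iii) which steps lie on a sequential critical path versus running in parallel.

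\textbf{Input-sharing and output-reconstruction.} First I would inspect $\PiShM$: its offline part is the offline part of $\PiShS$, consisting only of non-interactive PRF sampling of the $\pad$-shares, so it is non-interactive; its online part has the dealer $P_i$ send $\Mask{\wx}$, which is two ring elements when $P_i = P_0$ (one to each evaluator) and one ring element when $P_i \in \{P_1,P_2\}$, so summing over the $\IS$ input wires gives at most $2\IS$ ring elements in a single round, the only extra traffic being one combined $\Hash(\cdot)$ exchanged between $P_1$ and $P_2$ over all wires dealt by $P_0$, which is $O(\csec)$ bits total and negligible amortized. The output-reconstruction bound comes the same way from $\PiRecM$: for each output wire the three ``missing-share'' messages ($\PadB{\wy}$ from $P_0$ to $P_1$, $\PadA{\wy}$ from $P_0$ to $P_2$, $\Mask{\wy}$ from $P_1$ to $P_0$) are one ring element each, giving $3\OS$ ring elements in one round, with the accompanying hashes again combined over all output wires.

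\textbf{Circuit evaluation.} Addition gates are evaluated by $\PiAdd$, which is purely local, so they contribute nothing, and the bound reduces to tallying $\PiMultM$ (\boxref{fig:PiMultM}). In the offline phase: sending $\GammaxyB$ from $P_0$ to $P_2$ costs $1$ element; the two $\sqd$-shares of $\chi$ sent to $P_0$ cost $2$ elements; the call to $\PiTripCheck$, using the optimization stated before \boxref{fig:PiMultM} whereby the third reconstruction is replaced by pairwise hash checks batched over all instances, costs exactly two invocations of $\PiRecM$ (to open $\rho$ and $\sigma$), i.e.\ $2\cdot 3 = 6$ elements; and the call to $\PiTripGen$ contributes the remaining cost, which I would read off from the instantiation of $\FTRIPLES$ in the Appendix (\boxref{fig:TripGenII}): at $\MS = 2^{20}$ and $\sparam = 40$ the cut-and-choose/bucketing overhead yields an amortized cost of $12$ ring elements per verified triple, so the offline total per multiplication gate is $1 + 2 + 6 + 12 = 21$ ring elements. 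For the round count I would track the data dependencies ($\GammaxyB$ in round $1$; the $\chi$-shares, which require $\sqr{\Gammaxy}$, in round $2$; the openings of $\rho,\sigma$ in round $3$; the $\tau=0$ check in round $4$), with $\PiTripGen$ running in parallel and completing within these $4$ rounds. In the online phase: the evaluators locally form their $\sqd$-shares of $\Mask{\wz}$ and exchange them ($2$ elements, one round, this being the round repeated $\DF$ times along the circuit); $P_1$ then sends $(\Mstar{\wx},\Mstar{\wy})$ to $P_0$ ($2$ elements) while $P_2$'s hash to $P_0$ and $P_0$'s hash $\Hash(\Mstar{\wz})$ to the evaluators are batched over all multiplication gates; coupling these batched messages adds only $O(1)$ rounds to the entire circuit, so per gate the amortized count is $1$ round and $4$ ring elements.

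\textbf{Main obstacle.} Everything except $\PiTripGen$ is an almost mechanical read-off from the protocol boxes; the delicate point is justifying the ``$12$'' (and hence the ``$21$''): one must follow the cut-and-choose/bucketing parameters of the Appendix instantiation and verify that at $\MS = 2^{20}$, $\sparam = 40$ the blow-up factor and the sacrifice/verification overhead combine to exactly that amortized per-triple cost, and also confirm that all the consistency hashes scattered through $\PiShM$, $\PiRecM$, $\PiMultM$, and the two $\PiRecM$ calls inside $\PiTripCheck$ can genuinely be merged into $O(1)$ transmissions so that they legitimately drop out of the amortized ring-element tallies.
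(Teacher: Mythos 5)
Your accounting is correct and is essentially the paper's own proof: the same componentwise tally giving $1+2+6+12=21$ offline elements per multiplication gate (with $B=\sparam/\log_2\MS=2$ yielding the $12$ for $\PiTripGen$), $2+2=4$ online elements with the hashes and the $(\Mstar{\wx},\Mstar{\wy})$ round batched across gates, and the same $2\IS$/$3\OS$ analysis for sharing and reconstruction with combined hashes amortizing away. The only divergence is your offline round schedule: the paper overlaps the $\Gammaxy$-share and the $\chi$-reconstruction with the first two of $\PiTripGen$'s three rounds and places the $\rho,\sigma$ openings of $\PiTripCheck$ in round $4$ (the $\tau$-check being the batched hash comparison), whereas your placement of the $\rho,\sigma$ openings in round $3$ conflicts with their dependence on the triple produced by $\PiTripGen$; shifting them to round $4$ as in the paper leaves your total of $4$ rounds intact.
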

%--------------
\begin{proof}
	The complexity for the Input-sharing Stage follows from  Theorem \ref{thm:PiSemiCost} and the fact that the cost of $\PiShM$ reduces to that of $\PiShS$ in an amortized sense due to the use of the hash function. During the circuit-evaluation stage, the addition gates need no interaction, as usual.  For a multiplication gate, the offline communication include-- (i) sending a share of   $\sqr{\Gammaxy}$ to $P_2$; (ii) the amortized cost of generating one shared triple via $\FTRIPLES$; (iii)  the cost of reconstructing $\chi$ towards $P_0$ and lastly (iv) the cost of one $\PiTripCheck$. The first one requires one round and communication of one element. The second one requires  $3$ rounds and an amortized communication of $9B-6$ ring elements, where $B = \frac{\sparam}{\log_2{\MS}}$, using the techniques of \cite{ABFLLNOWW17} (see Appendix \ref{app:FTRIPLES}), where $s$ is the statistical parameter dictating the performance of underlying cut-and-choose technique. Assuming $\MS = 2^{20}$, $s = 40$, this ensures that generating a single multiplication triple require $3$ rounds and an amortized communication of $12$ ring elements. The third one requires one round and communication of two elements. The fourth and last one requires one round and an amortized communication of $6$ elements as part of the two underlying instances of $\PiRecM$. This sums up to communication of $21$ elements per multiplication gate. 
	
	The total number of rounds for evaluating the multiplication gates during the offline phase turns to be $4$ as follows: $P_0$ can send the share of $\sqr{\Gammaxy}$ to $P_2$ and in parallel, the parties can start generating a shared triple via $\FTRIPLES$; while the former requires one round, the latter requires three rounds. Once the share of $\sqr{\Gammaxy}$ is available with $P_2$, party $P_1$ and $P_2$ can reconstruct $\chi$ towards $P_0$, requiring one round, which overlaps with the second round of the instantiation of $\FTRIPLES$. Once the third round of the instantiation of $\FTRIPLES$ is over, the parties execute the instance of $\PiTripCheck$, which requires one additional round.
	
	During the online phase, evaluating a multiplication gate requires one round and communication of two elements for the reconstruction of  $\Mask{\wz}$. Also, $P_1$ needs to send $\Mstar{\wx}$  and $\Mstar{\wy}$ values to $P_0$ per instance, which requires just one round for all the multiplication gates and communication of $2$ ring elements per gate. Summing up, evaluating a multiplication gate in the online phase requires  an amortized round complexity of $1$ and communication of $4$ elements.
	
	The output-reconstruction phase requires one round and an amortized communication of $3\OS$ elements, as  the cost of $\PiRecM$ reduces to $\PiRecS$ in an amortized sense due to the use of the hash function.
	%-------
\end{proof}

%----------------------------------------------------------------
\subsection{Achieving Fairness} \label{sec:fairness}
%----------------------------------------------------------------
We boost the security of $\PiMal$ from abort to fairness  via a fair reconstruction protocol $\PiFRec$ that substitutes $\PiRecM$ for the reconstruction of the circuit outputs. To fairly reconstruct $\shr{\wy}$,  the pair $\{P_0,P_1\}$ commit their common share  $\PadA{\wy}$ to $P_2$ and likewise  the pair $P_0,P_2$ commit their common share  $\PadB{\wy}$ to $P_1$ in the offline phase. In the online phase, the evaluator pair $\{P_1,P_2\}$  commit their common information $\Mask{\wy}$ to $P_0$. In all the three cases, shared random (PRF) key is used to derive  the randomness for preparing the commitments. As a result, each pair should prepare an identical commitment ideally. The recipient in each case can abort when the received commitments do not match. If no abort happens, $P_0$  signals $P_1$ and $P_2$ to start opening  the commitments which will help the parties to get their missing share and reconstruct the output.  As there is at least one honest party in each  pair of $(P_0, P_1), (P_0, P_2)$ and $(P_1, P_2)$, the opened value of the honest party from each pair is used for reconstructing $\wy$. Lastly, if the protocol aborts before, then none receive the output maintaining fairness.

A very subtle issue arises in the above protocol in the absence of broadcast channel. A corrupt  $P_0$ can send distinct signals to $P_1$ and $P_2$ (abort to one and continue to the other), breaching unanimity in the end. To settle this, we make   the pair $\{P_0,P_1\}$ to commit a value  $r_1$ chosen from their common random source to $P_2$ and likewise  the pair $P_0,P_2$ to  commit a common value  $r_2$ to $P_1$ in the offline phase. In the online phase, when $P_0$ signals abort to $P_1$, it sends the opening of $r_2$ along. Similarly, when $P_0$ signals abort to $P_2$, it sends the opening of $r_1$ along.  Now an evaluator, say $P_1$ on receiving the abort can convince $P_2$ that it has indeed received abort from $P_0$, using $r_2$ as the {\em proof of origin} for the abort message.  Because the only way $P_1$ can secure $r_2$ is via $P_0$. Put differently, a corrupt $P_1$ cannot simply claim that it received abort from $P_0$, while $P_0$ is really instructed to continue.  A single pair of $(r_1,r_2)$ can be used as a proof of origin for multiple instances of reconstruction running in parallel. Protocol $\PiFRec(\shr{\wy}, \Partyset)$ is formally presented in \boxref{fig:fRec}.
%-----
\begin{myprotocolbox}{}{Protocol $\PiFRec(\shr{\wy}, \Partyset)$}{fig:fRec}
	%-----
	%--------
	\justify
	\algoHead{Offline:}  
	%-----               
	\begin{myitemize}
		%----
		\item[--] Parties $P_0, P_1$ locally sample a random $r_1 \in \Z{\ell}$, prepare and send  commitments of $\PadA{\wy}$ and $r_1$ to $P_2$. Similarly, parties $P_0, P_2$ sample $r_2$ and send  commitments of $\PadB{\wy}$ and $r_2$ to $P_1$ The randomness needed for both commitments are sampled from the PRF key-setup.
		%----
		\item[--] $P_1$ (resp.  $P_2$) aborts if the received commitments mismatch.
		%---
	\end{myitemize}
	%--------
	\justify
	\algoHead{Online:} 
	%-----               
	\begin{myitemize}
		%----
		\item[--] $P_1, P_2$ compute a commitment of $\Mask{\wy}$ using randomness sampled from their PRF key-setup and send it to $P_0$.
		%----
		\item[--] If the commitments do not match, $P_0$ sends $(\abort, o_1)$ to $P_2$, while he sends $(\abort, o_2)$  to $P_1$ and aborts, where $o_i$ denotes opening information for the commitment of $r_i$. Else $P_0$ sends $\continue$  to both $P_1$ and $P_2$.
		%----
		\item[--] $P_1, P_2$ exchange the messages received from $P_0$. 
		%----
		\item[--] $P_1$ aborts if he receives either (i) $(\abort, o_2)$ from $P_0$ and $o_2$ opens the commitment of $r_2$ or (ii) $(\abort, o_1)$ from $P_2$ and $o_1$ is the correct opening information of $r_1$. The case for $P_2$ is similar to that of $P_1$
		%----
		\item[--] If no abort happens, parties obtain their missing share of $a$ as follows:
		%----
		\begin{myitemize}
			\item[--] $P_0, P_1$ open $\PadA{\wy}$ towards $P_2$.
			%--- 
			\item[--] $P_0, P_2$ open $\PadB{\wy}$ towards $P_1$.
			%---
			\item[--] $P_1, P_2$ open $\Mask{\wy}$ towards $P_0$.
			%---
		\end{myitemize}
		%----
		\item[--] Parties reconstruct the value $\wy$ using missing share that matches with the agreed upon commitment.
	\end{myitemize}
	%----
\end{myprotocolbox}
%---

The complexity of $\PiFRec$  is stated below. The commitment can be implemented via a hash function $\mathcal{H}()$ e.g.  $(c, o) =  (\mathcal{H}(x||r), \allowbreak x||r) = Com \allowbreak(x; r)$, whose  security can be proved in the random-oracle model (ROM) \cite{KL14}. We do not include the cost of commitment and opening of $r_1$ and $r_2$, as they will get amortized away over many instances of $\PiFRec$.  
\begin{lemma}
	Protocol $\PiFRec$ requires one round and an amortized  communication of $4$ commitments in the offline phase. $\PiFRec$ requires four rounds and an amortized  communication of at most $2$ commitments and $6$ opening of commitments  in the online phase. 
\end{lemma}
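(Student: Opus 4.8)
The plan is to prove the lemma by directly tracing \boxref{fig:fRec}, separating the offline and online phases, counting communication rounds, and tallying the number of commitments and commitment-openings transmitted per invocation. Throughout, I will invoke the convention stated just before the lemma, namely that the commitment and opening of the proof-of-origin values $r_1, r_2$, together with the one-symbol $\continue/\abort$ flags, are charged only once and hence contribute nothing in the amortized per-instance count.

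\emph{Offline phase.} The entire offline communication consists of $P_0$ and $P_1$ each sending $P_2$ a commitment of $\PadA{\wy}$ (bundled with a commitment of $r_1$), and symmetrically $P_0$ and $P_2$ each sending $P_1$ a commitment of $\PadB{\wy}$ (bundled with a commitment of $r_2$). These four transmissions of the $\pad$-share commitments are mutually independent, so they can all go out in a single round; since the $r_1, r_2$ commitments amortize away, the per-instance cost is exactly $4$ commitments, giving the first claim.

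\emph{Online phase.} I will walk through the four message exchanges in order: (i) $P_1$ and $P_2$ each send $P_0$ a commitment of $\Mask{\wy}$ --- one round, at most $2$ commitments; (ii) $P_0$ replies with $\continue$ or $(\abort, o_i)$ to each evaluator --- one round, constant size after amortization of the $o_i$; (iii) $P_1$ and $P_2$ relay to each other the flag they received from $P_0$ --- one round, constant size; (iv) the three opening sub-steps, in which $\{P_0,P_1\}$ open $\PadA{\wy}$ to $P_2$, $\{P_0,P_2\}$ open $\PadB{\wy}$ to $P_1$, and $\{P_1,P_2\}$ open $\Mask{\wy}$ to $P_0$ --- which are independent and therefore fit in one round, contributing $6$ openings of commitments. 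Summing the rounds gives $4$, and the dominant payload is $2$ commitments plus $6$ openings, as claimed.

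The argument is essentially bookkeeping; the one point that deserves an explicit sentence rather than a genuine obstacle is the amortization claim: a single pair $(r_1, r_2)$ --- committed once in the offline phase and opened at most once in the online phase --- serves as a proof of origin for arbitrarily many concurrent reconstructions, so its total cost is $O(1)$ and its per-instance cost is $o(1)$; each $\continue/\abort$ flag is likewise a single symbol. Once these are discarded, the remaining communication is precisely the commitments to and openings of the three shares $\PadA{\wy}, \PadB{\wy}, \Mask{\wy}$, which yields exactly the stated bounds.
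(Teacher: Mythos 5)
Your counting is correct and is exactly the direct bookkeeping the paper relies on (the paper states this lemma without an explicit proof, having already declared the amortization convention for $r_1, r_2$ just before it): four $\pad$-share commitments in one offline round, then the online sequence of $2$ commitments of $\Mask{\wy}$, the $\continue/\abort$ signal, the relay between evaluators, and the $6$ share openings, giving four rounds. Nothing is missing; your explicit note that a single $(r_1,r_2)$ pair serves all parallel reconstructions matches the paper's stated justification for dropping those costs.
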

%----
%\clearpage
\section{Privacy Preserving Machine Learning}
\label{sec:privML}
We apply our techniques for 3PC developed so far to the regime of ML prediction for a range of prediction functions-- linear regression, logistic regression, linear SVM classification, and linear SVM regression. 
%----------------------------------------------------------------
\subsection{The Model}
\label{sec:Model}
%----------------------------------------------------------------
A model-owner $\Model$,  holding a vector of {\em trained model parameters}, would like to offer ML prediction service to  a client $\Client$ holding a {\it query vector} as per certain prediction function. In the server-aided setting, $\Model$ and $\Client$ outsource their respective inputs in shared fashion  to three untrusted but non-colluding servers $\{P_0, P_1, P_2\}$ who perform the computation in shared fashion via techniques developed for our 3PC protocols and reconstruct the output to the client alone. The client learns the output and nothing beyond.  We assume a {\it computationally bounded} adversary $\Adv$, who  can corrupt at most one of the servers $\{P_0, P_1, P_2\}$ and one of  $\{\Model, \Client\}$ in either semi-honest or malicious fashion.  The security against an $\Adv$ corrupting parties in both sets $\{P_0, P_1, P_2\}$ and $\{\Model, \Client\}$ semi-honestly and likewise maliciously reduces to the semi-honest and respectively malicious security of our 3PC protocols. Adversarial machine learning \cite{TramerZJRR16,Papernot17, KNOCKOFF18} that includes attacks launched by a client to learn the model using its outputs, lies outside the scope of this work.   
Following the existing literature on server-aided secure ML \cite{KamaraMR11, NikolaenkoWIJBT13, NikolaenkoIWJTB13, GasconSB0DZE16}, we do not count the cost of $\Model$ and $\Client$ making their inputs available in secret-shared form amongst the servers and the cost of reconstructing the output to the client. We assume that the inputs are available to the servers in a secret-shared form and focus  on efficient computation of a prediction function on the shared inputs to obtain shared outputs.
%----------------------------------------------------------------
\subsection{Notations}
\label{sec:prelimsML}
%----------------------------------------------------------------
For a vector $\MA$, $\Va_i$ denotes the $i^{th}$ element in the vector.  For two vectors $\MA$ and $\MB$ of length $d$, their scalar dot product is $\MA \band \MB = \sum_{i = 1}^{d} \Va_i \Vb_i$. The definitions of $\sqd$-sharing and $\shrd$-sharing are extended in a natural way for the vectors. A vector $\MA = (\Va_1, \ldots, \Va_{d})$ is said to be $\sqd$-shared, denoted as $\sqr{\MA}$, if each $\Va_i$ is  $\sqd$-shared. We use the notations $\sqr{\MA}_{P_1}$ and $\sqr{\MA}_{P_2}$ to denote the vector of $\sqd$-shares of $P_1$ and $P_2$ respectively, corresponding to $\sqr{\MA}$. Similarly, a vector $\MA = (\Va_1, \ldots, \Va_{d})$ is said to be $\shrd$-shared, denoted as $\shr{\MA}$, if each $\Va_i$ is $\shrd$-shared. We use the notation $\VPad{\Va}$ and $\VMask{\Va}$ to denote the vector of masks and vector of masked values corresponding to $\shr{\MA}$. Finally, we note that the linearity of $\sqd$ and $\shrd$-sharings hold even over vectors.
%----------------------------------------------------------------
\subsection{Fixed Point Arithmetic}
\label{sec:FPA}
%----------------------------------------------------------------
We represent decimal values as $\ell$-bit integers in signed $2$'s complement representation with the most significant bit representing the sign bit and $x$ least significant bits  representing the fractional part. For our purpose, we choose $\ell = 64$ and $x = 13$, keeping $i = \ell - x - 1 = 50$ bits for the integral part of the value. We then treat these  $\ell$-bit strings as elements of $\Z{\ell}$. A product of two numbers from this domain would lead to expanding $x$ to $26$ and yet leaving $37$ bits for the integer part which keeps the accuracy unaffected.  As the prediction functions of our concern require multiplication of depth one, the prediction function output values have the above format. Noticeably, since SecureML \cite{MohasselZ17} and ABY3 \cite{MR18} need to do multiplication in sequence multiple times for the task of training, they propose a new  method of truncation to maintain a  representation invariant across the sequential products. This is necessary to keep accuracy in check in their works. 

%----------------------------------------------------------------
\subsection{Protocols for ML}
\label{sec:ProtML}
%----------------------------------------------------------------
We begin with some of the building blocks required.
%----------------------------------------------------------------
\paragraph{Secure Dot Product}
%----------------------------------------------------------------
Given the $\shrd$-shares of $d$ element vectors $\MP$ and $\MQ$, the goal of a secure dot-product  is to compute  $\shrd$-sharing of $\MP \band \MQ$. Using $\PiMult$ naively to compute the product of each component would require a communication complexity that is linearly dependent on $d$ in both the offline and online phase. In the {\it semi-honest} setting, following the literature  \cite{CDI05,CatrinaH10,HooghSCA14,MR18,RiaziWTS0K18}, we make the communication of $\PiDot$ independent of $d$ as follows: during the offline phase, $P_0$ $\sqd$-shares  {\it only} $\gamma_{\VPQ} = \VPad{p} \band \VPad{q}$, instead of each individual  $\Pad{\Vp_i} \Pad{\Vq_i}$. During the online phase, instead of reconstructing each $\Mask{\Vp_i \Vq_i}$ separately to  compute $\Mask{\Vr}$ where $\Vr =  \MP \band \MQ$, the evaluators $P_1, P_2$ locally compute $\sqr{\Mask{\Vr}}$ and then reconstruct $\Mask{\Vr}$. We call the resultant protocol as $\PiDotSemi$ (\boxref{fig:PiDotSemi}).
%----
\begin{myprotocolbox}{}{Protocol $\PiDotSemi$}{fig:PiDotSemi}
	%-----
	\justify 
	\algoHead{Offline}: $P_0, P_1$ sample random $\PadA{\Vr}, \GammaA{\VPQ} \in \Z{\ell}$, while $P_0, P_2$ sample random $\PadB{\Vr} \in \Z{\ell}$. $P_0$ locally computes $\GammaV{\VPQ} = \VPad{p} \band \VPad{q}$, sets $\GammaB{\VPQ} = \allowbreak \GammaV{\VPQ} - \GammaA{\VPQ}$ and sends $\GammaB{\VPQ}$ to $P_2$.
	
	%------    
	\justify 
	\algoHead{Online}: 
	\begin{myitemize}
		%---
		\item[--] $P_i$ for $i \in \EInSet$ locally computes $\sqr{\Mask{\Vr}}_{P_i} = \sum_{j=1}^{d} \big( (i-1) \Mask{\Vp_j} \Mask{\Vq_j} - \Mask{\Vp_j} \sqr{\Pad{\Vp_j}}_{P_i} - \Mask{\Vq_j} \sqr{\Pad{\Vp_j}}_{P_i} \big) + \sqr{\GammaV{\VPQ}}_{P_i} + \sqr{\Pad{\Vr}}_{P_i}$.
		%---
		\item[--] $P_1$ and $P_2$ mutually exchange $\sqr{\Mask{\Vr}}$ to reconstruct $\Mask{\Vr}$.
		%---
	\end{myitemize}
	%-----	
\end{myprotocolbox}
%----

Due to the extra checks we introduce for tolerating a maliciously adversary in our multiplication protocol, the optimization done above for semi-honest protocol in the offline phase does not work. As a result, we resort to $d$ invocations of our multiplication protocol. Invoking Theorem~\ref{thm:PiMal}, our protocol for dot product then needs to communicate $21d$ ring elements in the offline phase. However,  we improve the online cost from $4d$ (as per Theorem~\ref{thm:PiMal}) to $2d + 2$ as follows. The parties execute the online stage of protocol $\PiDotSemi$. In parallel, $P_1$ sends $\Mstar{\Vp_i}, \Mstar{\Vq_i}$ for $i \in \{1, \ldots, d\}$ to $P_0$, while $P_2$ sends the corresponding hash to $P_0$. Instead of sending $\Mstar{\Vp_i \Vq_i}$ for each  $\Vp_i \Vq_i$, $P_0$ can ``combine" all the  $\Mstar{\Vp_i \Vq_i}$ values and send a single $\Mstar{\Vr}$ to $P_1, P_2$ for verification. In detail, $P_0$ can compute $\Mstar{\Vu} = \sum_{j=1}^{d} \Mstar{\Vu_j}$ and send a hash of the same to both $P_1$ and $P_2$, who can then cross check with a hash of $\Mask{\Vr} - \sum_{j=1}^{d} (\Mask{\Vp_j} \Mask{\Vq_j} - \MPad{\Vr_j})$.  We call the resultant protocol as $\PiDotMal$ and the communication complexity is given below.
\begin{lemma}
	\label{lemma:PiDotCC}
	%-----
	$\PiDotSemi$ requires communication of one ring element during the offline step and communication of two ring elements in online step. $\PiDotMal$ requires communication of $21 d$ ring elements during the offline step and communication of $2 d + 2$ ring elements in online step.
	%-----
\end{lemma}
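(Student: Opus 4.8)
The plan is to prove all four bounds by a direct accounting of the messages exchanged in $\PiDotSemi$ (\boxref{fig:PiDotSemi}) and in the description of $\PiDotMal$, using the conventions already adopted throughout the paper: a value sampled from the shared-key setup $\FSETUP$ costs no communication, and hash values are charged separately and amortised away over many invocations (exactly as for $\PiShM$, $\PiRecM$ and $\PiMultM$), so they contribute no ring elements to the per-instance tally.

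\textbf{Semi-honest protocol.} First I would inspect the offline step of $\PiDotSemi$: the pairs $(P_0,P_1)$ and $(P_0,P_2)$ sample $\PadA{\Vr},\GammaA{\VPQ}$ and $\PadB{\Vr}$ respectively from the shared-key setup, hence non-interactively, and $P_0$ computes $\GammaV{\VPQ}=\VPad{p}\band\VPad{q}$ locally; the only message is $P_0\to P_2$ carrying $\GammaB{\VPQ}=\GammaV{\VPQ}-\GammaA{\VPQ}$, i.e.\ a single ring element. For the online step, each $P_i$ with $i\in\EInSet$ forms $\sqr{\Mask{\Vr}}_{P_i}$ by purely local linear operations on the $\shrd$-shares and the public masked coordinates, after which $P_1$ and $P_2$ exchange these additive shares -- one ring element each, two in total. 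Applying the correctness argument of $\PiSemi$ coordinate-wise and summing shows that a single $\GammaV{\VPQ}$ and a single reconstruction of $\Mask{\Vr}$ really do suffice, so the count is tight; this is not needed for the complexity claim itself.

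\textbf{Malicious protocol.} For the offline bound I would invoke Theorem~\ref{thm:PiMal}. Since the semi-honest offline shortcut (sharing only the single product $\GammaV{\VPQ}$) is incompatible with the triple-checking machinery of $\PiMultM$, protocol $\PiDotMal$ falls back to $d$ independent instances of the offline phase of $\PiMultM$, each of amortised cost $21$ ring elements, giving $21d$ in total. For the online bound I would enumerate the three message flows: (i) the parties run the online step of $\PiDotSemi$, paying the two ring elements of the $\Mask{\Vr}$-reconstruction already counted; (ii) $P_1$ sends the $2d$ elements $\Mstar{\Vp_i},\Mstar{\Vq_i}$ for $i\in\{1,\dots,d\}$ to $P_0$, while $P_2$ sends only the corresponding hash (not a ring element); (iii) $P_0$ replies with a single hash of the combined value $\Mstar{\Vu}=\sum_{j=1}^{d}\Mstar{\Vu_j}$ to both evaluators (again a hash, amortised away). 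Adding the contributions of (i) and (ii) yields $2+2d=2d+2$ ring elements.

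The only point that needs genuine care -- the main obstacle, such as it is -- is to justify that replacing the $d$ per-component verification messages $\Mstar{\Vp_i\Vq_i}$ by the single aggregated message $\Mstar{\Vu}$ does not weaken the malicious check. For this one observes that $\sum_{j}\Mstar{\Vu_j}=\Mask{\Vr}-\sum_{j=1}^{d}\big(\Mask{\Vp_j}\Mask{\Vq_j}-\MPad{\Vr_j}\big)$ holds identically when every component product is computed honestly, and that an error injected by a corrupt evaluator in any one coordinate survives this fixed linear combination (errors over distinct coordinates cannot silently cancel, since the per-gate correctness argument of Lemma~\ref{lemma:PiMultM} pins down the contribution of each coordinate). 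Granting this, the four bounds follow from the routine tally sketched above.
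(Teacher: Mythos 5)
Your accounting matches the paper's own justification: the semi-honest counts come from the single $\GammaB{\VPQ}$ message and the two-element reconstruction of $\Mask{\Vr}$, and the malicious counts come from $d$ offline instances of $\PiMultM$ at $21$ elements each (via Theorem~\ref{thm:PiMal}) plus the online $2d$ elements $\Mstar{\Vp_i},\Mstar{\Vq_i}$ sent by $P_1$ on top of the two-element reconstruction, with hashes amortised away. The added discussion of why the aggregated $\Mstar{\Vu}$ check remains sound is beyond what the complexity lemma requires, but it does not affect the correctness of the count, so the proposal is correct and essentially the paper's argument.
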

%------------

%----------------------------------------------------------------
\paragraph{Secure Comparison}
%----------------------------------------------------------------
Comparing two arithmetic values is one of the major hurdles in realizing efficient secure ML algorithms. Given arithmetic shares $\shr{\su}, \shr{\sv}$, parties wish to check whether $\su < \sv$, which is equivalent to checking if $\sval < 0$, where $\sval = \su - \sv$. In the fixed-point arithmetic representation, this task can be accomplished by checking  the $\MSB{\sval}$. Thus the goal reduces to generating boolean-shares of $\MSB{\sval}$ given the arithmetic-sharing $\shr{\sval}$. Here, we exploit the asymmetry in our secret sharing scheme and forgo expensive primitives such as garbled circuits or parallel prefix adders, which are used in SecureML \cite{MohasselZ17} and ABY3 \cite{MR18}.

%----------------
\begin{myprotocolbox}{}{Protocol $\PiBitExtS{\share{\sval}}{\Partyset}$}{fig:BitExtS}
	\justify 
	\algoHead{Offline:} $P_1, P_2$ together sample random $\sr, \sr' \in \Z{\ell}$ and set $\msp = \allowbreak \MSB{\sr}$. Parties non-interactively generate boolean shares of $\msp$ as $\shrB{\msp}_{P_0} = (0,0),\newline \shr{\msp}_{P_1} = (\msp, 0)$ and $\shr{\msp}_{P_2} = (\msp,0)$.
	%------
	\justify 
	\algoHead{Online:} $P_1$ sets $\sqr{\sval}_{P_1} = \Mask{\sval} - \PadA{\sval}$, $P_2$ sets $\sqr{\sval}_{P_2} = - \PadB{\sval}$. 
	%-----------
	\begin{myitemize}
		%---
		\item[--] $P_1$ sends $\sqr{\sr \sval}_{P_1} = \sr \sqr{\sval}_{P_1} + \sr'$ to $P_0$, while $P_2$ sends $\sqr{\sr \sval}_{P_2} = \allowbreak \sr \sqr{\sval}_{P_2} - \sr'$ to $P_0$, who adds them to obtain $\sr \sval$.
		%----
		\item[--] $P_0$ executes $\PiShS(P_0, \msq)$ to generate $\shr{\msq}$ where $\msq = \MSB{\sr \sval}$.
		%----
		\item[--] Parties locally compute $\shrB{\MSB{\sval}} = \shrB{\msp} \Xor \shrB{\msq}$.
		%----
	\end{myitemize}
	%----
\end{myprotocolbox}
%----------------

We observe that in the signed 2's complement representation, if we multiply two values, then the sign of the result is the sign of the underlying product. Consequently, if a value $\sval$ is multiplied with $\sr$, then $\sign(\sval \cdot \sr) = \sign(\sval) \xor \sign(\sr)$. On a high level, the semi-honest protocol (\boxref{fig:BitExtS}) proceeds as follows: $P_1,P_2$ reconstruct $\sr \sval$ towards $P_0$ where $\sval$ is the value we need the sign of, and $\sr$ is a random value sampled by $P_1,P_2$ together. $P_0$ in turn boolean-shares the sign of $\sr \sval$. Parties retrieve the sign of $\sval$ by XORing the sign of $\sr \sval$ with the sign of $\sr$. For the sake of clarity, we use the superscript {\bf B} to denote the boolean shares. 

%----------------
%\vspace{2mm}
\begin{myprotocolbox}{}{Protocol $\PiBitExtM{\share{\sval}}{\Partyset}$}{fig:BitExtM}
	\justify 
	\algoHead{Offline:} $P_1, P_2$ sample random $\sr_1 \in \Z{\ell}$ and set $\msp_1 = \MSB{\sr_1}$ while $P_0, P_2$ sample random $\sr_2$ and set $\msp_2 = \MSB{\sr_2}$. 
	\begin{myitemize}
		\item[--] Parties non-interactively generate $\shrd$-shares of $\sr_1$ as $\shr{\sr_1}_{P_0} = (0,0),\newline \shr{\sr_1}_{P_1} = (\sr_1, 0)$ and $\shr{\sr_1}_{P_2} = (\sr_1,0)$.
		%-----
		\item[--] Parties non-interactively generate $\shrd$-shares of $\sr_2$ as $\shr{\sr_2}_{P_0} = (0, -\sr_2),\newline \shr{\sr_2}_{P_1} = (0, 0)$ and $\shr{\sr_2}_{P_2} = (0, -\sr_2)$.
		%-----
		\item[--] Parties execute $\PiMultM$ on $\sr_1$ and $\sr_2$ to generate $\shr{\sr} = \shr{\sr_1 \sr_2}$.%, where $\sr = \sr_1 \sr_2$. 
		%-----
		\item[--] Parties non-interactively generate boolean shares of $\msp_1$ as $\shrB{\msp_1}_{P_0} = (0,0), \shrB{\msp_1}_{P_1} = (\msp_1, 0)$ and $\shrB{\msp_1}_{P_2} = (\msp_1,0)$.
		%-----
		\item[--] Parties non-interactively generate boolean shares of $\msp_2$ as $\shrB{\msp_2}_{P_0} = (0,\msp_2), \shrB{\msp_2}_{P_1} = (0, 0)$ and $\shrB{\msp_2}_{P_2} = (0, \msp_2)$.
		%-----
		\item[--] Parties locally compute $\shrB{\msp} = \shrB{\msp_1} \xor \shrB{\msp_2}$
		%-----
	\end{myitemize}
	%------   
	\justify 
	\algoHead{Online:}~
	%-----------
	\begin{myitemize}
		%---
		\item[--] Parties execute $\PiMultM$ on $\sr$ and $\sval$ to generate $\sr \sval$ followed by executing $\PiRecM(P_0, \sr \sval)$ and $\PiRecM(P_1, \sr \sval)$ to enable $P_0, P_1$ obtain $\sr \sval$.
		%----
		\item[--] $P_1$ execute $\PiShM(P_1, \msq)$ to generate $\shrB{\msq}$ where $\msq = \MSB{\sr \sval}$. In parallel, $P_0$ locally computes $\Mask{\msq}$ and sends $\Hash(\Mask{\msq})$ to $P_2$, who $\abort$ if the value mismatch with one received from $P_1$. 
		%----
		\item[--] Parties locally compute $\shrB{\MSB{\sval}} = \shrB{\msp} \Xor \shrB{\msq}$.
		%----
	\end{myitemize}
	%----
\end{myprotocolbox}
%----

For the malicious case, we cannot solely rely on $P_0$ to generate $\shrB{\MSB{\sr \sval}}$. The modified protocol for the malicious setting appears in \boxref{fig:BitExtM}. The correctness for the malicious version appears in Appendix \ref{app:PrivML}. The communication  and round complexity are given below.
\begin{lemma}
	\label{lemma:PiBitExt}
	%-----
	$\piBitExtS$ requires no communication during the offline step, while it requires two rounds and communication of $2 \ell + 2$ bits during the online step. $\piBitExtM$ requires four rounds and an amortized communication of $46 \ell$ bits during the offline step, while it requires three rounds and an amortized communication of $6 \ell + 1$ bits during the online step.
	%-----
\end{lemma}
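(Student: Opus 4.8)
The plan is to establish both parts of the lemma by reading the protocol boxes \boxref{fig:BitExtS} and \boxref{fig:BitExtM} line by line, charging each message to the offline or online step according to whether it is input-independent, and invoking Theorem~\ref{thm:PiMal} for the cost of every sub-call to $\PiMultM$, $\PiRecM$, $\PiShM$ (and the semi-honest $\PiShS$). Two bookkeeping conventions drive the count: all hashes are amortised away over many parallel instances of $\piBitExt$, so they do not appear in the leading term; and a sub-protocol's own offline/online split is honoured — in particular, when $\PiMultM$ is invoked on a pair of values at least one of which is already fixed, its offline sub-phase is attributed to the offline step of $\piBitExt$.

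For $\piBitExtS$ this is immediate. The offline step only samples $\sr,\sr'$ and generates the boolean sharing of $\msp=\MSB{\sr}$, all non-interactively, so it costs nothing. The online step has exactly two rounds: first $P_1$ and $P_2$ each send one $\Z{\ell}$-element to $P_0$ (total $2\ell$ bits), then the instance $\PiShS(P_0,\msq)$ has $P_0$ send the single masked bit $\Mask{\msq}$ to each of $P_1,P_2$ (total $2$ bits); the closing XOR is local. Hence $2$ rounds and $2\ell+2$ bits.

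For $\piBitExtM$ the communication tally is the main content. The whole offline step of \boxref{fig:BitExtM}, together with the offline sub-phase of the online call $\PiMultM(\sr,\sval)$, is input-independent, so the offline step of $\piBitExtM$ amounts to: the full $\PiMultM(\sr_1,\sr_2)$ (offline and online, producing $\shr{\sr}$) plus the offline sub-phase of $\PiMultM(\sr,\sval)$; by Theorem~\ref{thm:PiMal} these contribute $21\ell$, $4\ell$ and $21\ell$ bits respectively, i.e. $46\ell$ bits amortised, while the auxiliary $\shrd$-shares and boolean shares of $\sr_1,\sr_2,\msp_1,\msp_2,\msp$ are generated non-interactively and add nothing. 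The online step then contains only: the online sub-phase of $\PiMultM(\sr,\sval)$ ($4\ell$ bits); the two single-receiver reconstructions $\PiRecM(P_0,\sr\sval)$ and $\PiRecM(P_1,\sr\sval)$, each one $\Z{\ell}$-element plus an amortised hash ($2\ell$ bits in all); and $\PiShM(P_1,\msq)$ with the parallel hash check by $P_0$ (one boolean element, i.e. $1$ bit, plus amortised hashes), before the local XOR — totalling $6\ell+1$ bits.

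The round counts are where the care is needed, and I expect this to be the only non-routine part. For the offline step of $\piBitExtM$ I will argue that the two multiplication offlines can be pipelined: the output mask $\Pad{\sr}$ is fixed in the first round of the offline of $\PiMultM(\sr_1,\sr_2)$, so the offline of $\PiMultM(\sr,\sval)$ overlaps with the remaining rounds (and the single online round) of $\PiMultM(\sr_1,\sr_2)$, keeping the offline step at $4$ rounds rather than the naive sum. For the online step I will fold $\PiRecM(P_0,\sr\sval)$ into the round in which $P_1$ already sends $P_0$ the padded masks inside $\PiMultM(\sr,\sval)$, merge $\PiRecM(P_1,\sr\sval)$ with that round, and leave $\PiShM(P_1,\msq)$ as the last round, so that only three rounds remain; the communication figures themselves then follow directly from Theorem~\ref{thm:PiMal} once the offline/online attribution above is fixed.
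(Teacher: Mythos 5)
Your accounting is correct and is exactly the derivation the paper intends (the paper states Lemma~\ref{lemma:PiBitExt} without an explicit proof): charging the full $\PiMultM(\sr_1,\sr_2)$ plus the offline sub-phase of $\PiMultM(\sr,\sval)$ to the offline step gives $21\ell+4\ell+21\ell=46\ell$, and the online step of $4\ell$ (mult) $+\,2\ell$ (the two single-receiver $\PiRecM$ instances, hashes amortized) $+\,1$ ($\PiShM(P_1,\msq)$ over the boolean ring) matches $6\ell+1$, with the semi-honest count immediate. The round bookkeeping you flag is indeed the only delicate point, and your pipelining resolves it as intended, with the small caveat that the reconstruction of $\sr\sval$ towards $P_0$ cannot precede the evaluators' exchange of $\sqr{\Mask{\sr\sval}}$, so the three online rounds are: evaluator exchange; then the $\Mstar{}$ values together with both reconstructions; then $\PiShM(P_1,\msq)$ and the hash checks.
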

%------------

%----------------------------------------------------------------
\subsection{ML Prediction Functions and Abstractions}
\label{sec:MLAlgo}
%----------------------------------------------------------------
We consider four prediction functions -- two from regression category with real or continuous value as the output and two from classification type with a bit as the output. The inputs to the functions are vectors of decimal values. We provide a high-level overview of the functions below and more details can be found in \cite{MohasselZ17, MakriRSV17, MR18}.
%------
\begin{myitemize}
	%----
	\item[$\circ$] {\bf Linear Regression}: Model $\Model$ owns a $d$-dimensional model parameter $\MW$ and a bias $\Vb$, while client $\Client$ has a $d$-dimensional query vector $\MZ$. $\Client$ obtains $\funcML{linr}\big((\MW,\Vb), \MZ\big) = \MW \band \MZ + \Vb $, where $\MW \band \MZ$ denotes the dot-product of $\MW$ and $\MZ$.
	%----
	\item[$\circ$] {\bf SVM Regression}: $\Model$ holds $\{\alpha_j , y_j\}_{j=1}^k$, $d$-dimensional support vectors $\{\MX_j\}_{j=1}^k$ and bias $\Vb$, while $P_c$ holds a $d$-dimensional query  $\MZ$.  $\Client$ obtains $\funcML{svmr}\big((\{\alpha_j, y_j, \MX_j\}_{j=1}^k), \MZ\big)  = \sum_{j=1}^{k} \alpha_j y_j \allowbreak (\MX_j \band \MZ) + \Vb$. 
	%----
	\item[$\circ$] {\bf Logistic Regression}: The inputs of $\Model$ and $\Client$ are similar to linear regression.  $\Model$ needs to provide an additional input $t$  in the range $[0, 1]$.  $\Client$ obtains $\funcML{logr}\big((\MW,\Vb,t), \MZ\big) = \sign((\MW \band \MZ + \Vb) - \ln~(\frac{t}{1-t}))$, where $\sign(\cdot)$ returns the sign bit of its argument.  Since the values are represented in $2$'s complement representation, $\sign()$  returns the most significant bit (MSB) of its argument. 
	%---
	\item[$\circ$] {\bf SVM Classification}: The inputs of $\Model$ and $\Client$ remain the same as in SVM regression. But the output to $\Client$ changes to $\funcML{svmc}\big((\{\alpha_j,\allowbreak y_j, \MX_j\}_{j=1}^k), \MZ\big)  = \sign(\sum_{j=1}^{k} \alpha_j y_j \allowbreak (\MX_j \band \MZ) + \Vb)$. 
	%----
\end{myitemize}
%----
%\clearpage
\section{Implementation and Benchmarking}
\label{sec:Implementation}
In this section, we provide empirical results for our 3PC and secure prediction protocols. We start with the description of the setup environment-- software, hardware, and network.

%----------------------------------------------------------------
\paragraph{Network \& Hardware Details} 
%----------------------------------------------------------------
We have experimented both  in a LAN (local)  and a WAN (cloud) setting.  In the LAN setting, our machines ($P_0, P_1, P_2$) are equipped with Intel Core i7-7790 CPU with 3.6 GHz processor speed and 32 GB RAM. In the WAN setting, we use Microsoft Azure Cloud Services with machines located in South East Asia ($P_0$), North Europe ($P_1$) and North Central US ($P_2$). We used Standard E4s v3 instances, where machines are equipped with 32 GB RAM and 4 vcpus. Every pair of parties are connected by bi-directional communication channels in both the LAN and WAN setting, facilitating simultaneous data exchange between them. We consider a LAN with $1$Gbps and a WAN with $25$Mbps channel bandwidth. We measured the average round-trip time ($\rtt$) for communicating 1 KB of data between  $P_0$-$P_1$, $P_1$-$P_2$ and $P_0$-$P_2$ in both the setting. In the LAN setting, the average $\rtt$ turned out to be $0.47 ms$. In the WAN setting, the $\rtt$ between $P_0$-$P_1$, $P_1$-$P_2$ and $P_0$-$P_2$ are $201.928 ms$, $81.736 ms$ and $229.792 ms$ respectively. We use a TCP-IP connection between each set of parties.

%----------------------------------------------------------------
\paragraph{Software Details} 
%----------------------------------------------------------------
Our code follows the standards of C++11. We implemented our protocols in both semi-honest and malicious setting, using ENCRYPTO library~\cite{ENCRYPTO}. We used SHA-256 to instantiate the hash function. We use multi-threading to facilitate efficient computation and communication among the parties. For benchmarking, we use the AES-128~\cite{AESBristol} circuit. For ML prediction, since the code for ABY3 \cite{MR18} was not available, we implemented their protocols in our framework for benchmarking. We run each experiment 20 times and report the average for our measurements. 

%----------------------------------------------------------------
\paragraph{Parameters for Comparison}
%----------------------------------------------------------------
All our constructions are compared against their closest competitors which are implemented in our environment for a fair comparison. We consider five parameters for comparison-- latency (calculated as the maximum of the runtime of the parties or servers in case of secure prediction) in both LAN and WAN, total communication complexity and throughput of the {\em online} phase over LAN and WAN.  For 3PC over LAN,  the throughput is calculated as the number of AES circuits that can be computed per second. As an AES evaluation takes more than a second in WAN, we change the notion of throughput in WAN to the number of AND gates that can be computed per second. For the case of secure prediction, throughput is taken as a number of queries that can be processed per second in LAN and per minute in WAN. For simplicity, we use {\em online throughput} to denote the throughput of the online phase. The discrepancy across the benchmarking parameters for LAN and WAN comes from the difference in $\rtt$ (order of microseconds for LAN and milliseconds for WAN).

%----------------------------------------------------------------
\subsection{Experimental Results} 
%---------------------------------------------------------------- 

%----------------------------------------------------------------
\subsubsection{Results for 3PC} 
%----------------------------------------------------------------    
In \tabref{comp3PC}, we compare our 3PCs over the boolean ring ($\Z{}$) both in semi-honest and malicious setting  with their closest competitors  \cite{AFLNO16} and \cite{ABFLLNOWW17} respectively  in terms of latency and communication.
%-------------------------------
\begin{table}[htb!]
	\resizebox{0.48\textwidth}{!}
	{
		\begin{tabular}{c|c|c|c|c|c|c|c}
			\toprule
			\multirow{2}{*}{Protocol} & \multirow{2}{*}{Work}  
			& \multicolumn{2}{c|}{LAN Latency $(ms)$}  & \multicolumn{2}{c|}{WAN Latency $(s)$} 
			& \multicolumn{2}{c}{Communication (KB)} \\ \cmidrule{3-8}
			&						  & Offline    & Online         		  & Offline    & Online  			  & Offline    & Online  \\
			\midrule     
			%----                              
			\multirow{2}[2]{*}{Semi-honest}  	  
			& \cite{AFLNO16}	
			& 0 	& 254.8 	
			& 0 	& 8.96 	
			& 0 	& 1.99\\ \cmidrule{2-8} 
			& {\bf This}  		
			& 0.48 	& 254.8 	
			& 0.23  & 3.19   	
			& 0.66 	& 1.33\\ \midrule
			\multirow{2}[2]{*}{Malicious}  
			& \cite{ABFLLNOWW17}			 	
			& 1.44 	& 260.72 	
			& 0.71 	& 9.42 	
			& 8.06 	& 6.06\\ \cmidrule{2-8}
			& {\bf This}  		
			& 2.37 	& 248.38 	
			& 0.88  	& 3.57 	
			& 10.72 & 2.69\\ 
			%---		
			\bottomrule
			%-----
		\end{tabular}
	}	
	\caption{Comparison of Our 3PC with  \cite{AFLNO16} and \cite{ABFLLNOWW17}\label{tab:comp3PC}}
\end{table}	
%-------------------------------

\vspace{-5mm}
Note that \tabref{comp3PC} does not include the runtime and communication for input-sharing and output-reconstruction phases. We provide the runtime and communication of our protocol for the aforementioned phases in \tabref{IpOpPhases}. For benchmarking, we let $P_0$ own 48 out of the 128 input wires of AES while $P_1$ and $P_2$ own 40 wires each. The table provides benchmarking for the fair reconstruction phase as well, which sees an increase  in the latency for the online phase due to increased round complexity.
%-------------------------------
\begin{table}[htb!]
	\resizebox{0.48\textwidth}{!}
	{
		\begin{tabular}{c|c|c|c|c|c|c|c}
			\toprule
			\multirow{2}{*}{Phase} & \multirow{2}{*}{Protocol}  
			& \multicolumn{2}{c|}{LAN Latency $(ms)$}  & \multicolumn{2}{c|}{WAN Latency $(s)$} 
			& \multicolumn{2}{c}{Comm. (KB)} \\ \cmidrule{3-8}
			&						  & Offline    & Online         		  & Offline    & Online  			  & Offline    & Online  \\
			\midrule     
			%----                              
			\multirow{2}[3]{*}{\makecell{Input\\Sharing}}  	  
			& Semi-honest	
			& 0 	& \multirow{2}[3]{*}{0.47}	
			& 0 	& \multirow{2}[3]{*}{0.23}	
			& 0.01 	& 0.02\\ 
			\cmidrule{2-3} \cmidrule{5-5} \cmidrule{7-8} 
			& Malicious  		
			& 0.47 	&  	
			& 0.23 	&   	
			& 0.02 	& 0.03\\ \midrule
			\multirow{2}[3]{*}{\makecell{Output\\Reconstruction}}  
			& Semi-honest			 	
			& \multirow{2}[3]{*}{0} & \multirow{2}[3]{*}{0.47}	
			& \multirow{2}[3]{*}{0} & \multirow{2}[3]{*}{0.23} 	
			& \multirow{2}[3]{*}{0}	& 0.05\\ 
			\cmidrule{2-2} \cmidrule{8-8}
			& Malicious  		
			&  		& 	
			&   	&  	
			& 		& 0.09\\  
			\midrule
			\makecell{Fair Output\\Reconstruction}
			& Malicious  		
			& 0.47	& 1.91 	
			& 0.23 	& 0.77 	
			& 0.25	& 0.19\\  
			%---		
			\bottomrule
			%-----
		\end{tabular}
	}	
	\caption{Benchmarking for Input Sharing and Output Reconstruction Phases of Our 3PC Protocol\label{tab:IpOpPhases}}
\end{table}	
%--------------------------------

\vspace{-4mm}
In the semi-honest setting, we observe that the online latency for \cite{AFLNO16} and our protocol remain same over LAN. This is because both protocols require the same number of rounds of interaction during the online phase and the $\rtt$ among every pair of parties remain the same. Over WAN, our protocol outperforms \cite{AFLNO16} in terms of online latency. We observe that this improvement comes from the asymmetry in the $\rtt$ among the parties. In detail, our protocol has {\em only} one pair amongst the three pairs of parties to communicate for most of the rounds in the online phase. Thus, when compared with existing protocols, we have an additional privilege where we can assign the roles of the parties effectively across the machines so that the pair of parties having the most communication in the online phase is assigned the lowest $\rtt$. As a result, the time taken by a single round of communication comes down to the  {\em minimum} of the $\rtt$s among all the pairs, as opposed to the {\em maximum}. Thus we achieve a gain of (maximum $\rtt$)/(minimum $\rtt$) in time {\em per} round of communication, compared to the existing protocols. 

%%%--------------------RTT Plot------------------------------------------
\begin{figure}[htb!]
	\resizebox{0.4\textwidth}{!}
	{
		%-----------------------------------------------------
		\begin{tikzpicture}
		\begin{axis}[legend pos=north west, xlabel={multiplicative depth in powers of 2}, ylabel={latency in $ms$}, cycle list name = exotic]
		%--------
		\addplot+[smooth,mark=*] plot coordinates { (7, 28.74) (8, 57.47) (9, 114.95) (10, 229.90) (11, 459.78) (12, 919.55)};
		\addlegendentry{\bf This}
		%--------
		\addplot+[smooth,mark=x] plot coordinates { (7, 29.21) (8, 57.94) (9, 115.42) (10, 230.37) (11, 460.25) (12, 920.02) };
		\addlegendentry{\cite{AFLNO16,ABFLLNOWW17}}
		%--------
		\end{axis}
		\node[align=center,font=\bfseries, xshift=2.5em, yshift=-2em] (title) at (current bounding box.north) {LAN};
		\end{tikzpicture}
	}
	
	\resizebox{0.4\textwidth}{!}
	{
		%-----------------------------------------------------
		\begin{tikzpicture}
		\begin{axis}[legend pos=north west, xlabel={multiplicative depth in powers of 2}, ylabel={latency in $s$}, cycle list name = exotic]
		%--------
		\addplot+[smooth,mark=*] plot coordinates { (7, 10.46) (8, 20.92) (9, 41.83) (10, 83.67) (11, 167.33) (12, 334.66)};
		\addlegendentry{\bf This}
		%--------
		\addplot+[smooth,mark=x] plot coordinates { (7, 29.4) (8, 58.8) (9, 117.61) (10, 235.22) (11, 470.44) (12, 940.87) };
		\addlegendentry{\cite{AFLNO16,ABFLLNOWW17}}
		%--------
		\end{axis}
		\node[align=center,font=\bfseries, xshift=2.5em, yshift=-2em] (title) at (current bounding box.north) {WAN};
		\end{tikzpicture}
		%-----------------------------------------------------
	}
	\vspace{-2mm}
	\caption{Plot of Online Latency against Multiplicative Depth for 3PC Protocols \label{fig:plot_latency}}
\end{figure}
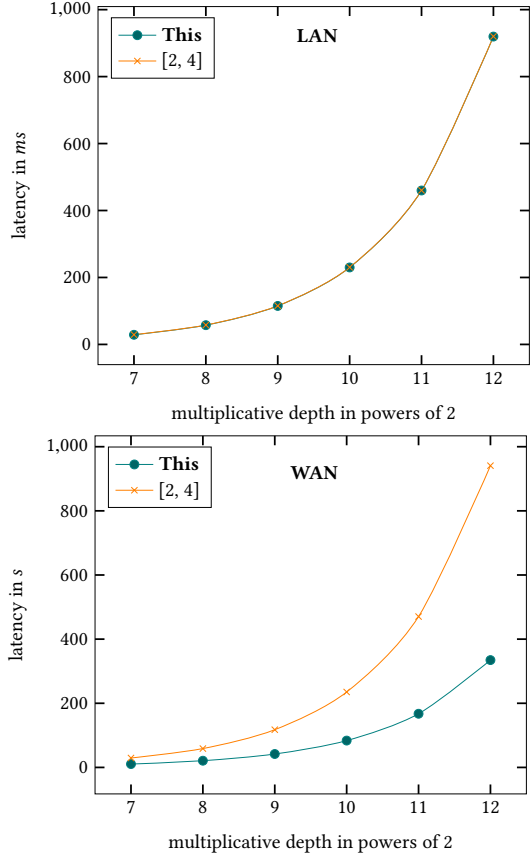

In \figref{plot_latency}, we compare the online latency of our protocols with their competitors, for a varying multiplicative depth (that dictates the round complexity). The same plot applies to both the semi-honest setting and malicious setting, as they differ by a single round and its impact vanishes with the growing number of rounds. It is clear from the plot that the impact of $\rtt$ becomes more visible with the increase in the number of online rounds, leading to improved efficiency. 

%-----------------------------------------------
\begin{table}[htb!] 
	\resizebox{0.47\textwidth}{!}
	{
		\begin{tabular}{c|c|c|c|c|c|c}
			\toprule
			\multirow{2}{*}{Setting} & \multicolumn{3}{c|}{Semi-honest} & \multicolumn{3}{c}{Malicious}   	\\ \cmidrule{2-7} 
			& \cite{AFLNO16} & {\bf This} & Improv. & \cite{ABFLLNOWW17} & {\bf This} & Improv. \\
			\midrule
			LAN  & 3296.7 	& 3296.7  & {1$\times$}        & 3221.85 & 3381.91 & {\bf 1.05$\times$} \\ \midrule 
			WAN  & 8.71~M 	& 13.1~M  & {\bf 1.51$\times$} & 2.9~M 	& 4.34~M	& {\bf 1.50$\times$} \\ 
			\bottomrule 
		\end{tabular}
	}
	\caption{Comparison of 3PC Online Throughput\label{tab:3pcTP}}
\end{table}
%-----------------------------------------------
\vspace{-5mm}

%-------------------------------
Now, we compare the online throughput for 3PC over both LAN (\#AES/sec) and WAN (\#AND/sec) setting  and the results appear in \tabref{3pcTP} (`M' denotes million and `Improv.' denotes improvement). \tabref{3pcTP} shows that our protocol's online throughput is clearly better than that of its competitors. This is mainly because of the improvement in online communication, though the asymmetry in our protocol has a contribution to it. In the semi-honest setting, our protocol is able to effectively push around $33\%$ of the total communication to the offline phase, resulting in an improved online phase. In the malicious setting, our protocol reduces online communication by a factor of $2.25\times$ with an increase in the offline phase by a factor of $1.75\times$, when compared with the state-of-the-art protocols. 

%----------------------------------------------------------------
\subsubsection{Results for Secure Prediction}
%----------------------------------------------------------------
We benchmark  our ML protocols that cover regression functions (linear and SVM) and classification functions (logistic and SVM)  over a ring $\Z{64}$. We report our performance for MNIST database \cite{MNIST10} that has $d=784$ features and compare our results with ABY3 \cite{MR18} (with the removal of extra tools as mentioned in the introduction). The comparison of latency and communication appears below.
%----------------------------------------------------------------
\paragraph{Regression}
%----------------------------------------------------------------
For regression, the servers compute $\shrd$-shares of the function $\MW \band \MZ + \Vb$, given the $\shrd$-shares of $\shr{\MW}, \shr{\MZ}$ and $\shr{\Vb}$. This is computed by parties executing secure dot-product on $\shr{\MW}$ and $\shr{\MZ}$, followed by locally adding the result with $\shrd$-shares of $\Vb$. Here we provide benchmarking for two regression algorithms, namely Linear Regression and Linear SVM Regression. 
%---
Though the aforementioned algorithms serve a different purpose, we observe that their underlying computation is same from the viewpoint of the servers, apart from the values $\MW, \MZ$ and $\Vb$ being  different as mentioned in \secref{MLAlgo}. Thus we provide a single benchmark, capturing both the algorithms and the results appear in \tabref{MLRegRuntime}.
%-------------
\begin{table}[htb!]
	\resizebox{0.4\textwidth}{!}
	{
		\begin{tabular}{c|c|c|c|c|c}
			\toprule
			\multirow{2}[2]{*}{Setting}
			& \multirow{2}[2]{*}{Work}  
			& \multicolumn{2}{c|}{Semi-honest}  
			& \multicolumn{2}{c}{Malicious}\\ \cmidrule{3-6}
			& & Offline  & Online &   Offline    & Online\\
			\midrule     
			%---- 
			\multirow{2}[2]{*}{\makecell{LAN\\($ms$)}} 
			& ABY3 			& 0    & 0.62 & 1.61 & 1.56\\ \cmidrule{2-6}
			& {\bf This} 	& 0.52 & 0.61 & 2.56 & 1.07\\ \midrule
			%----
			\multirow{2}[2]{*}{\makecell{WAN\\($s$)}} 
			& ABY3 			& 0    & 0.23 & 0.72 & 0.70\\ \cmidrule{2-6}
			& {\bf This} 	& 0.23 & 0.09 & 1.1 & 0.44\\ \midrule
			%----
			\multirow{2}[2]{*}{\makecell{Comm.\\(KB)}}
			& ABY3 			& 0    & 0.02 & 73.5    & 55.13\\ \cmidrule{2-6}
			& {\bf This} 	& 0.01 & 0.01 & 128.63  & 12.27\\ 
			%---		
			\bottomrule
			%-----
		\end{tabular}
	}
	\caption{Comparison of Latency and Communication for Regression Protocols\label{tab:MLRegRuntime}}
\end{table}	
%-------------------------------
\vspace{-5mm}

In the semi-honest setting, similar online latency for both protocols over LAN can be justified by the similar $\rtt$ among parties. Over WAN, the asymmetry in the $\rtt$ among the parties (as mentioned for the case of 3PC) adds benefit to our protocol. In the malicious setting, the result is further improved, since we require one less round when compared with ABY3 in the online phase.    
%-----------------------------------------------
\begin{table}[htb!] 
	\resizebox{0.47\textwidth}{!}
	{
		\begin{tabular}{c|c|c|c|c|c|c}
			\toprule
			\multirow{2}{*}{Setting} & \multicolumn{3}{c|}{Semi-honest} & \multicolumn{3}{c}{Malicious}   	\\ \cmidrule{2-7} 
			& ABY3 & {\bf This} & Improv. & ABY3 & {\bf This} & Improv. \\
			\midrule
			LAN  & 0.645~M 	& 0.656~M  & {\bf1.02$\times$}  & 0.007~M & 0.010~M & {\bf 1.5$\times$} \\ \midrule 
			WAN  & 0.104~M 	& 0.267~M  & {\bf 2.56$\times$} & 0.010~M & 0.016~M & {\bf 1.5$\times$} \\ 
			\bottomrule 
		\end{tabular}
	}
	\caption{Online Throughput of Regression Protocols\label{tab:MLRegTP}}
\end{table}
%-----------------------------------------------
\vspace{-5mm}

We now provide an {\em online} throughput comparison of our regression protocols over LAN (queries/sec) and  WAN (queries/min) setting and the result appear in \tabref{MLRegTP}. We observe that the throughput was further boosted in the malicious setting because of our efficient dot-product protocol (\secref{ProtML}) with which we could improve the online communication by a factor of $4.5\times$ when compared to ABY3.

In \figref{plot_MLRegTP}, we present a comparison of online throughput (\#queries/sec for LAN and \#queries/min for WAN) against the number of features in the malicious setting, for a number of features varying from 500 to 2500. Since the online communication cost is independent of the feature size in the semi-honest setting, we omit to plot the same. The plot clearly shows that our protocol for regression outperforms ABY3 in terms of online throughput. The reduction in throughput with the increase in feature size for both ours as well as ABY3's can be explained with the increase in communication for higher feature sizes.
\vspace{-2mm}
%%%-----------------RTT Plot ----------------------------------------
\begin{figure}[htb!]
	\resizebox{0.4\textwidth}{!}
	{
		\begin{tikzpicture}
		\begin{axis}[legend pos=north east, xlabel={\# features}, ylabel={\# queries/sec}, cycle list name = exotic]
		%--------
		\addplot+[smooth,mark=*] plot coordinates { (500, 16384) (1000, 8192) (1500, 5461.33) (2000, 4096) (2500, 3276.8)};
		\addlegendentry{\bf This}
		%--------
		\addplot+[smooth,mark=x] plot coordinates { (500, 10922.67) (1000, 5461.33) (1500, 3640.89) (2000, 2730.67) (2500, 2184.53)};
		\addlegendentry{ABY3}
		%--------
		\end{axis}
		\node[align=center,font=\bfseries, xshift=2.5em, yshift=-2em] (title) at (current bounding box.north) {LAN};
		\end{tikzpicture}
	}
	\resizebox{0.4\textwidth}{!}
	{
		\begin{tikzpicture}
		\begin{axis}[legend pos=north east, xlabel={\# features}, ylabel={\# queries/min}, cycle list name = exotic]
		%--------
		\addplot+[smooth,mark=*] plot coordinates { (500, 24576) (1000, 12288) (1500, 8192) (2000, 6144) (2500, 4915.2)};
		\addlegendentry{\bf This}
		%--------
		\addplot+[smooth,mark=x] plot coordinates { (500, 16384) (1000, 8192) (1500, 5461.33) (2000, 4096) (2500, 3276.80)};
		\addlegendentry{ABY3}
		%--------
		\end{axis}
		\node[align=center,font=\bfseries, xshift=2.5em, yshift=-2em] (title) at (current bounding box.north) {WAN};
		\end{tikzpicture}
	}
	\vspace{-2mm}
	\caption{Plot of Online Throughput against Multiplicative Depth for Regression Protocols\label{fig:plot_MLRegTP}}
\end{figure}
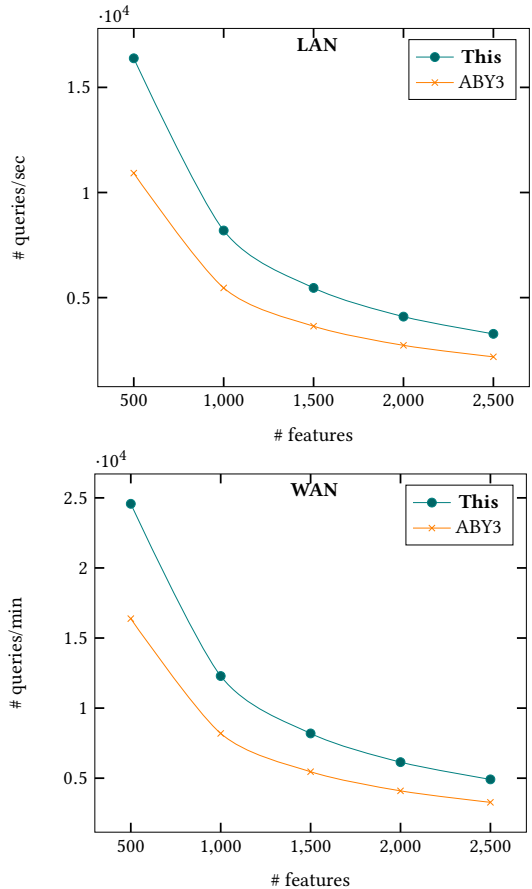

\vspace{-5mm}
%----------------------------------------------------------------
\paragraph{Classification}
For classification, the servers compute $\shrB{\cdot}$-shares of the function $\sign(\MW \band \MZ + \Vb)$, given the $\shrd$-shares of $\shr{\MW}, \shr{\MZ}$ and $\shr{\Vb}$. Towards this, parties first execute secure dot-product on $\shr{\MW}$ and $\shr{\MZ}$, followed by locally adding the result with $\shr{\Vb}$. Then parties execute secure comparison protocol on the result obtained from the previous step to generate the boolean share of $\sign(\MW \band \MZ + \Vb)$.
Here we consider two classification algorithms, namely Logistic Regression and Linear SVM Classification. Similar to the case with Regression, both algorithms share the same computation from the server's perspective and thus we provide a single benchmark. The results appear in \tabref{MLClassRuntime} and the online throughput comparison appears in \tabref{MLClassTP}.
%\vspace{-2mm}

%----------------------------------------------------------------
\begin{table}[H]
	\resizebox{0.39\textwidth}{!}
	{
		\begin{tabular}{c|c|c|c|c|c}
			\toprule
			\multirow{2}[2]{*}{Setting}
			& \multirow{2}[2]{*}{Work}  
			& \multicolumn{2}{c|}{Semi-honest}  
			& \multicolumn{2}{c}{Malicious}\\ \cmidrule{3-6}
			& & Offline  & Online &   Offline    & Online\\
			\midrule     
			%---- 
			\multirow{2}[2]{*}{\makecell{LAN\\($ms$)}} 
			& ABY3 			& 0    & 3.48 & 1.63 & 4.42\\ \cmidrule{2-6}
			& {\bf This} 	& 0.54 & 1.58 & 2.57 & 2.53\\ \midrule
			%----
			\multirow{2}[2]{*}{\makecell{WAN\\($s$)}}  
			& ABY3 			& 0    & 1.61 & 0.72 & 2.08\\ \cmidrule{2-6}
			& {\bf This} 	& 0.23 & 0.55 & 1.1  & 0.98\\ \midrule
			%----
			\multirow{2}[2]{*}{\makecell{Comm.\\(KB)}}
			& ABY3 			& 0    & 0.07 & 73.7 & 55.3\\ \cmidrule{2-6}
			& {\bf This} 	& 0.01 & 0.04 & 129  & 12.4\\ 
			%---		
			\bottomrule
			%-----
		\end{tabular}
	}
	\caption{Comparison of Latency and Communication for Classification Protocols\label{tab:MLClassRuntime}}
\end{table}	
%-------------------------------
\vspace{-6mm}

%-----------------------------------------------
\begin{table}[htb!] 
	\resizebox{0.47\textwidth}{!}
	{
		\begin{tabular}{c|c|c|c|c|c|c}
			\toprule
			\multirow{2}{*}{Setting} & \multicolumn{3}{c|}{Semi-honest} & \multicolumn{3}{c}{Malicious}   	\\ \cmidrule{2-7} 
			& ABY3 & {\bf This} & Improv. & ABY3 & {\bf This} & Improv. \\
			\midrule
			LAN  & 0.115~M 	& 0.253~M & {\bf2.2$\times$}   & 0.007~M & 0.010~M  & {\bf 1.5$\times$} \\ \midrule 
			WAN  & 0.015~M 	& 0.044~M & {\bf 2.93$\times$} & 0.010~M & 0.016~M	& {\bf 1.5$\times$} \\ 
			\bottomrule 
		\end{tabular}
	}
	\caption{Online Throughput of Classification Protocols\label{tab:MLClassTP}}
\end{table}
%-----------------------------------------------
\vspace{-5mm}

%%%--------------------------------RTT Plot--------------------------------
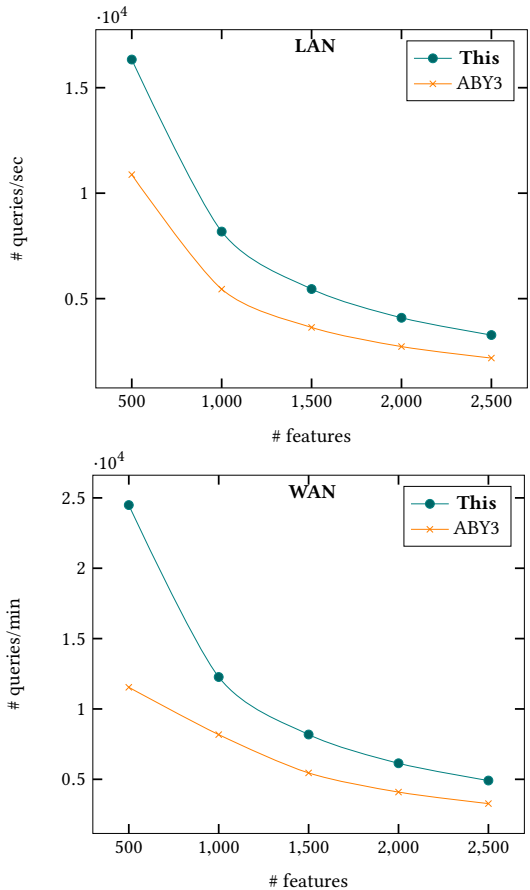
\begin{figure}[htb!]
	\resizebox{0.4\textwidth}{!}
	{
		\begin{tikzpicture}
		\begin{axis}[legend pos=north east, xlabel={\# features}, ylabel={\# queries/sec}, cycle list name = exotic]
		%--------
		\addplot+[smooth,mark=*] plot coordinates { (500, 16335) (1000, 8179.73) (1500, 5455.88) (2000, 4092.93) (2500, 3274.84)};
		\addlegendentry{\bf This}
		%--------
		\addplot+[smooth,mark=x] plot coordinates { (500, 10879.15) (1000, 5450.43) (1500, 3636.04) (2000, 2727.94) (2500, 2182.79)};
		\addlegendentry{ABY3}
		%--------
		\end{axis}
		\node[align=center,font=\bfseries, xshift=2.5em, yshift=-2em] (title) at (current bounding box.north) {LAN};
		\end{tikzpicture}
	}
	\resizebox{0.4\textwidth}{!}
	{
		\begin{tikzpicture}
		\begin{axis}[legend pos=north east, xlabel={\# features}, ylabel={\# queries/min}, cycle list name = exotic]
		%--------
		\addplot+[smooth,mark=*] plot coordinates { (500, 24489.8) (1000, 12269.6) (1500, 8183.82) (2000, 6139.4) (2500, 4912.25)};
		\addlegendentry{\bf This}
		%--------
		\addplot+[smooth,mark=x] plot coordinates { (500, 11538.46) (1000, 8175.65) (1500, 5454.06) (2000, 4091.91) (2500, 3274.18)};
		\addlegendentry{ABY3}
		%--------
		\end{axis}
		\node[align=center,font=\bfseries, xshift=2.5em, yshift=-2em] (title) at (current bounding box.north) {WAN};
		\end{tikzpicture}
	}
	\vspace{-2mm}
	\caption{Plot of Online Throughput against Multiplicative Depth for Classification Protocols\label{fig:plot_MLClassTP}}
\end{figure}

In this case, we observe that our protocol outperforms ABY3 in all the settings. This is mainly due to our Secure Comparison protocol (\secref{ProtML}) where we improve upon both communication and rounds in the online phase. The effect of this improvement becomes more visible for applications where the secure comparison is used extensively. Similar to Regression, in \figref{plot_MLClassTP}, we provide below a comparison of online throughput (\#queries/sec for LAN and \#queries/min for WAN) against the number of features in the malicious setting.

%----------------------------------------------------------------
\subsection{Restricted Bandwidth Setting}
\label{sec:BWEffect}
%----------------------------------------------------------------
We observe that the asymmetry of our constructions further comes to our advantage for throughput. That is, while a drop in bandwidth between any pair of parties significantly affects the throughput of  the existing protocols, the throughput of ours does not get affected much as long as the drop occurs between the pair(s) of parties handling a low volume of data. The purpose of this setting is to show that for setups with varying bandwidths among the servers, our protocol has an advantage in choosing the roles of the servers whereas existing works cannot.
%%%--------------------------------RTT Plot--------------------------------
\begin{figure}[htb!]
	\resizebox{0.4\textwidth}{!}
	{
		\begin{tikzpicture}
		\begin{axis}[legend pos=north east, xlabel={\# features}, ylabel={\# queries/sec}, cycle list name = exotic]
		%--------
		\addplot+[smooth,mark=*] plot coordinates { (500, 16384) (1000, 8192) (1500, 5461.33) (2000, 4096) (2500, 3276.8)};
		\addlegendentry{\bf This}
		%--------
		\addplot+[smooth,mark=x] plot coordinates { (500, 1092.27) (1000, 546.13) (1500, 364.09) (2000, 273.07) (2500, 218.45)};
		\addlegendentry{ABY3}
		%--------
		\end{axis}
		\node[align=center,font=\bfseries, xshift=2.5em, yshift=-2em] (title) at (current bounding box.north) {LAN};
		\end{tikzpicture}
	}
	\resizebox{0.4\textwidth}{!}
	{
		\begin{tikzpicture}
		\begin{axis}[legend pos=north east, xlabel={\# features}, ylabel={\# queries/min}, cycle list name = exotic]
		%--------
		\addplot+[smooth,mark=*] plot coordinates { (500, 24489.8) (1000, 12269.6) (1500, 8183.82) (2000, 6139.4) (2500, 4912.25)};
		\addlegendentry{\bf This}
		%--------
		\addplot+[smooth,mark=x] plot coordinates { (500, 6527.49) (1000, 3270.26) (1500, 2181.62) (2000, 1636.76) (2500, 1309.67)};
		\addlegendentry{ABY3}
		%--------
		\end{axis}
		\node[align=center,font=\bfseries, xshift=2.5em, yshift=-2em] (title) at (current bounding box.north) {WAN};
		\end{tikzpicture}
	}
	\vspace{-2mm}
	\caption{Plot of Online Throughput against Multiplicative Depth for Classification Protocols  in the Malicious Setting under Restricted Bandwidth\label{fig:plot_MLClassTPBW}}
\end{figure}
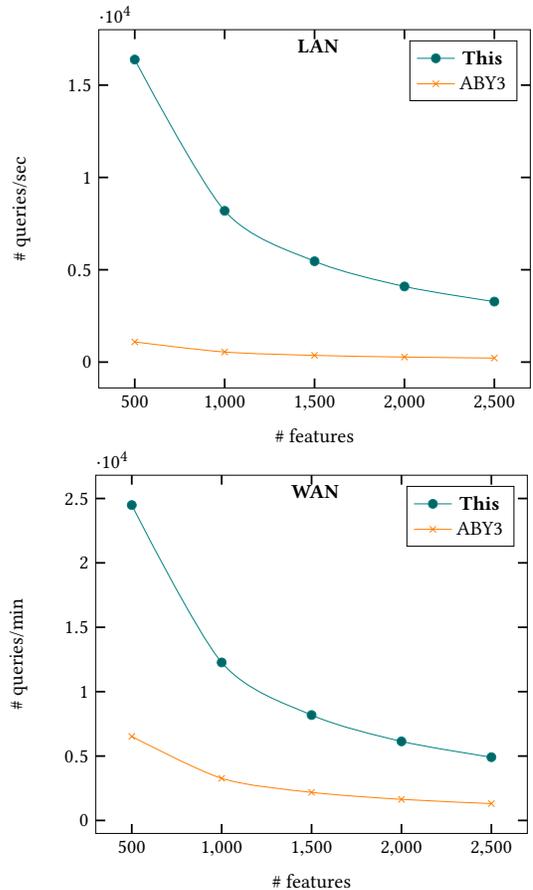

To demonstrate this positive impact, we test the throughput of our ML constructions in a modified network setting where the bandwidth between one of the pairs, namely $P_0$ and $P_2$ is restricted to 100Mbps (instead of 1Gbps) in LAN and 10Mbps (instead of 25Mbps) in WAN setting. This restriction significantly drops the throughput of the existing constructions as they need all the pairs to communicate equally, while ours remain unaffected. The cut-down on bandwidth  does not make any difference in latency (that is measured for {\em one} execution) and communication complexity. We provide a comparison of throughput in the malicious setting in \tabref{MLClassTPBW}. 
%-----------------------------------------------
\begin{table}[htb!] 
	\resizebox{0.48\textwidth}{!}
	{
		\begin{tabular}{c|c|c|c|c|c|c}
			\toprule
			\multirow{2}{*}{Setting} & \multicolumn{3}{c|}{Regression} & \multicolumn{3}{c}{Classification}   	\\ \cmidrule{2-7} 
			& ABY3 & {\bf This} & Improv. & ABY3 & {\bf This} & Improv. \\
			\midrule
			LAN  & 0.001~M 	& 0.010~M & {\bf15$\times$}    & 0.001~M & 0.010~M  & {\bf 15.01$\times$} \\ \midrule 
			WAN  & 0.004~M 	& 0.016~M & {\bf 3.75$\times$} & 0.004~M & 0.016~M	& {\bf 3.75$\times$} \\ 
			\bottomrule 
		\end{tabular}
	}
	\caption{Online Throughput of ML Protocols in the Malicious Setting under Restricted Bandwidth\label{tab:MLClassTPBW}}
\end{table}
%-----------------------------------------------
\vspace{-5mm}

The comparison of online throughput (\#queries/sec for LAN while \#queries/min for WAN) against the number of features in the malicious setting for classification protocols appear in \figref{plot_MLClassTPBW}.
%----
\section{Conclusions}
\label{sec:Conclusions}
In this work, we presented efficient protocols for the three party setting (3PC) tolerating at most one corruption. We applied our results in the domain of secure machine learning prediction for a range of functions -- Linear Regression, Linear SVM Regression, Logistic Regression, and Linear SVM classification. The theoretical improvements over the state-of-the-art protocols were backed up by an extensive benchmarking.

%----------------------------------------------------------------
\paragraph{\bf Open Problems}
%----------------------------------------------------------------
Our techniques are tailor-made for 3PC with 1 corruption. Extending these techniques to the case of an arbitrary $Q^{(2)}$ adversary structure~\cite{SmartW19} is left as an open problem.

%-------------------------------
\paragraph{\bf Acknowledgements}
%-------------------------------
We would like to thank Thomas Schneider for helpful discussions, comments, and pointers.

%Ashish Choudhury supported by R\& D work undertaken in the project under the Visvesvaraya Ph.D.  Scheme of  Ministry of Electronics \& Information  Technology, Government of India, being implemented by Digital India Corporation (Formerly Media Lab Asia). Arpita Patra supported by SERB Women Excellence Award 2017 (DSTO 1706).

%----
%\clearpage
\bibliographystyle{ACM-Reference-Format}
\bibliography{main}

%%% -*-BibTeX-*-
%%% Do NOT edit. File created by BibTeX with style
%%% ACM-Reference-Format-Journals [18-Jan-2012].

\begin{thebibliography}{64}

%%% ====================================================================
%%% NOTE TO THE USER: you can override these defaults by providing
%%% customized versions of any of these macros before the \bibliography
%%% command.  Each of them MUST provide its own final punctuation,
%%% except for \shownote{}, \showDOI{}, and \showURL{}.  The latter two
%%% do not use final punctuation, in order to avoid confusing it with
%%% the Web address.
%%%
%%% To suppress output of a particular field, define its macro to expand
%%% to an empty string, or better, \unskip, like this:
%%%
%%% \newcommand{\showDOI}[1]{\unskip}   % LaTeX syntax
%%%
%%% \def \showDOI #1{\unskip}           % plain TeX syntax
%%%
%%% ====================================================================

\ifx \showCODEN    \undefined \def \showCODEN     #1{\unskip}     \fi
\ifx \showDOI      \undefined \def \showDOI       #1{#1}\fi
\ifx \showISBNx    \undefined \def \showISBNx     #1{\unskip}     \fi
\ifx \showISBNxiii \undefined \def \showISBNxiii  #1{\unskip}     \fi
\ifx \showISSN     \undefined \def \showISSN      #1{\unskip}     \fi
\ifx \showLCCN     \undefined \def \showLCCN      #1{\unskip}     \fi
\ifx \shownote     \undefined \def \shownote      #1{#1}          \fi
\ifx \showarticletitle \undefined \def \showarticletitle #1{#1}   \fi
\ifx \showURL      \undefined \def \showURL       {\relax}        \fi
% The following commands are used for tagged output and should be
% invisible to TeX
\providecommand\bibfield[2]{#2}
\providecommand\bibinfo[2]{#2}
\providecommand\natexlab[1]{#1}
\providecommand\showeprint[2][]{arXiv:#2}

\bibitem[\protect\citeauthoryear{Abril, Maene, Mertens, and Smart}{Abril
  et~al\mbox{.}}{[n.d.]}]%
        {AESBristol}
\bibfield{author}{\bibinfo{person}{V.~A. Abril}, \bibinfo{person}{P. Maene},
  \bibinfo{person}{N. Mertens}, {and} \bibinfo{person}{N.~P. Smart}.}
  \bibinfo{year}{[n.d.]}\natexlab{}.
\newblock \bibinfo{title}{{Bristol Fashion MPC Circuits}}.
\newblock
  \bibinfo{howpublished}{\url{https://homes.esat.kuleuven.be/~nsmart/MPC/}}.
\newblock


\bibitem[\protect\citeauthoryear{Araki, Barak, Furukawa, Lichter, Lindell, Nof,
  Ohara, Watzman, and Weinstein}{Araki et~al\mbox{.}}{2017}]%
        {ABFLLNOWW17}
\bibfield{author}{\bibinfo{person}{T. Araki}, \bibinfo{person}{A. Barak},
  \bibinfo{person}{J. Furukawa}, \bibinfo{person}{T. Lichter},
  \bibinfo{person}{Y. Lindell}, \bibinfo{person}{A. Nof}, \bibinfo{person}{K.
  Ohara}, \bibinfo{person}{A. Watzman}, {and} \bibinfo{person}{O. Weinstein}.}
  \bibinfo{year}{2017}\natexlab{}.
\newblock \showarticletitle{{Optimized Honest-Majority {MPC} for Malicious
  Adversaries - Breaking the 1 Billion-Gate Per Second Barrier}}. In
  \bibinfo{booktitle}{\emph{{IEEE} S{\&}P}}.
\newblock


\bibitem[\protect\citeauthoryear{Araki, Barak, Furukawa, Lindell, Nof, and
  Ohara}{Araki et~al\mbox{.}}{2016a}]%
        {ArakiBFLNO16}
\bibfield{author}{\bibinfo{person}{T. Araki}, \bibinfo{person}{A. Barak},
  \bibinfo{person}{J. Furukawa}, \bibinfo{person}{Y. Lindell},
  \bibinfo{person}{A. Nof}, {and} \bibinfo{person}{K. Ohara}.}
  \bibinfo{year}{2016}\natexlab{a}.
\newblock \showarticletitle{{DEMO:} High-Throughput Secure Three-Party
  Computation of Kerberos Ticket Generation}. In
  \bibinfo{booktitle}{\emph{{ACM} CCS}}.
\newblock


\bibitem[\protect\citeauthoryear{Araki, Furukawa, Lindell, Nof, and
  Ohara}{Araki et~al\mbox{.}}{2016b}]%
        {AFLNO16}
\bibfield{author}{\bibinfo{person}{T. Araki}, \bibinfo{person}{J. Furukawa},
  \bibinfo{person}{Y. Lindell}, \bibinfo{person}{A. Nof}, {and}
  \bibinfo{person}{K. Ohara}.} \bibinfo{year}{2016}\natexlab{b}.
\newblock \showarticletitle{{High-Throughput Semi-Honest Secure Three-Party
  Computation with an Honest Majority}}. In \bibinfo{booktitle}{\emph{ACM
  CCS}}.
\newblock


\bibitem[\protect\citeauthoryear{Baum, Damg{\aa}rd, Toft, and Zakarias}{Baum
  et~al\mbox{.}}{2016}]%
        {BaumDTZ16}
\bibfield{author}{\bibinfo{person}{C. Baum}, \bibinfo{person}{I. Damg{\aa}rd},
  \bibinfo{person}{T. Toft}, {and} \bibinfo{person}{R.~W. Zakarias}.}
  \bibinfo{year}{2016}\natexlab{}.
\newblock \showarticletitle{Better Preprocessing for Secure Multiparty
  Computation}. In \bibinfo{booktitle}{\emph{{ACNS}}}.
\newblock


\bibitem[\protect\citeauthoryear{Beaver}{Beaver}{1991}]%
        {Bea91}
\bibfield{author}{\bibinfo{person}{D. Beaver}.}
  \bibinfo{year}{1991}\natexlab{}.
\newblock \showarticletitle{{Efficient Multiparty Protocols Using Circuit
  Randomization}}. In \bibinfo{booktitle}{\emph{CRYPTO}}.
\newblock


\bibitem[\protect\citeauthoryear{Beaver}{Beaver}{1995}]%
        {Bea95}
\bibfield{author}{\bibinfo{person}{D. Beaver}.}
  \bibinfo{year}{1995}\natexlab{}.
\newblock \showarticletitle{{Precomputing Oblivious Transfer}}. In
  \bibinfo{booktitle}{\emph{CRYPTO}}.
\newblock


\bibitem[\protect\citeauthoryear{Beerliov{\'a}{-T}rub\'{\i}niov{\'a} and
  Hirt}{Beerliov{\'a}{-T}rub\'{\i}niov{\'a} and Hirt}{2006}]%
        {BH06}
\bibfield{author}{\bibinfo{person}{Z. Beerliov{\'a}{-T}rub\'{\i}niov{\'a}}
  {and} \bibinfo{person}{M. Hirt}.} \bibinfo{year}{2006}\natexlab{}.
\newblock \showarticletitle{{Efficient Multi-party Computation with Dispute
  Control}}. In \bibinfo{booktitle}{\emph{TCC}}.
\newblock


\bibitem[\protect\citeauthoryear{Beerliov{\'a}{-T}rub\'{\i}niov{\'a} and
  Hirt}{Beerliov{\'a}{-T}rub\'{\i}niov{\'a} and Hirt}{2008}]%
        {BH08}
\bibfield{author}{\bibinfo{person}{Z. Beerliov{\'a}{-T}rub\'{\i}niov{\'a}}
  {and} \bibinfo{person}{M. Hirt}.} \bibinfo{year}{2008}\natexlab{}.
\newblock \showarticletitle{{Perfectly-Secure MPC with Linear Communication
  Complexity}}. In \bibinfo{booktitle}{\emph{TCC}}.
\newblock


\bibitem[\protect\citeauthoryear{Ben{-}{O}r, Goldwasser, and
  Wigderson}{Ben{-}{O}r et~al\mbox{.}}{1988}]%
        {BGW88}
\bibfield{author}{\bibinfo{person}{M. Ben{-}{O}r}, \bibinfo{person}{S.
  Goldwasser}, {and} \bibinfo{person}{A. Wigderson}.}
  \bibinfo{year}{1988}\natexlab{}.
\newblock \showarticletitle{{Completeness Theorems for Non-Cryptographic
  Fault-Tolerant Distributed Computation (Extended Abstract)}}. In
  \bibinfo{booktitle}{\emph{{ACM} STOC}}.
\newblock


\bibitem[\protect\citeauthoryear{Ben{-S}asson, Fehr, and
  Ostrovsky}{Ben{-S}asson et~al\mbox{.}}{2012}]%
        {BFO12}
\bibfield{author}{\bibinfo{person}{E. Ben{-S}asson}, \bibinfo{person}{S. Fehr},
  {and} \bibinfo{person}{R. Ostrovsky}.} \bibinfo{year}{2012}\natexlab{}.
\newblock \showarticletitle{{Near-Linear Unconditionally-Secure Multiparty
  Computation with a Dishonest Minority}}. In
  \bibinfo{booktitle}{\emph{CRYPTO}}.
\newblock


\bibitem[\protect\citeauthoryear{Bishop}{Bishop}{2006}]%
        {Bishop06}
\bibfield{author}{\bibinfo{person}{Christopher Bishop}.}
  \bibinfo{year}{2006}\natexlab{}.
\newblock \bibinfo{booktitle}{\emph{Pattern Recognition and Machine Learning}}.
\newblock


\bibitem[\protect\citeauthoryear{Bogdanov, Laur, and Willemson}{Bogdanov
  et~al\mbox{.}}{2008}]%
        {BogdanovLW08}
\bibfield{author}{\bibinfo{person}{D. Bogdanov}, \bibinfo{person}{S. Laur},
  {and} \bibinfo{person}{J. Willemson}.} \bibinfo{year}{2008}\natexlab{}.
\newblock \showarticletitle{Sharemind: {A} Framework for Fast
  Privacy-Preserving Computations}. In \bibinfo{booktitle}{\emph{{ESORICS}}}.
\newblock


\bibitem[\protect\citeauthoryear{Bogdanov, Talviste, and Willemson}{Bogdanov
  et~al\mbox{.}}{2012}]%
        {BogdanovTW12}
\bibfield{author}{\bibinfo{person}{D. Bogdanov}, \bibinfo{person}{R. Talviste},
  {and} \bibinfo{person}{J. Willemson}.} \bibinfo{year}{2012}\natexlab{}.
\newblock \showarticletitle{{Deploying Secure Multi-Party Computation for
  Financial Data Analysis}}. In \bibinfo{booktitle}{\emph{FC}}.
\newblock


\bibitem[\protect\citeauthoryear{Bogetoft, Christensen, Damg{\aa}rd, Geisler,
  Jakobsen, Kr{\o}igaard, Nielsen, Nielsen, Nielsen, Pagter, Schwartzbach, and
  Toft}{Bogetoft et~al\mbox{.}}{2009}]%
        {BogetoftCDGJKNNNPST09}
\bibfield{author}{\bibinfo{person}{P. Bogetoft}, \bibinfo{person}{D.~L.
  Christensen}, \bibinfo{person}{I. Damg{\aa}rd}, \bibinfo{person}{M. Geisler},
  \bibinfo{person}{T.~P. Jakobsen}, \bibinfo{person}{M. Kr{\o}igaard},
  \bibinfo{person}{J.~D. Nielsen}, \bibinfo{person}{J.~B. Nielsen},
  \bibinfo{person}{K. Nielsen}, \bibinfo{person}{J. Pagter},
  \bibinfo{person}{M.~I. Schwartzbach}, {and} \bibinfo{person}{T. Toft}.}
  \bibinfo{year}{2009}\natexlab{}.
\newblock \showarticletitle{{Secure Multiparty Computation Goes Live}}. In
  \bibinfo{booktitle}{\emph{FC}}.
\newblock


\bibitem[\protect\citeauthoryear{Byali, Joseph, Patra, and Ravi}{Byali
  et~al\mbox{.}}{2018}]%
        {ByaliJPR18}
\bibfield{author}{\bibinfo{person}{M. Byali}, \bibinfo{person}{A. Joseph},
  \bibinfo{person}{A. Patra}, {and} \bibinfo{person}{D. Ravi}.}
  \bibinfo{year}{2018}\natexlab{}.
\newblock \showarticletitle{Fast Secure Computation for Small Population over
  the Internet}.
\newblock \bibinfo{journal}{\emph{{ACM CCS}}} (\bibinfo{year}{2018}).
\newblock


\bibitem[\protect\citeauthoryear{Catrina and de~Hoogh}{Catrina and
  de~Hoogh}{2010}]%
        {CatrinaH10}
\bibfield{author}{\bibinfo{person}{O. Catrina} {and} \bibinfo{person}{S. de
  Hoogh}.} \bibinfo{year}{2010}\natexlab{}.
\newblock \showarticletitle{Secure Multiparty Linear Programming Using
  Fixed-Point Arithmetic}. In \bibinfo{booktitle}{\emph{{ESORICS}}}.
\newblock


\bibitem[\protect\citeauthoryear{Chandran, Garay, Mohassel, and
  Vusirikala}{Chandran et~al\mbox{.}}{2017}]%
        {ChandranGMV17}
\bibfield{author}{\bibinfo{person}{N. Chandran}, \bibinfo{person}{J.~A. Garay},
  \bibinfo{person}{P. Mohassel}, {and} \bibinfo{person}{S. Vusirikala}.}
  \bibinfo{year}{2017}\natexlab{}.
\newblock \showarticletitle{Efficient, Constant-Round and Actively Secure
  {MPC:} Beyond the Three-Party Case}. In \bibinfo{booktitle}{\emph{{ACM}
  {CCS}}}.
\newblock


\bibitem[\protect\citeauthoryear{Chida, Genkin, Hamada, Ikarashi, Kikuchi,
  Lindell, and Nof}{Chida et~al\mbox{.}}{2018}]%
        {CGHIKLN18}
\bibfield{author}{\bibinfo{person}{K. Chida}, \bibinfo{person}{D. Genkin},
  \bibinfo{person}{K. Hamada}, \bibinfo{person}{D. Ikarashi},
  \bibinfo{person}{R. Kikuchi}, \bibinfo{person}{Y. Lindell}, {and}
  \bibinfo{person}{A. Nof}.} \bibinfo{year}{2018}\natexlab{}.
\newblock \showarticletitle{Fast Large-Scale Honest-Majority {MPC} for
  Malicious Adversaries}. In \bibinfo{booktitle}{\emph{{CRYPTO}}}.
\newblock


\bibitem[\protect\citeauthoryear{Choudhury and Patra}{Choudhury and
  Patra}{2017}]%
        {CP17}
\bibfield{author}{\bibinfo{person}{A. Choudhury} {and} \bibinfo{person}{A.
  Patra}.} \bibinfo{year}{2017}\natexlab{}.
\newblock \showarticletitle{{An Efficient Framework for Unconditionally Secure
  Multiparty Computation}}.
\newblock \bibinfo{journal}{\emph{{IEEE} Trans. Information Theory}}
  (\bibinfo{year}{2017}).
\newblock


\bibitem[\protect\citeauthoryear{Cleve}{Cleve}{1986}]%
        {Cleve86}
\bibfield{author}{\bibinfo{person}{R. Cleve}.} \bibinfo{year}{1986}\natexlab{}.
\newblock \showarticletitle{Limits on the Security of Coin Flips when Half the
  Processors Are Faulty (Extended Abstract)}. In
  \bibinfo{booktitle}{\emph{{ACM} STOC}}.
\newblock


\bibitem[\protect\citeauthoryear{Cramer, Damg{\aa}rd, Escudero, Scholl, and
  Xing}{Cramer et~al\mbox{.}}{2018}]%
        {CramerDESX18}
\bibfield{author}{\bibinfo{person}{R. Cramer}, \bibinfo{person}{I.
  Damg{\aa}rd}, \bibinfo{person}{D. Escudero}, \bibinfo{person}{P. Scholl},
  {and} \bibinfo{person}{C. Xing}.} \bibinfo{year}{2018}\natexlab{}.
\newblock \showarticletitle{{SPDZ2k: Efficient {MPC} mod 2{\^{}}k for Dishonest
  Majority}}.
\newblock \bibinfo{journal}{\emph{{CRYPTO}}} (\bibinfo{year}{2018}).
\newblock


\bibitem[\protect\citeauthoryear{Cramer, Damg{\aa}rd, and Ishai}{Cramer
  et~al\mbox{.}}{2005}]%
        {CDI05}
\bibfield{author}{\bibinfo{person}{R. Cramer}, \bibinfo{person}{I.
  Damg{\aa}rd}, {and} \bibinfo{person}{Y. Ishai}.}
  \bibinfo{year}{2005}\natexlab{}.
\newblock \showarticletitle{{Share Conversion, Pseudorandom Secret-Sharing and
  Applications to Secure Computation}}. In \bibinfo{booktitle}{\emph{TCC}}.
\newblock


\bibitem[\protect\citeauthoryear{Cryptography and at~TU~Darmstadt}{Cryptography
  and at~TU~Darmstadt}{2017}]%
        {ENCRYPTO}
\bibfield{author}{\bibinfo{person}{Cryptography} {and} \bibinfo{person}{Privacy
  Engineering~Group at TU~Darmstadt}.} \bibinfo{year}{2017}\natexlab{}.
\newblock \bibinfo{title}{{ENCRYPTO Utils}}.
\newblock
  \bibinfo{howpublished}{\url{https://github.com/encryptogroup/ENCRYPTO_utils}}.
\newblock


\bibitem[\protect\citeauthoryear{Dahl}{Dahl}{2018}]%
        {Dahl18}
\bibfield{author}{\bibinfo{person}{M. Dahl}.} \bibinfo{year}{2018}\natexlab{}.
\newblock \showarticletitle{Private Image Analysis with MPC: Training CNNs on
  Sensitive Data using SPDZ}.
\newblock  (\bibinfo{year}{2018}).
\newblock


\bibitem[\protect\citeauthoryear{Damg{\aa}rd, Keller, Larraia, Pastro, Scholl,
  and Smart}{Damg{\aa}rd et~al\mbox{.}}{2013}]%
        {SPDZ2}
\bibfield{author}{\bibinfo{person}{I. Damg{\aa}rd}, \bibinfo{person}{M.
  Keller}, \bibinfo{person}{E. Larraia}, \bibinfo{person}{V. Pastro},
  \bibinfo{person}{P. Scholl}, {and} \bibinfo{person}{N.~P. Smart}.}
  \bibinfo{year}{2013}\natexlab{}.
\newblock \showarticletitle{Practical Covertly Secure {MPC} for Dishonest
  Majority - Or: Breaking the {SPDZ} Limits}. In
  \bibinfo{booktitle}{\emph{{ESORICS}}}.
\newblock


\bibitem[\protect\citeauthoryear{Damg{\aa}rd, Orlandi, and Simkin}{Damg{\aa}rd
  et~al\mbox{.}}{2018}]%
        {DamgardOS17}
\bibfield{author}{\bibinfo{person}{I. Damg{\aa}rd}, \bibinfo{person}{C.
  Orlandi}, {and} \bibinfo{person}{M. Simkin}.}
  \bibinfo{year}{2018}\natexlab{}.
\newblock \showarticletitle{Yet Another Compiler for Active Security or:
  Efficient {MPC} Over Arbitrary Rings}.
\newblock \bibinfo{journal}{\emph{{CRYPTO}}} (\bibinfo{year}{2018}).
\newblock


\bibitem[\protect\citeauthoryear{Damg{\aa}rd, Pastro, Smart, and
  Zakarias}{Damg{\aa}rd et~al\mbox{.}}{2012}]%
        {DPSZ12}
\bibfield{author}{\bibinfo{person}{I. Damg{\aa}rd}, \bibinfo{person}{V.
  Pastro}, \bibinfo{person}{N.~P. Smart}, {and} \bibinfo{person}{S. Zakarias}.}
  \bibinfo{year}{2012}\natexlab{}.
\newblock \showarticletitle{{Multiparty Computation from Somewhat Homomorphic
  Encryption}}. In \bibinfo{booktitle}{\emph{CRYPTO}}.
\newblock


\bibitem[\protect\citeauthoryear{de~Hoogh, Schoenmakers, P.Chen, and
  Akker}{de~Hoogh et~al\mbox{.}}{2014}]%
        {HooghSCA14}
\bibfield{author}{\bibinfo{person}{S. de Hoogh}, \bibinfo{person}{B.
  Schoenmakers}, \bibinfo{person}{P.Chen}, {and} \bibinfo{person}{H. Akker}.}
  \bibinfo{year}{2014}\natexlab{}.
\newblock \showarticletitle{Practical Secure Decision Tree Learning in a
  Teletreatment Application}. In \bibinfo{booktitle}{\emph{{FC}}}.
\newblock


\bibitem[\protect\citeauthoryear{Duda, Hart, and Stork}{Duda
  et~al\mbox{.}}{2000}]%
        {DRHPSG2000}
\bibfield{author}{\bibinfo{person}{Richard~O. Duda}, \bibinfo{person}{Peter~E.
  Hart}, {and} \bibinfo{person}{David~G. Stork}.}
  \bibinfo{year}{2000}\natexlab{}.
\newblock \bibinfo{booktitle}{\emph{Pattern Classification (2nd Edition)}}.
\newblock


\bibitem[\protect\citeauthoryear{Eerikson, Orlandi, Pullonen, Puura, and
  Simkin}{Eerikson et~al\mbox{.}}{2019}]%
        {EeriksonOPPS19}
\bibfield{author}{\bibinfo{person}{H. Eerikson}, \bibinfo{person}{C. Orlandi},
  \bibinfo{person}{P. Pullonen}, \bibinfo{person}{J. Puura}, {and}
  \bibinfo{person}{M. Simkin}.} \bibinfo{year}{2019}\natexlab{}.
\newblock \showarticletitle{Use your Brain! Arithmetic 3PC For Any Modulus with
  Active Security}.
\newblock \bibinfo{journal}{\emph{{IACR} Cryptology ePrint Archive}}
  (\bibinfo{year}{2019}).
\newblock


\bibitem[\protect\citeauthoryear{Esteva, Kuprel, Novoa, Ko, Swetter, Blau, and
  Thrun}{Esteva et~al\mbox{.}}{2017}]%
        {EstevaKNKSBT17}
\bibfield{author}{\bibinfo{person}{A. Esteva}, \bibinfo{person}{B. Kuprel},
  \bibinfo{person}{R.~A. Novoa}, \bibinfo{person}{J. Ko},
  \bibinfo{person}{S.~M. Swetter}, \bibinfo{person}{H.~M. Blau}, {and}
  \bibinfo{person}{S. Thrun}.} \bibinfo{year}{2017}\natexlab{}.
\newblock \showarticletitle{Dermatologist-level classification of skin cancer
  with deep neural networks}.
\newblock \bibinfo{journal}{\emph{Nature}} (\bibinfo{year}{2017}).
\newblock


\bibitem[\protect\citeauthoryear{Furukawa, Lindell, Nof, and
  Weinstein}{Furukawa et~al\mbox{.}}{2017}]%
        {FLNW17}
\bibfield{author}{\bibinfo{person}{J. Furukawa}, \bibinfo{person}{Y. Lindell},
  \bibinfo{person}{A. Nof}, {and} \bibinfo{person}{O. Weinstein}.}
  \bibinfo{year}{2017}\natexlab{}.
\newblock \showarticletitle{{High-Throughput Secure Three-Party Computation for
  Malicious Adversaries and an Honest Majority}}. In
  \bibinfo{booktitle}{\emph{EUROCRYPT}}.
\newblock


\bibitem[\protect\citeauthoryear{Gasc{\'{o}}n, Schoppmann, Balle, Raykova,
  Doerner, Zahur, and Evans}{Gasc{\'{o}}n et~al\mbox{.}}{2016}]%
        {GasconSB0DZE16}
\bibfield{author}{\bibinfo{person}{A. Gasc{\'{o}}n}, \bibinfo{person}{P.
  Schoppmann}, \bibinfo{person}{B. Balle}, \bibinfo{person}{M. Raykova},
  \bibinfo{person}{J. Doerner}, \bibinfo{person}{S. Zahur}, {and}
  \bibinfo{person}{D. Evans}.} \bibinfo{year}{2016}\natexlab{}.
\newblock \showarticletitle{Secure Linear Regression on Vertically Partitioned
  Datasets}.
\newblock \bibinfo{journal}{\emph{{IACR} Cryptology ePrint Archive}}
  (\bibinfo{year}{2016}).
\newblock


\bibitem[\protect\citeauthoryear{Geisler}{Geisler}{2007}]%
        {Gei07}
\bibfield{author}{\bibinfo{person}{M. Geisler}.}
  \bibinfo{year}{2007}\natexlab{}.
\newblock \bibinfo{title}{Viff: Virtual ideal functionality framework}.
\newblock
\newblock


\bibitem[\protect\citeauthoryear{Goldreich, Micali, and Wigderson}{Goldreich
  et~al\mbox{.}}{1987}]%
        {GMW87}
\bibfield{author}{\bibinfo{person}{O. Goldreich}, \bibinfo{person}{S. Micali},
  {and} \bibinfo{person}{A. Wigderson}.} \bibinfo{year}{1987}\natexlab{}.
\newblock \showarticletitle{{How to Play any Mental Game or {A} Completeness
  Theorem for Protocols with Honest Majority}}. In
  \bibinfo{booktitle}{\emph{STOC}}.
\newblock


\bibitem[\protect\citeauthoryear{Gordon, Ranellucci, and Wang}{Gordon
  et~al\mbox{.}}{2018}]%
        {GordonR018}
\bibfield{author}{\bibinfo{person}{S.~D. Gordon}, \bibinfo{person}{S.
  Ranellucci}, {and} \bibinfo{person}{X. Wang}.}
  \bibinfo{year}{2018}\natexlab{}.
\newblock \showarticletitle{Secure Computation with Low Communication from
  Cross-Checking}. In \bibinfo{booktitle}{\emph{{ASIACRYPT}}}.
\newblock


\bibitem[\protect\citeauthoryear{Ishai, Kumaresan, Kushilevitz, and
  Paskin-Cherniavsky}{Ishai et~al\mbox{.}}{2015}]%
        {IshaiKKP15}
\bibfield{author}{\bibinfo{person}{Y. Ishai}, \bibinfo{person}{R. Kumaresan},
  \bibinfo{person}{E. Kushilevitz}, {and} \bibinfo{person}{A.
  Paskin-Cherniavsky}.} \bibinfo{year}{2015}\natexlab{}.
\newblock \showarticletitle{Secure Computation with Minimal Interaction,
  Revisited}. In \bibinfo{booktitle}{\emph{{CRYPTO}}}.
\newblock


\bibitem[\protect\citeauthoryear{Kamara, Mohassel, and Raykova}{Kamara
  et~al\mbox{.}}{2011}]%
        {KamaraMR11}
\bibfield{author}{\bibinfo{person}{S. Kamara}, \bibinfo{person}{P. Mohassel},
  {and} \bibinfo{person}{M. Raykova}.} \bibinfo{year}{2011}\natexlab{}.
\newblock \showarticletitle{Outsourcing Multi-Party Computation}.
\newblock \bibinfo{journal}{\emph{{IACR} Cryptology ePrint Archive}}
  (\bibinfo{year}{2011}).
\newblock


\bibitem[\protect\citeauthoryear{Katz and Lindell}{Katz and Lindell}{2014}]%
        {KL14}
\bibfield{author}{\bibinfo{person}{J. Katz} {and} \bibinfo{person}{Y.
  Lindell}.} \bibinfo{year}{2014}\natexlab{}.
\newblock \bibinfo{booktitle}{\emph{{Introduction to Modern Cryptography,
  Second Edition}}}.
\newblock \bibinfo{publisher}{{CRC} Press}.
\newblock


\bibitem[\protect\citeauthoryear{Keller, Orsini, and Scholl}{Keller
  et~al\mbox{.}}{2016}]%
        {KOS16}
\bibfield{author}{\bibinfo{person}{M. Keller}, \bibinfo{person}{E. Orsini},
  {and} \bibinfo{person}{P. Scholl}.} \bibinfo{year}{2016}\natexlab{}.
\newblock \showarticletitle{{{MASCOT:} Faster Malicious Arithmetic Secure
  Computation with Oblivious Transfer}}. In \bibinfo{booktitle}{\emph{ACM
  CCS}}.
\newblock


\bibitem[\protect\citeauthoryear{Keller, Pastro, and Rotaru}{Keller
  et~al\mbox{.}}{2018}]%
        {KellerPR18}
\bibfield{author}{\bibinfo{person}{M. Keller}, \bibinfo{person}{V. Pastro},
  {and} \bibinfo{person}{D. Rotaru}.} \bibinfo{year}{2018}\natexlab{}.
\newblock \showarticletitle{Overdrive: Making {SPDZ} Great Again}. In
  \bibinfo{booktitle}{\emph{{EUROCRYPT}}}.
\newblock


\bibitem[\protect\citeauthoryear{Keller, Scholl, and Smart}{Keller
  et~al\mbox{.}}{2013}]%
        {SPDZ3}
\bibfield{author}{\bibinfo{person}{M. Keller}, \bibinfo{person}{P. Scholl},
  {and} \bibinfo{person}{N.~P. Smart}.} \bibinfo{year}{2013}\natexlab{}.
\newblock \showarticletitle{An architecture for practical actively secure {MPC}
  with dishonest majority}. In \bibinfo{booktitle}{\emph{{ACM CCS}}}.
\newblock


\bibitem[\protect\citeauthoryear{Launchbury, Archer, DuBuisson, and
  Mertens}{Launchbury et~al\mbox{.}}{2014}]%
        {LaunchburyADM14}
\bibfield{author}{\bibinfo{person}{J. Launchbury}, \bibinfo{person}{D. Archer},
  \bibinfo{person}{T. DuBuisson}, {and} \bibinfo{person}{E. Mertens}.}
  \bibinfo{year}{2014}\natexlab{}.
\newblock \showarticletitle{Application-Scale Secure Multiparty Computation}.
  In \bibinfo{booktitle}{\emph{{ESOP}}}.
\newblock


\bibitem[\protect\citeauthoryear{Laur, Lipmaa, and Mielik{\"{a}}inen}{Laur
  et~al\mbox{.}}{2006}]%
        {LaurLM06}
\bibfield{author}{\bibinfo{person}{S. Laur}, \bibinfo{person}{H. Lipmaa}, {and}
  \bibinfo{person}{T. Mielik{\"{a}}inen}.} \bibinfo{year}{2006}\natexlab{}.
\newblock \showarticletitle{Cryptographically private support vector machines}.
  In \bibinfo{booktitle}{\emph{{ACM} {SIGKDD}}}.
\newblock


\bibitem[\protect\citeauthoryear{LeCun and Cortes}{LeCun and Cortes}{2010}]%
        {MNIST10}
\bibfield{author}{\bibinfo{person}{Yann LeCun} {and} \bibinfo{person}{Corinna
  Cortes}.} \bibinfo{year}{2010}\natexlab{}.
\newblock \showarticletitle{{MNIST} handwritten digit database}.
\newblock  (\bibinfo{year}{2010}).
\newblock
\urldef\tempurl%
\url{http://yann.lecun.com/exdb/mnist/}
\showURL{%
\tempurl}


\bibitem[\protect\citeauthoryear{Lindell and Nof}{Lindell and Nof}{2017}]%
        {LN17}
\bibfield{author}{\bibinfo{person}{Y. Lindell} {and} \bibinfo{person}{A. Nof}.}
  \bibinfo{year}{2017}\natexlab{}.
\newblock \showarticletitle{{A Framework for Constructing Fast {MPC} over
  Arithmetic Circuits with Malicious Adversaries and an Honest-Majority}}. In
  \bibinfo{booktitle}{\emph{ACM CCS}}.
\newblock


\bibitem[\protect\citeauthoryear{Liu, Juuti, L., and Asokan}{Liu
  et~al\mbox{.}}{2017}]%
        {LiuJLA17}
\bibfield{author}{\bibinfo{person}{J. Liu}, \bibinfo{person}{M. Juuti},
  \bibinfo{person}{Y. L.}, {and} \bibinfo{person}{N. Asokan}.}
  \bibinfo{year}{2017}\natexlab{}.
\newblock \showarticletitle{Oblivious Neural Network Predictions via MiniONN
  Transformations}. In \bibinfo{booktitle}{\emph{{ACM} {CCS}}}.
\newblock


\bibitem[\protect\citeauthoryear{Makri, Rotaru, Smart, and Vercauteren}{Makri
  et~al\mbox{.}}{2018}]%
        {MakriRSV17}
\bibfield{author}{\bibinfo{person}{E. Makri}, \bibinfo{person}{D. Rotaru},
  \bibinfo{person}{N.~P. Smart}, {and} \bibinfo{person}{F. Vercauteren}.}
  \bibinfo{year}{2018}\natexlab{}.
\newblock \showarticletitle{{EPIC:} Efficient Private Image Classification (or:
  Learning from the Masters)}.
\newblock \bibinfo{journal}{\emph{CT-RSA}} (\bibinfo{year}{2018}).
\newblock


\bibitem[\protect\citeauthoryear{Mohassel and Rindal}{Mohassel and
  Rindal}{2018}]%
        {MR18}
\bibfield{author}{\bibinfo{person}{P. Mohassel} {and} \bibinfo{person}{P.
  Rindal}.} \bibinfo{year}{2018}\natexlab{}.
\newblock \showarticletitle{{ABY\({}^{\mbox{3}}\): {A} Mixed Protocol Framework
  for Machine Learning}}. In \bibinfo{booktitle}{\emph{{ACM} {CCS}}}.
\newblock


\bibitem[\protect\citeauthoryear{Mohassel, Rosulek, and Zhang}{Mohassel
  et~al\mbox{.}}{2015}]%
        {MRZ15}
\bibfield{author}{\bibinfo{person}{P. Mohassel}, \bibinfo{person}{M. Rosulek},
  {and} \bibinfo{person}{Y. Zhang}.} \bibinfo{year}{2015}\natexlab{}.
\newblock \showarticletitle{{Fast and Secure Three-party Computation: Garbled
  Circuit Approach}}. In \bibinfo{booktitle}{\emph{CCS}}.
\newblock


\bibitem[\protect\citeauthoryear{Mohassel and Zhang}{Mohassel and
  Zhang}{2017}]%
        {MohasselZ17}
\bibfield{author}{\bibinfo{person}{P. Mohassel} {and} \bibinfo{person}{Y.
  Zhang}.} \bibinfo{year}{2017}\natexlab{}.
\newblock \showarticletitle{SecureML: {A} System for Scalable
  Privacy-Preserving Machine Learning}. In \bibinfo{booktitle}{\emph{{IEEE}
  S{\&}P}}.
\newblock


\bibitem[\protect\citeauthoryear{Nikolaenko, Ioannidis, Weinsberg, Joye, Taft,
  and Boneh}{Nikolaenko et~al\mbox{.}}{2013a}]%
        {NikolaenkoIWJTB13}
\bibfield{author}{\bibinfo{person}{V. Nikolaenko}, \bibinfo{person}{S.
  Ioannidis}, \bibinfo{person}{U. Weinsberg}, \bibinfo{person}{M. Joye},
  \bibinfo{person}{N. Taft}, {and} \bibinfo{person}{D. Boneh}.}
  \bibinfo{year}{2013}\natexlab{a}.
\newblock \showarticletitle{Privacy-preserving matrix factorization}. In
  \bibinfo{booktitle}{\emph{{ACM} CCS}}.
\newblock


\bibitem[\protect\citeauthoryear{Nikolaenko, Weinsberg, Ioannidis, Joye, Boneh,
  and Taft}{Nikolaenko et~al\mbox{.}}{2013b}]%
        {NikolaenkoWIJBT13}
\bibfield{author}{\bibinfo{person}{V. Nikolaenko}, \bibinfo{person}{U.
  Weinsberg}, \bibinfo{person}{S. Ioannidis}, \bibinfo{person}{M. Joye},
  \bibinfo{person}{D. Boneh}, {and} \bibinfo{person}{N. Taft}.}
  \bibinfo{year}{2013}\natexlab{b}.
\newblock \showarticletitle{Privacy-Preserving Ridge Regression on Hundreds of
  Millions of Records}. In \bibinfo{booktitle}{\emph{{IEEE} S{\&}P}}.
\newblock


\bibitem[\protect\citeauthoryear{Nordholt and Veeningen}{Nordholt and
  Veeningen}{2018}]%
        {NV18}
\bibfield{author}{\bibinfo{person}{P.~S. Nordholt} {and} \bibinfo{person}{M.
  Veeningen}.} \bibinfo{year}{2018}\natexlab{}.
\newblock \showarticletitle{{Minimising Communication in Honest-Majority {MPC}
  by Batchwise Multiplication Verification}}. In
  \bibinfo{booktitle}{\emph{ACNS}}.
\newblock


\bibitem[\protect\citeauthoryear{Orekondy, Schiele, and Fritz}{Orekondy
  et~al\mbox{.}}{2018}]%
        {KNOCKOFF18}
\bibfield{author}{\bibinfo{person}{T. Orekondy}, \bibinfo{person}{B. Schiele},
  {and} \bibinfo{person}{M. Fritz}.} \bibinfo{year}{2018}\natexlab{}.
\newblock \showarticletitle{Knockoff Nets: Stealing Functionality of Black-Box
  Models}.
\newblock \bibinfo{journal}{\emph{CoRR}} (\bibinfo{year}{2018}).
\newblock


\bibitem[\protect\citeauthoryear{Papernot, McDaniel, Goodfellow, Jha, Celik,
  and Swami}{Papernot et~al\mbox{.}}{2017}]%
        {Papernot17}
\bibfield{author}{\bibinfo{person}{N. Papernot}, \bibinfo{person}{P. McDaniel},
  \bibinfo{person}{I. Goodfellow}, \bibinfo{person}{S. Jha},
  \bibinfo{person}{Z.~B. Celik}, {and} \bibinfo{person}{A. Swami}.}
  \bibinfo{year}{2017}\natexlab{}.
\newblock \showarticletitle{Practical Black-Box Attacks Against Machine
  Learning}. In \bibinfo{booktitle}{\emph{{ASIA} {CCS}}}.
\newblock


\bibitem[\protect\citeauthoryear{Patra and Ravi}{Patra and Ravi}{2018}]%
        {PatraR18}
\bibfield{author}{\bibinfo{person}{A. Patra} {and} \bibinfo{person}{D. Ravi}.}
  \bibinfo{year}{2018}\natexlab{}.
\newblock \showarticletitle{On the Exact Round Complexity of Secure Three-Party
  Computation}.
\newblock \bibinfo{journal}{\emph{{CRYPTO}}} (\bibinfo{year}{2018}).
\newblock


\bibitem[\protect\citeauthoryear{Riazi, Weinert, Tkachenko, Songhori,
  Schneider, and Koushanfar}{Riazi et~al\mbox{.}}{2018}]%
        {RiaziWTS0K18}
\bibfield{author}{\bibinfo{person}{M.~S. Riazi}, \bibinfo{person}{C. Weinert},
  \bibinfo{person}{O. Tkachenko}, \bibinfo{person}{E.~M. Songhori},
  \bibinfo{person}{T. Schneider}, {and} \bibinfo{person}{F. Koushanfar}.}
  \bibinfo{year}{2018}\natexlab{}.
\newblock \showarticletitle{Chameleon: {A} Hybrid Secure Computation Framework
  for Machine Learning Applications}. In \bibinfo{booktitle}{\emph{{AsiaCCS}}}.
\newblock


\bibitem[\protect\citeauthoryear{Schroff, Kalenichenko, and Philbin}{Schroff
  et~al\mbox{.}}{2015}]%
        {SchroffKP15}
\bibfield{author}{\bibinfo{person}{F. Schroff}, \bibinfo{person}{D.
  Kalenichenko}, {and} \bibinfo{person}{J. Philbin}.}
  \bibinfo{year}{2015}\natexlab{}.
\newblock \showarticletitle{FaceNet: {A} unified embedding for face recognition
  and clustering}. In \bibinfo{booktitle}{\emph{{IEEE} {CVPR}}}.
\newblock


\bibitem[\protect\citeauthoryear{Smart and Wood}{Smart and Wood}{2019}]%
        {SmartW19}
\bibfield{author}{\bibinfo{person}{N.~P. Smart} {and} \bibinfo{person}{T.
  Wood}.} \bibinfo{year}{2019}\natexlab{}.
\newblock \showarticletitle{Error Detection in Monotone Span Programs with
  Application to Communication-Efficient Multi-party Computation}. In
  \bibinfo{booktitle}{\emph{{CT-RSA}}}.
\newblock


\bibitem[\protect\citeauthoryear{Tram{\`{e}}r, Zhang, Juels, Reiter, and
  Ristenpart}{Tram{\`{e}}r et~al\mbox{.}}{2016}]%
        {TramerZJRR16}
\bibfield{author}{\bibinfo{person}{F. Tram{\`{e}}r}, \bibinfo{person}{F.
  Zhang}, \bibinfo{person}{A. Juels}, \bibinfo{person}{M.~K. Reiter}, {and}
  \bibinfo{person}{T. Ristenpart}.} \bibinfo{year}{2016}\natexlab{}.
\newblock \showarticletitle{Stealing Machine Learning Models via Prediction
  APIs}. In \bibinfo{booktitle}{\emph{{USENIX}}}.
\newblock


\bibitem[\protect\citeauthoryear{Wagh, Gupta, and Chandran}{Wagh
  et~al\mbox{.}}{[n.d.]}]%
        {WaghGC18}
\bibfield{author}{\bibinfo{person}{S. Wagh}, \bibinfo{person}{D. Gupta}, {and}
  \bibinfo{person}{N. Chandran}.} \bibinfo{year}{[n.d.]}\natexlab{}.
\newblock \showarticletitle{SecureNN: Efficient and Private Neural Network
  Training}.
\newblock \bibinfo{journal}{\emph{{IACR} Cryptology ePrint Archive}}
  \bibinfo{volume}{2018} (\bibinfo{year}{[n.\,d.]}).
\newblock


\bibitem[\protect\citeauthoryear{Yao}{Yao}{1982}]%
        {Yao82}
\bibfield{author}{\bibinfo{person}{A.~C. Yao}.}
  \bibinfo{year}{1982}\natexlab{}.
\newblock \showarticletitle{{Protocols for Secure Computations}}. In
  \bibinfo{booktitle}{\emph{FOCS}}.
\newblock


\end{thebibliography}
\appendix
%------
%\clearpage
\section{Building blocks for malicious security}
\label{app:3PCAppendix}
%\input{Appendix_3PC}
%----------------------------------------------------------------
\subsection{Instantiating $\FTRIPLES$}
\label{app:FTRIPLES}
%----------------------------------------------------------------
Here, we present a protocol $\PiTripGen$ (\boxref{fig:TripGenII}) that  instantiate functionality  $\FTRIPLES$ over $\Z{\ell}$, inspired by the works of \cite{FLNW17,ABFLLNOWW17}. The techniques of \cite{FLNW17,ABFLLNOWW17} work for any underlying linear secret-sharing scheme.  We avoid the detailed security proof for $\PiTripGen$, which can be easily derived from \cite{FLNW17,ABFLLNOWW17}. We begin with a sub-protocol $\PiRand$, used in $\PiTripGen$. Protocol $\PiRand$ allows the parties to generate a random and private $\shrd$-shared value $\val$. Towards this, parties $P_0,P_1$ locally sample $\PadA{\val}$, $P_0,P_2$ sample $\PadB{\val}$ while parties $P_1,P_2$ sample $\Mask{\val}$. The value $\val$ is defined as $\val = \Mask{\val} - \PadA{\val} - \PadB{\val}$.

%-------------
\begin{myprotocolbox}{}{Protocol to generate $N$ random and private $\shrd$-shared multiplication triples}{fig:TripGenII}
	%-----
	\justify 
	If a party obtains $\bot$ during any stage of the protocol or did not receive an expected message, then it outputs $\bot$ and $\abort$.
	%----	
	\begin{myitemize}
		%--
		\item[--] {\bf Generating Multiplication Triples Optimistically}: Let $M = BN+C$. The parties execute $2M$ instances of $\PiRand$ to generate $\{(\shr{\wmd_k}, \newline \shr{\wme_k}) \}_{k = 1, \ldots, M}$. For $k = 1, \ldots, BN + C$, the parties execute $\PiMultS$ on $\shr{\wmd_k}$ and $\shr{\wme_k}$ to obtain $\shr{\wmf_k}$. Let $\vec{D} = [(\shr{\wmd_k}, \shr{\wme_k}, \shr{\wmf_k})]_{k = 1, \ldots, M}$.
		%---		
		\item[--] {\bf Cut and Bucket}: Here the parties perform the first verification by opening
		$C$ triples, and then randomly divide the remainder into buckets as follows.
		%---- 
		\begin{myitemize}
			%-----
			\item[--] The parties generate a random permutation $\pi$ over $\{1, \ldots, BN+C \}$ and permute the elements of $\vec{D}$ according to $\pi$.
			%----
			\item[--] The parties publicly reconstruct each of the first $C$ triples in $\vec{D}$
			(by executing $\PiRecS(\shrd, \Partyset)$ and output $\bot$, if any of these $C$ triples is not a multiplication triple.
			%--- 
			\item[--] The remaining $BN$  triples in $\vec{D}$ are arranged into buckets $B_1, \ldots,\newline B_N$, each containing $B$ triples. 
			%---
		\end{myitemize}
		%-----------	
		\item[--] {\bf Check Buckets}: The parties initialize a vector $\vec{d}$ of length $N$. Then, for $k = 1, \ldots, N$, the parties do the following:
		%------		
		\begin{myitemize}	 
			%------
			\item[--] Let $\{ (\shr{\wmd_{k, j}}, \shr{\wme_{k, j}}, \shr{\wmf_{k, j}} ) \}_{j = 1, \ldots, B}$ denote the $B$ shared triples in the bucket $B_k$.
			%---- 
			\item[--] For $j = 2, \ldots, B$, the parties execute $\PiTripCheck$ on $( \shr{\wmd_{k, 1}},   \allowbreak  \shr{\wme_{k, 1}}, \shr{\wmf_{k, 1}})$ and $( \shr{\wmd_{k, j}}, \shr{\wme_{k, j}},   \shr{\wmf_{k, j}})$.
			%---
			\item[--] The parties set $( \shr{\wmd_{k, 1}}, \shr{\wme_{k, 1}}, \shr{\wmf_{k, 1}})$ as the $k$th entry of $\vec{d}$.
			%----
		\end{myitemize}
		%------    
	\end{myitemize}
	%-----
	The parties output $\vec{d}$.
\end{myprotocolbox}
%----

Following the technique of  \cite{FLNW17},  protocol $\PiTripGen$ generates  $N$ independent $\shrd$-shared random and private multiplication triplets over $\Z{\ell}$ at one go. Informally, the parties first optimistically generate $BN+C$ shared random triples, followed by deploying the cut-and-choose technique. Namely $C$ triples from the set of $BN+C$ triples are randomly selected and opened to check if they are multiplication triples. The remaining $BN$  triples are randomly grouped into $N$ buckets, each containing $B$ triples. In each bucket, parties check if the first triple is a multiplication triple without opening it using the protocol $\PiTripCheck$ (\boxref{fig:PiCheck}), by deploying the remaining $B - 1$ triples in the bucket, one by one. If any of these verifications fail, then the parties abort, else they consider the first  triple in each of the $N$ buckets as the final output. Following \cite{FLNW17}, it follows that except with an error probability of at most $\frac{1}{N^{B-1}}$, if any of the $N$ output triplets is not a multiplication triplet, then the honest parties abort the protocol. 

In their follow-up work \cite{ABFLLNOWW17}, the authors have shown how to reduce the error probability of cut-and-choose technique  from $\frac{1}{N^{B-1}}$ to $\frac{1}{N^B}$, thus reducing the bucket size $B$ to $\frac{s}{\log_2{N}}$ to attain a statistical-security of $2^{-s}$. The idea behind their improvement is as follows: if the array of multiplication triples from the offline phase is randomly shuffled {\it after} all multiplication gates are evaluated (optimistically), then adversary can successfully cheat only if the random shuffle happens to match correct triples with correctly evaluated multiplication gates and incorrect triples with incorrectly evaluated multiplication gates. 

We observe that the above modification is applicable in our context as well. Following \cite{ABFLLNOWW17}, the parties can postpone verification of offline step of all the instances of $\PiMultM$. Once the offline step of all the instances of $\PiMultM$ corresponding to all the multiplication gates in the circuit is executed, the parties can randomly shuffle the set of triples. The parties can then use the $i$th triple from the reshuffled set to perform the pending verification corresponding to the offline step of the $i$th instance of $\PiMultM$. Notice that unlike \cite{ABFLLNOWW17}, in our context, the reshuffling of the set of triples happens in the offline phase itself. Excluding the cost of generating the random permutation $\pi$ in the protocol of \boxref{fig:TripGenII}, the amortized cost of generating a single multiplication triple will be as follows: there will be $2B$ instances of $\PiRand$ followed by $B$ instances of $\PiMultS$, followed by $B-1$ instances of $\PiTripCheck$. 

%----------------------------------------------------------------
\subsection{Properties of $\PiTripCheck$}
\label{app:PiTripCheck}
%----------------------------------------------------------------
%-------------
\begin{lemma}[Correctness \cite{CP17,FLNW17}]
	\label{lemma:TripCheck}
	%----
	Let $(\shr{\wmd}, \shr{\wme}, \shr{\wmf})$ be $\shrd$ sharing of random and private values $\wmd, \wme$ and $\wmf$, such that $\wmf = \wmd \wme$. Moreover, let $(\shr{\wma}, \shr{\wmb}, \shr{\wmc})$ be $\shrd$ sharing of $\wma, \wmb$ and $\wmc$, such that $\wmc = \wma \wmb + \Delta$, where $\Delta \in \Z{\ell}$. Then the following hold in $\PiTripCheck$:  If $\Delta \neq 0$, then every honest $P_i$ outputs $\bot$.
\end{lemma}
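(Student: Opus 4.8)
Here is how I would prove Lemma~\ref{lemma:TripCheck}. The plan is to split the argument into a purely algebraic part and a robustness part. For the algebraic part, I would first note that the reconstructed values satisfy $\rho = \wma - \wmd$ and $\sigma = \wmb - \wme$, then substitute these into the locally computed sharing $\shr{\tau} = \shr{\wmc} - \shr{\wmf} - \sigma\shr{\wmd} - \rho\shr{\wme} - \sigma\rho$, use $\wmf = \wmd\wme$, and expand. Using commutativity of $\Z{\ell}$, every cross term involving $\wmd$ or $\wme$ cancels: the $\wmd\wme$ terms, the $\wmb\wmd$ terms and the $\wma\wme$ terms each sum to zero, leaving $\tau = \wmc - \wma\wmb$. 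Since by hypothesis $\wmc = \wma\wmb + \Delta$, this yields $\tau = \Delta$. Hence, whenever all three reconstructions inside $\PiTripCheck$ succeed and return the true values, every honest party computes $\tau = \Delta \ne 0$ and outputs $\bot$ by the final check. This computation is routine; the only ring property used is commutativity.

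The second part handles malicious behaviour. The point to stress is that $\PiTripCheck$ has exactly three interactive steps, namely the invocations $\PiRecM(\shr{\rho},\Partyset)$, $\PiRecM(\shr{\sigma},\Partyset)$ and $\PiRecM(\shr{\tau},\Partyset)$; the sharings $\shr{\rho}$, $\shr{\sigma}$ and $\shr{\tau}$ are obtained non-interactively through the linearity of $\shrd$-sharing, so a corrupt party cannot introduce any error there beyond whatever is already absorbed in $\Delta$ (and in particular, since the inputs are assumed to be genuine $\shrd$-sharings, the honest parties' common components agree, so each locally computed sharing is again a consistent $\shrd$-sharing). For each $\PiRecM$ call I would invoke its robustness: with at most one corruption, the missing share delivered to any honest party is sent, once in the clear and once as a hash, by the two other parties, at least one of which is honest; by collision resistance of $\Hash$, each honest party therefore either reconstructs the correct value or aborts with output $\bot$. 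Consequently, if no honest party has aborted by the time $\shr{\tau}$ is reconstructed, all honest parties hold the consistent $\shrd$-sharing of $\tau = \Delta$, and $\PiRecM(\shr{\tau},\Partyset)$ either makes some honest party abort or returns $\tau = \Delta \ne 0$, in which case that party outputs $\bot$.

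The step I expect to be the main obstacle is arguing that \emph{every} honest party (not merely one) outputs $\bot$, since a corrupt party can cause one honest party to abort in an early $\PiRecM$ while feeding the other a consistent view. I would close this by appealing to the standard abort-propagation convention — made explicit in $\PiTripGen$ — that a party which fails to receive an expected message also outputs $\bot$ and $\abort$s: once one honest party aborts it stops sending, so the other honest party, missing an expected message in a subsequent $\PiRecM$, also outputs $\bot$. Combining the two cases (some honest party aborts during a reconstruction, propagating to all; or all reconstructions succeed and $\tau = \Delta \ne 0$ triggers the final check), every honest party outputs $\bot$, as claimed.
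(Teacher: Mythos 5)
Your proof is correct and follows essentially the same route as the paper: the identical algebraic expansion showing $\tau = \wmc - \wma\wmb = \Delta$, combined with the reconstruction guarantees of $\PiRecM$. The paper's proof is just terser—it notes that $\PiRecM$ prevents honest parties from accepting differing non-$\bot$ values and calls the rest straightforward—and your abort-propagation concern is not strictly needed, since any honest party that never aborts reconstructs the true $\tau = \Delta \neq 0$ and thus outputs $\bot$ in either case.
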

%-----
\begin{proof}
	%---
	In $\PiTripCheck$, during the reconstruction of $\rho$, $\sigma$ and $\tau$, protocol $\PiRecM$ ensures that no two honest parties output two different non-$\bot$ values. Now, in order to show the correctness, it suffices to show that $\tau = 0$ iff $\Delta = 0$. Note that,
	\vspace{-2mm}
	\begin{align*}
		\tau &= \wmc - \wmf - \sigma \wmd - \rho \wme - \sigma \rho\\
		&= \wmc - \wmd \wme - (\wmb-\wme) \wmd - (\wma-\wmd) \wme - (\wmb-\wme)(\wma-\wmd)\\
		&= \wmc - \wma \wmb = \Delta
	\end{align*}
	It is straightforward from the protocol step that every honest party outputs $\bot$ if $\Delta \ne 0$.
	%---
\end{proof}
%----------   
The privacy of $\PiTripCheck$ requires it to maintain the privacy of $\wma, \wmb$ and $\wmc$. Note that the values $\rho$ and $\sigma$ reveal nothing about $\wma$ and $\wmb$, as $\wmd, \wme$ are random and private. The privacy now follows since $\tau = \Delta$ and independent of $\wma, \wmb$ and $\wmc$.
%----
\section{3PC with semi-honest security}
\label{app:3PCSemi}
Here we prove that $\PiSemi$ securely realizes the standard ideal-world functionality $\FTHREEMPC$ (\boxref{fig:FTHREEMPC}) for securely evaluating any arithmetic circuit over $\Z{\ell}$. Our proof works in the $\FSETUP$-hybrid model.  
%-----------------
\begin{mysystembox}{}{Functionality $\FSETUP$ (semi-honest)}{fig:FSETUPSemi}
	$\FSETUP$ interacts with the parties in $\Partyset$ and the adversary $\Sim$ who may corrupt one of the parties.
	%---
	\justify $\FSETUP$ picks random keys $k_{01}, k_{02}, k_{12}, k_{\Partyset} \in \{0,1\}^{\csec}$ and sends $(k_{01}, k_{02})$, $(k_{01}, k_{12})$ and $(k_{02}, k_{12})$ to $P_0, P_1$ and $P_2$ respectively. In addition, $\FSETUP$ sends $k_{\Partyset}$ to all the parties.
	%---
\end{mysystembox}
%---------
We first consider the simple case, when $P_0$ is corrupted. Intuitively, the security follows from the fact, that $P_0$ does not see the messages exchanged between $P_1, P_2$ during the online phase, who actually perform the circuit-evaluation. So in essence, this is equivalent to $P_1, P_2$ using the preprocessing done by a trusted third party to do the circuit-evaluation (in the semi-honest setting, even a corrupt $P_0$ will do the pre-processing honestly).
%------------
\begin{theorem}
	\label{lemma:PiSemiCaseI}
	%-----------
	Protocol $\PiSemi$ securely realizes the functionality $\FTHREEMPC$ against a static, semi-honest adversary $\Adv$ in the $\FSETUP$-hybrid model, who corrupts $P_0$.
	%----
\end{theorem}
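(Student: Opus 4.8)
The plan is to construct a simulator $\Sim$ that, interacting with the ideal functionality $\FTHREEMPC$ and running the adversary $\Adv$ (who controls $P_0$) as a black box, produces a transcript distributed exactly as $\Adv$'s real-world view, while extracting $P_0$'s effective input and submitting it to $\FTHREEMPC$. The structural fact that makes this case easy is that a corrupt $P_0$ never \emph{receives} any message during the offline phase (in $\PiMultS$ it only \emph{sends} the shares $\GammaxyB$, and every $\pad$-value is sampled non-interactively) and never receives any message during the online circuit-evaluation phase (where $\PiAdd$ is local and the online step of $\PiMultS$ involves only $P_1$ and $P_2$). Consequently, apart from the keys handed out by $\FSETUP$, the only incoming messages to $P_0$ are the masked outputs $\Mask{\wy_j}$ delivered during $\PiRecS$.

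First I would let $\Sim$ emulate $\FSETUP$ faithfully, sampling all four PRF keys itself and giving $(k_{01},k_{02})$ to $\Adv$; this in particular lets $\Sim$ recompute \emph{every} $\pad$-value occurring in the execution, since each one is either PRF-derived under a key $\Sim$ knows (input wires, multiplication-gate outputs) or a deterministic sum of such values (addition gates). Then $\Sim$ runs the honest parties $P_1,P_2$ through the offline phase and the online input-sharing step, absorbing the $\GammaxyB$-messages from $\Adv$. For each input wire owned by $P_0$, upon receiving $\Mask{\wx}$ from $\Adv$ the simulator sets $\wx := \Mask{\wx}-\Pad{\wx}$ and submits the collected values as $P_0$'s inputs to $\FTHREEMPC$, obtaining $(\wy_1,\dots,\wy_{\OS})$. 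Finally, during output reconstruction $\Sim$ sends, on behalf of whichever honest evaluator is designated to supply $P_0$'s missing share, the value $\Mask{\wy_j} := \wy_j+\Pad{\wy_j}$ for every $j \in \{1,\dots,\OS\}$; all other online steps and the remainder of $\PiRecS$ carry no message to $P_0$ and need not be simulated.

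Indistinguishability then follows by direct comparison. By the correctness of $\PiSemi$ proved above, in a real run the value on each wire is $\shrd$-shared, so in particular the opened mask on the $j$-th output wire equals $\wy_j+\Pad{\wy_j}$, where $\wy_j$ is the circuit output on the parties' actual inputs; since a semi-honest $P_0$ transmits $\Mask{\wx}=\wx+\Pad{\wx}$ honestly, $\Sim$ extracts $P_0$'s genuine input, so the $\wy_j$ returned by $\FTHREEMPC$ matches the real one, and the pads coincide because the PRF keys coincide. Hence the simulated and real views of $\Adv$ are identically distributed in the $\FSETUP$-hybrid model, which proves the claim. The only subtlety---and the closest thing to an obstacle---is confirming that knowing every $\pad$-value together with the shares $\GammaxyB$ it helped create leaks nothing to $P_0$ about the inputs: this rests on the observation that $P_0$ never sees a single $\Mask{\cdot}$ value on an intermediate wire, so the pads remain information-theoretically independent of the secrets right up to the one output-determined opening, which is precisely the value $\Sim$ is able to reproduce.
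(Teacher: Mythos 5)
Your proposal is correct and follows essentially the same route as the paper: the simulator emulates $\FSETUP$ (hence knows every $\pad$), absorbs $P_0$'s offline messages, and reproduces the only incoming messages to $P_0$ — the output openings — as $\wy_j+\Pad{\wy_j}$, giving a perfectly matching view. The only cosmetic difference is that you extract $P_0$'s input from $\Mask{\wx}$, whereas the paper's semi-honest simulator is simply handed $P_0$'s input and the function output; since a semi-honest $P_0$ masks its true input, the two coincide.
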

%-----
\begin{proof}
	%-----
	Let $\Adv$ be a real-world semi-honest adversary corrupting the distributor $P_0$ during the protocol $\PiSemi$. We present an ideal-world adversary (simulator) $\SPiSemi$ for $\Adv$ in \boxref{fig:SPiSemiCaseI} that simulates messages for corrupt $P_0$. The only communication to $P_0$ is during the output-reconstruction stage in the online phase. $\SPiSemi$ can easily simulate these messages, with the knowledge of function output and the masks corresponding to the circuit-output wires.
	%--------------------------
	\begin{mysimulatorbox}{}{Simulator $\SPiSemi$ for the case of corrupt $P_0$}{fig:SPiSemiCaseI}
		%----------
		\justify The simulator plays the role of the honest parties $P_1, P_2$ and  simulates  each step of $\PiSemi$ to corrupt $P_0$ as follows and finally outputs $\Adv$'s output. 
		%
		%----------------
		\justify \algoHead{Offline Phase:} $\SPiSemi$ emulates $\FSETUP$ and gives $k_{01}$, $k_{02}$ and $k_{\Partyset}$ to $P_0$. In addition, $\SPiSemi$ on behalf of $P_2$ receives $\GammaxyB$ from $\Adv$ for every multiplication gate $\gate = (\Wxyz)$. From these, it learns the $\pad$-masks for all the wires in $\ckt$. 
		
		%----
		\justify \algoHead{Online Phase:} On input $\{\wx_i\}$'s, the inputs of corrupt $P_0$ and  the function output $(\wy_1, \ldots, \wy_{\OS})$, $\SPiSemi$  simulates the output-reconstruction stage to $\Adv$ as follows. For every $\wy_j$, it computes $\Mask{{\wy}_j} = \wy_j + \Pad{{\wy}_j}$ and sends it to $\Adv$, on the behalf of $P_1$. Here $\Pad{{\wy}_j}$ is the mask corresponding to the output $\wy_j$ which $\SPiSemi$ can compute since he learns the entire $\pad$-masks during the offline phase.
		%----------
	\end{mysimulatorbox}
	%----------
	
	The proof now simply follows from the fact that simulated view and real-world view of the adversary are computationally indistinguishable.
	%-----
\end{proof}
%----------

We next consider the case, when the adversary corrupts one of the evaluators. Without loss of generality, we consider the case of a corrupt $P_1$ and the case of a corrupt $P_2$ is handled symmetrically. Intuitively, the security, in this case, follows from the fact that each $\pad$-mask is random (from the properties of the underlying PRF) and the one share that is learned by corrupt $P_1$ for each mask leaks nothing about them  and hence the masked values  reveal no additional information about the actual values over the wires.

%------ 
\begin{theorem}
	\label{lemma:PiSemiCaseII}
	%-----------
	Protocol $\PiSemi$ securely realizes the functionality $\FTHREEMPC$ against a static, semi-honest adversary $\Adv$ in the $\FSETUP$-hybrid model, who corrupts $P_1$ (and similarly $P_2$).
	%----
\end{theorem}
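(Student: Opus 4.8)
The plan is to construct a simulator $\SPiSemi$ that plays the roles of the honest parties $P_0$ and $P_2$, extracts the corrupt $P_1$'s input from the single message $P_1$ sends during its input-sharing step, forwards it to $\FTHREEMPC$, and then simulates the remaining protocol messages without knowledge of the honest parties' inputs. First I would have $\SPiSemi$ emulate $\FSETUP$, handing $P_1$ the keys $(k_{01}, k_{12}, k_{\Partyset})$; note that $\SPiSemi$ thereby knows every mask-share that $P_1$ is entitled to, but crucially does \emph{not} fix the shares of $P_2$-only randomness. For each circuit-input wire owned by $P_1$, the simulator receives $\Mask{\wx}$ from $\Adv$, reconstructs $\Pad{\wx}$ from the shared randomness (since for a $P_1$-input all parties sample $\PadB{\wx}$ and $P_0, P_1$ sample $\PadA{\wx}$, so $\SPiSemi$ knows both), and extracts $\wx = \Mask{\wx} - \Pad{\wx}$. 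For input wires owned by $P_0$ or $P_2$, $\SPiSemi$ simply samples $\Mask{\wx}$ uniformly at random and sends it to $\Adv$ on behalf of the owner --- this is distributed identically to the real world because $\Pad{\wx}$ contains a summand ($\PadB{\wx}$ for a $P_0$-input, $\PadA{\wx}$ for a $P_2$-input) that is unknown to $P_1$ and uniformly random.

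Next I would handle the circuit-evaluation stage. Addition gates are non-interactive, so nothing is simulated. For each multiplication gate, in the offline phase $\SPiSemi$ on behalf of $P_0$ sends $\GammaxyA$ to $P_1$ (sampled via $k_{01}$, exactly as the real $P_0$ would); this is perfectly simulated. In the online phase, $P_2$ must send its share $\sqr{\Mask{\wz}}_{P_2}$ to $P_1$. The simulator does not know the true $\Mask{\wz}$, but it does know what $P_1$ will locally compute for $\sqr{\Mask{\wz}}_{P_1}$, and it knows (from emulating $\FSETUP$ and from the extracted masked inputs that propagate down the circuit) a consistent value $\Mask{\wz}$ to target for internal wires --- actually the cleanest route is to observe that $P_2$'s message $\sqr{\Mask{\wz}}_{P_2}$ looks uniformly random to $P_1$, since $\sqr{\Pad{\wz}}_{P_2} = \PadB{\wz}$ is sampled by $P_0, P_2$ only and masks it. So $\SPiSemi$ can just send a uniformly random ring element as $P_2$'s share for every multiplication gate, and the induced $\Mask{\wz}$ values will be uniformly random, matching the real distribution where each $\Pad{\wz}$ is a fresh PRF output hidden from $P_1$. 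Finally, for output reconstruction $\SPiSemi$ receives the true outputs $(\wy_1, \ldots, \wy_{\OS})$ from $\FTHREEMPC$ and must send $P_1$ its missing share $\PadB{\wy_j}$; it sets $\PadB{\wy_j} = \Mask{\wy_j} - \PadA{\wy_j} - \wy_j$ where $\Mask{\wy_j}$ is whatever value was produced in the simulated circuit evaluation and $\PadA{\wy_j}$ is the share $P_1$ holds --- this is consistent by construction.

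The main obstacle, and the place where the argument must be made carefully rather than by hand-waving, is the hybrid argument establishing that the simulated transcript is computationally indistinguishable from the real one: one must move from ``true masks, PRF-derived'' to ``true masks, truly random'' (a reduction to PRF security, replacing $F_{k_{02}}$ and $F_{k_{12}}$ --- the keys $P_1$ does not hold --- with truly random functions) and then argue that with truly random masks the simulator's uniformly-random stand-ins for the masked inputs of $P_0$/$P_2$ and for $P_2$'s multiplication-gate messages are \emph{perfectly} distributed as in the real execution. The subtlety is bookkeeping: one has to verify that every message $P_1$ ever sees is a (deterministic, $P_1$-computable) function of its own view plus exactly one additive term drawn from a mask-share $P_1$ does not possess, so that marginally each such message is uniform and the joint distribution factors correctly; the correctness theorem already proved guarantees that the reconstructed $\Mask{\wz}$ in the real world equals $\wx\wy + \Pad{\wz}$, which is what lets the output-reconstruction simulation be consistent. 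Once the PRF step is isolated, the remainder is a statement about identical distributions, so I expect the write-up to define one intermediate hybrid (random functions in place of the two PRF keys unknown to $P_1$) and then exhibit the simulator as producing an identically-distributed transcript in that hybrid world.
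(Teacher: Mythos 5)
Your proposal is correct and takes essentially the same route as the paper: the simulator plays $P_0,P_2$, every message $P_1$ receives before reconstruction is blinded by a pad share derived from the $P_0$--$P_2$ key that $P_1$ lacks, and at output reconstruction $P_2$'s opened share is adjusted to $\Mask{{\wy}_j}-\PadA{{\wy}_j}-\wy_j$; the paper phrases the middle step as honestly running the protocol with zero inputs for $P_0,P_2$ under a freshly sampled $k_{02}$ (leaving the PRF hybrid you spell out implicit), which induces the same distribution as your uniformly random stand-ins. One correction: $P_1$ \emph{does} hold $k_{12}$ (it is the $P_1$--$P_2$ pair key), so the only PRF key to be replaced by a random function in your hybrid is $k_{02}$; since $k_{12}$ is not even used in $\PiSemi$, this slip does not affect the rest of the argument.
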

%-----
%-----
\begin{proof}
	%-----
	Let $\Adv$ be a real-world semi-honest adversary corrupting the evaluator $P_1$ during the protocol $\PiSemi$. We now present the steps of the ideal-world adversary (simulator) $\SPiSemi$ for $\Adv$ for this case in \boxref{fig:SPiSemiCaseII}. At a  high level, $\SPiSemi$  itself does the honest pre-processing on the behalf of $P_0$ and will simulate the entire circuit-evaluation, assuming the circuit-inputs of $P_0$ and $P_2$ to be $0$. In the output-reconstruction stage, it ``adjusts" the shares of circuit-output values on the behalf of $P_2$ so that $\Adv$ sees the same function output as in the real-world protocol.
	%--------------------------
	\begin{mysimulatorbox}{}{Simulator $\SPiSemi$ for the case of corrupt $P_1$}{fig:SPiSemiCaseII}
		%----------
		\justify The simulator plays the role of the honest parties $P_0, P_2$ and  simulates  each step of the protocol $\PiSemi$ to corrupt $P_1$ as follows  and finally outputs $\Adv$'s output.
		\justify \algoHead{Offline Phase:}  $\SPiSemi$ emulates $\FSETUP$ and gives $k_{01}$, $k_{12}$ and $k_{\Partyset}$ to $P_1$. $\SPiSemi$ chooses a random key $k_{02}$. With these, $\SPiSemi$, on the behalf of $P_0$, executes the offline steps of the instances of  $\PiShS$, $ \PiMultS$ and $\PiAdd$ for circuit-inputs, multiplication and addition gates respectively. In the process, it learns the masks for each wire in the $\ckt$  and $\gamma$-values for each multiplication gate.
		%
		%--    
		\justify \algoHead{Online Phase:}~ 
		%---
		\begin{myitemize}
			%---         
			\item[--] \justify {\em Sharing Circuit-input Values}:   For every circuit-input $\wx_j$ that $P_0$  inputs, $\SPiSemi$ sets $\wx_j = 0$ and simulates the messages of $P_0$ as part of the online steps of  $\PiShS(P_0, \wx_j)$.The inputs owned by $P_2$ are simulated similarly.
			%----
			\item[--] {\em Gate Evaluation}:  The simulator simulates the evaluation of each gate $\gate$ according to the topological order. No simulation is needed for an addition gate. If $\gate$ is a multiplication gate, then the simulator simulates the messages of $P_2$ as part of the online steps of the corresponding instance of $\PiMultS$.
			%----     
			\item[--] {\em Output Reconstruction}:  For $j = 1, \ldots, \OS$ let $[\Pad{{\wy}_j}] = (\PadA{{\wy}_j}, \PadB{{\wy}_j})$ be the sharing, available with the simulator and let $\Mask{{\wy}_j}$ be the simulated masked output, corresponding to $\wy_j$, available with $\Adv$. On input $\{\wx_i\}$'s, the inputs of corrupt $P_1$ and  the function output $(\wy_1, \ldots, \wy_{\OS})$, as part of online steps of the instance  $\PiRecS(\sqr{\Pad{{\wy}_j}})$, the simulator sends $\Mask{{\wy}_j} - \PadA{{\wy}_j} - {\wy}_j$ as the share of $\Pad{{\wy}_j}$, on the behalf of $P_2$ to $\Adv$.
			%---
		\end{myitemize}
	\end{mysimulatorbox}
	%-----
	
	It is easy to see that the simulated view and the real-world view of the adversary are computationally indistinguishable.	
	%----
\end{proof}
%------
%----
\section{3PC with malicious security}
\label{app:3PCMal}
Here we prove that $\PiMal$ securely realizes the standard ideal-world functionality $\FTHREEMPCABORT$ (\boxref{fig:FTHREEMPCABORT}) for securely evaluating any arithmetic circuit over $\Z{\ell}$ with selective abort. Our proof works in $\{\FSETUP,\FTRIPLES\}$-hybrid model. 

%---------
\begin{mysystembox}{}{Functionality $\FSETUP$ (malicious)}{fig:FSETUPMal}
	$\FSETUP$ interacts with the parties in $\Partyset$ and the adversary $\Sim$. $\FSETUP$ picks random keys $k_{01}, k_{02}, k_{12}, k_{\Partyset} \in \{0,1\}^{\csec}$.
	%---
	\begin{description}
		%----
		\item {\bf Output to adversary: }  If $\Sim$ sends $\abort$, then send $(\OUTPUT, \bot)$ to all the parties. Otherwise,  send $(\OUTPUT, \wy_i)$ to the adversary $\Sim$, where $\wy_i = (k_{i1}, k_{i2}, k_{\Partyset})$ when $P_0$ is corrupt and $\wy_i = (k_{0i}, k_{12}, \allowbreak k_{\Partyset})$ when $P_i \in \{P_1, P_2\}$ is corrupt.
		%--- 
		\item {\bf Output to selected honest parties: } Receive $(\SELECT, \{I\})$ from adversary $\Sim$, where $\{I\}$ denotes a subset of the honest parties. If an honest party $P_i$ belongs to $I$, send $(\OUTPUT, \bot)$, else send $(\OUTPUT, \wy_i)$. Here $\wy_i = (k_{i1}, k_{i2}, k_{\Partyset})$ when $P_i = P_0$ and $\wy_i = (k_{0i}, k_{12}, k_{\Partyset})$ when $P_i \in \{P_1, P_2\}$
		%---
	\end{description}
	%----
\end{mysystembox}
%---------
Since the protocol $\PiMal$ differs from $\PiSemi$ mainly in three protocols -- sharing ($\PiShM$), reconstruction ($\PiRecM$) and multiplication ($\PiMultM$) protocols, we provide the details of simulation for the same. 
%---------------------------------------------------------------------------------------------------------
We begin with the case, when $P_0$ is corrupted. 
%------------
\begin{theorem}
	\label{lemma:PiMalCaseI}
	%-----------
	In  $\{\FSETUP,\FTRIPLES\}$-hybrid model,  $\PiMal$ securely realizes the functionality $\FTHREEMPCABORT$ against a static, malicious adversary $\Adv$, who corrupts $P_0$.
	%----
\end{theorem}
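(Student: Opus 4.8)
The plan is to exhibit an ideal-world simulator $\SPiMal$ for a static, malicious $\Adv$ corrupting the distributor $P_0$ and to argue that the real execution of $\PiMal$ and the ideal execution with $\FTHREEMPCABORT$ are computationally indistinguishable, including the pattern of which honest parties abort. The guiding observation is that $P_0$ communicates only during the offline phase and during the online steps of $\PiMultM$ and $\PiRecM$: in $\PiShM$ it merely announces the maskings of its own inputs, and in gate evaluation the evaluators exchange $\sqd$-shares of $\Mask{\wz}$ among themselves, which $P_0$ never sees. Since $\SPiMal$ will emulate both $\FSETUP$ and $\FTRIPLES$, it knows every PRF key, hence every $\pad$-mask, every multiplicative mask $\MPad{\cdot}$, every share of $\Gammaxy$, and every triple $(\wmd,\wme,\wmf)$; it can therefore run the honest evaluators' code essentially verbatim, substituting the dummy value $0$ for every honest input.

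First I would simulate the offline phase. $\SPiMal$ hands $P_0$ the keys $(k_{01},k_{02},k_{\Partyset})$. For each multiplication gate it receives the value $\GammaxyB$ that $P_0$ sends to $P_2$ and, knowing $\GammaxyA$, recovers the $\Gammaxy$ that $P_0$ uses and hence the additive error $\Delta=\Gammaxy-\Pad{\wx}\Pad{\wy}$; it then sends $P_0$, on behalf of $P_1$ and $P_2$, the honestly computed shares $\sqr{\chi}_{P_i}$, receives $P_0$'s replies, and runs the $\FTRIPLES$ emulation and the $\PiTripCheck$ invocations internally (the reconstructed $\rho,\sigma$ are uniform from $P_0$'s view since $\wmd,\wme$ are random and private, and $\tau=\Delta$). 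By Lemma~\ref{lemma:PiMultM} together with Lemma~\ref{lemma:TripCheck}, the triple $(\wma,\wmb,\wmc)$ fails the check exactly when $\Delta\neq0$; note that in the $\FTRIPLES$-hybrid the reference triple is perfect, so this detection is errorless and the $1/N^{B}$ statistical slack surfaces only when $\FTRIPLES$ is later replaced by $\PiTripGen$. $\SPiMal$ records whether a triple-check abort occurred.

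Next I would simulate the online phase. For input sharing, $\SPiMal$ (playing $P_1,P_2$) receives the $\Mask{\wx_j}$ that $P_0$ announces; if the two copies differ it submits $\abort$ as $P_0$'s input to $\FTHREEMPCABORT$ (the honest evaluators' hash exchange would abort both), otherwise it extracts $\wx_j=\Mask{\wx_j}-\PadA{\wx_j}-\PadB{\wx_j}$ and submits that. For each multiplication gate it sends $P_0$, on behalf of $P_1$, the pair $(\Mstar{\wx},\Mstar{\wy})$ and, on behalf of $P_2$, the matching hash, computed from the dummy $\Mask{\cdot}$ values and the known $\MPad{\cdot}$; it then receives the value $\Hash(\Mstar{\wz})$ that $P_0$ sends to each of $P_1$ and $P_2$ (possibly two distinct values) and compares each with $\Hash(\Mask{\wz}-\Mask{\wx}\Mask{\wy}+\MPad{\wz})$ as the respective honest evaluator would, recording for which of $P_1,P_2$ an abort is triggered. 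For output reconstruction via $\PiRecM$, $\SPiMal$ obtains $(\wy_1,\ldots,\wy_{\OS})$ from $\FTHREEMPCABORT$, sends $P_0$ the corrected maskings $\Mask{\wy_j}=\wy_j+\Pad{\wy_j}$ and the matching hashes on behalf of the evaluators, and then, playing the evaluators, receives $P_0$'s openings ($\PadB{\wy_j}$ to $P_1$, $\PadA{\wy_j}$ to $P_2$) and marks an abort for whichever honest party receives a value inconsistent with the agreed share. Finally $\SPiMal$ assembles the set $I$ of honest parties that abort (from the offline triple check, from the $\PiMultM$ hash checks, from the $\PiRecM$ opening checks, and from $P_0$ simply failing to send an expected message) and sends $(\SELECT,\{I\})$ to $\FTHREEMPCABORT$; a triple-check failure forces $I$ to be the whole set of honest parties.

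For indistinguishability, every message $\SPiMal$ sends is produced by the honest algorithm on the genuine PRF-derived randomness, exactly as in the real protocol, except that dummy inputs replace the true ones; the only input-bearing values $P_0$ ever sees are the $(\Mstar{\wx},\Mstar{\wy})$ pairs, each one-time padded by an $\MPad{\cdot}$ derived under $k_{12}$, a key $P_0$ does not hold, so PRF security makes them indistinguishable from their real-world counterparts, and the corrected maskings at reconstruction carry the true output which $\SPiMal$ holds. Correctness of the honest outputs when no abort occurs is Lemma~\ref{lemma:PiMultM}: absent a detectable deviation every wire carries a correct $\shrd$-sharing, so $\PiRecM$ reconstructs the correct circuit outputs; with a deviation, the responsible honest party aborts, which the $\SELECT$ instruction mirrors. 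The main obstacle I anticipate is the bookkeeping for \emph{selective} abort: because $P_0$'s channels to $P_1$ and to $P_2$ are private, $P_0$ can make one honest evaluator abort (a bad $\Hash(\Mstar{\wz})$ in $\PiMultM$, or a wrong opening in $\PiRecM$) while leaving the other satisfied, and $\SPiMal$ must reproduce precisely this partition through $(\SELECT,\{I\})$, while ensuring a ``global'' abort from the offline $\PiTripCheck$ (which necessarily aborts all honest parties) is handled consistently and not conflated with the online cases. A related point requiring care is ruling out that any deviation by $P_0$ forces an honest party to output a wrong but non-$\bot$ value, which is where Lemma~\ref{lemma:PiMultM}, collision resistance of $\Hash$, and the perfectness of the $\FTRIPLES$-supplied triple are combined.
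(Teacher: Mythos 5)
Your simulator has the same overall architecture as the paper's: emulate $\FSETUP$ and $\FTRIPLES$ so that all $\pad$- and $\MPad{}$-masks and triples are known, extract $P_0$'s inputs from the announced $\Mask{\wx_j}$, run the evaluators' code on dummy inputs, adjust the masked outputs to $\wy_j + \Pad{\wy_j}$ at reconstruction, and argue indistinguishability from the pseudorandomness of the $\MPad{}$-pads sampled under $k_{12}$, which $P_0$ does not hold. Up to that point your proposal and the paper's proof coincide.

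The genuine problem is your abort bookkeeping for the online $\PiMultM$ check. You treat a failure of the comparison $\Hash(\Mstar{\wz}) \iseq \Hash(\Mask{\wz} - \Mask{\wx}\Mask{\wy} + \MPad{\wz})$ as a \emph{selective} event, placing only the affected evaluator into the set sent with $(\SELECT, \{I\})$, and you even describe this as the partition the simulator must reproduce. But in the real protocol such a partition does not survive to the outputs: if $P_0$ sends a bad $\Hash(\Mstar{\wz})$ to, say, $P_1$ alone, then $P_1$ aborts and stops participating, and at output reconstruction $P_2$ never receives the expected $\Hash(\PadA{\wy}')$ from $P_1$ inside $\PiRecM$, so $P_2$ aborts as well. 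Under your simulation, $I = \{P_1\}$ and the ideal-world $P_2$ would receive the true output, whereas the real-world $P_2$ outputs $\bot$ --- a discrepancy the environment can observe. The paper avoids this by maintaining a global $\flag$ for \emph{every} pre-reconstruction deviation (inconsistent $\Mask{\wx_j}$ in $\PiShM$, a bad $\Gammaxy$ or $\chi$ caught by $\PiTripCheck$, and the $\Mstar{\wz}$ hash failures) and, when $\flag = 1$, sending $\bot$ to $\FTHREEMPCABORT$ so that no party obtains output; the $(\SELECT, \{I\})$ mechanism is reserved exclusively for the reconstruction stage, where $P_0$'s openings over private channels genuinely can make one honest evaluator abort while the other completes. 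To repair your argument you should either adopt the paper's global treatment of the circuit-evaluation checks or explicitly propagate an evaluator's abort through your reconstruction bookkeeping (an evaluator in $I$ sends nothing in $\PiRecM$, so its peer must also be added to $I$); as written, the claimed indistinguishability fails in exactly this corner case.
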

%-----
\begin{proof}
	%-----
	Let $\Adv$ be a real-world malicious adversary corrupting  $P_0$ during $\PiMal$. We present an ideal-world adversary (simulator) $\SPiMal$ for $\Adv$,	who plays the roles of honest $P_1, P_2$  and simulates the messages received by $P_0$ during the protocol. The simulation is similar as in the semi-honest setting, where the simulator simulates $P_1, P_2$ with random inputs and keeps track of all the values that the parties (both honest and corrupt) are supposed to hold. Based on this, the simulator can find out whether the corrupt $P_0$ is sending an incorrect message(s) in any of the sub-protocols and accordingly simulates honest parties aborting the protocol. The simulator initializes	a Boolean variable $\flag = 0$, which indicates whether the honest parties abort during the simulation. Similar to the semi-honest setting, $\SPiMal$ invokes the simulator $\SSetupMal$ and learns the shared keys among $P_0$-$P_1$ and $P_0$-$P_2$, namely $k_{01}$ and $k_{02}$ and the key $k_{\Partyset}$. From the shared keys, it learns the $\pad$-masks for all the wires in $\ckt$. The details of $\SPiMal$ for the offline phase is as follows:	
	%-------
	\begin{myitemize}
		%---
		\item[--]{\em Offline Step of the instances $\PiShM$ and $\PiRecM$}: Here the simulator has to simulate nothing, as the offline phase involves no communication.
		%----
		\item[--]{\em Offline Step of the instances $\PiMultM(\Wxyzj)$}: The simulator receives $\GammaxyjB$ from $\Adv$ on behalf of $P_2$. Simulator then picks random $\MPad{{\wx}_j}, \MPad{{\wy}_j}$ and $\MPad{{\wz}_j}$ and their $\sqd$-shares on behalf of $P_1, P_2$ and honestly simulates the messages of $P_1, P_2$ as per the protocol $\PiMultM$. Namely, the simulator learns from $\Adv$ the inputs with which $P_0$ wants to call $\FTRIPLES$. If the input of $P_0$ to $\FTRIPLES$ is $\bot$, then the simulator sets $\flag = 1$, else the simulator plays the role of $\FTRIPLES$ honestly with the inputs received on behalf of $P_0$ and generates a $\shrd$-sharing of a randomly chosen multiplication triplet $(\wmd, \wme, \wmf)$. On behalf of $P_1, P_2$, the simulator sends to $\Adv$ the $\sqd$-shares of $\chi$. For the instance of $\PiTripCheck$, the simulator honestly simulates the messages of $P_1, P_2$ towards $P_0$. Moreover, the simulator sets $\flag = 1$, if it finds that $\Gammaxyj \neq \Pad{{\wx}_j} \Pad{{\wy}_j}$.        		
		%-----
	\end{myitemize}
	%-------
	
	The details of $\SPiMal$ for simulating the messages of the online phase are as follows. Informally, the simulator extracts the circuit-inputs of $P_0$ from the masked circuit-inputs which $P_0$ sends to the evaluators since the simulator will know the corresponding mask. The simulator then sets the circuit-inputs of $P_1, P_2$ to some arbitrary values and simulates the steps of the online phase. 
	%------
	During the evaluation of multiplication gates, $P_0$ receives versions of $\Mstar{\wx}$ and $\Mstar{\wy}$, which can be easily simulated as the simulator has selected them. Finally, while simulating the public reconstruction of $\shrd$-shared circuit-outputs, the simulator adjusts the shares of $P_1, P_2$, so that $P_0$ receives the same output as it would have received in the execution of the real-world protocol. 
	%------
	As done in the simulation of the offline phase, the simulator keeps track of all the values that the corrupt $P_0$ possess and sets $\flag = 1$ if it finds that $P_0$ is sending an inconsistent value during the simulated execution. 
	%------
	\begin{myitemize}
		%-----
		\item[--] {\em Online Step of the instances $\PiShM(P_i, {\wx}_j)$}:  If $P_i = P_0$, then the simulator receives $\Mask{{\wx}_j}$ and $\Mask{{\wx}_j}'$ from $\Adv$ on behalf of $P_1$ and $P_2$ respectively. The simulator sets $\flag = 1$ if it finds that $\Mask{{\wx}_j} \neq \Mask{{\wx}_j}'$, else it extracts the inputs $x_j$ of $P_0$ as $x_j = \Mask{{\wx}_j} - \Pad{{\wx}_j}$, where $\Pad{{\wx}_j}$ is the mask which the simulator learnt during the offline step. If $P_i \in \{P_1, P_2 \}$, then nothing needs to be simulated as $P_0$ does not receive any message as a part of online step of such instances of $\PiShM(P_i, {\wx}_j)$. For such instances, the simulator sets ${\wx}_j = 0$ and accordingly computes the simulated $\shr{{\wx}_j}$.
		%---
		\item[--]{\em Online Step of the instances $\PiMultM(\Wxyzj)$}: The simulator honestly performs the steps of $P_1, P_2$ for this instance and computes the simulated $\shr{{\wz}_j}$. On behalf of $P_1$, the simulator sends $\Mstar{{\wx}_j} = \Mask{{\wx}_j} + \MPad{{\wx}_j}$ and $\Mstar{{\wy}_j} = \Mask{{\wy}_j} + \MPad{{\wy}_j}$ to $\Adv$, while he sends hash of the same to $\Adv$ on behalf of $P_2$. The simulator receives $\Hash(\Mstar{{\wz}_j})$ and $\Hash(\Mstar{{\wz}_j'})$ from $\Adv$ on behalf of $P_1$ and $P_2$ respectively. The simulator sets $\flag = 1$ if $\Hash(\Mstar{{\wz}_j}) \neq \Hash(\Mstar{{\wz}_j'})$ or if $\Hash(\Mstar{{\wz}_j}) \neq \Hash(\Mask{{\wz}_j} - \Mstar{{\wx}_j} \Mstar{{\wy}_j} + \MPad{{\wz}_j})$. 		
		%-----
		\item[--]{\em Obtaining function outputs}: If $\flag$ is set to $1$ during any step of the simulation till now, then the simulator sends $\bot$ to $\FTHREEMPCABORT$, which corresponds to the case that in the real-world protocol, the honest parties abort before reaching to the output-reconstruction stage, implying that no party receives the output. Else the simulator sends inputs $\{{\wx}_j\}$ extracted on behalf of $P_0$ to $\FTHREEMPCABORT$ and receives the function outputs $\wy_1, \ldots, \wy_{\OS}$.
		%---
		\item[--] {\em Simulating the instances of $\PiRecM(\star, \Partyset )$ during the output- reconstruction}: For $j = 1, \ldots, \OS$, let $\sqr{\Pad{{\wy}_j}} = (\PadA{{\wy}_j}, \PadB{{\wy}_j})$ be the $\sqd$-shared mask, corresponding to the $j$th circuit-output, available with the simulator. Then as a part of the $j$th instance of $\PiRecM$, the simulator sends $\wy_j + \Pad{{\wy}_j}$ and $\Hash(\wy_j + \Pad{{\wy}_j})$ to $\Adv$ on behalf of $P_1$ and $P_2$ respectively. Moreover, the simulator receives $\Hash(\Pad{{\wy}_j', i})$ from $\Adv$ on behalf of $P_i$ for $i \in \EInSet$. The simulator initializes the set $I$ to $\emptyset$. If $\Hash(\PadA{{\wy}_j', i}) \neq \Hash(\PadA{{\wy}_j, i})$ then the simulator includes $P_i$ to the set $I$. The simulator then sends the set $I$ to $\FTHREEMPCABORT$ and terminates.
		%------	
	\end{myitemize}
	%---			
	The proof now follows from the fact that simulated view and real-world view of a corrupt $P_0$ are computationally indistinguishable.
	%-----
\end{proof}
%---------------------------------------------------------------------------------------------------------

%---------------------------------------------------------------------------------------------------------
We next consider the case, when the adversary corrupts one of the evaluators, say $P_1$.
%------ 
\begin{theorem}
	\label{lemma:PiMalCaseII}
	%-----------
	In the $\{\FSETUP,\FTRIPLES\}$-hybrid model,  $\PiMal$ securely realizes the functionality $\FTHREEMPCABORT$ against a static, malicious adversary $\Adv$, who corrupts $P_1$.
	%----
\end{theorem}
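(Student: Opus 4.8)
The plan is to mirror the structure of the corrupt-$P_0$ analysis (Theorem~\ref{lemma:PiMalCaseI}): I construct an ideal-world adversary $\SPiMal$ that plays the honest parties $P_0,P_2$, emulates $\FSETUP$ via $\SSetupMal$, plays $\FTRIPLES$ honestly, and keeps a Boolean $\flag$ together with a set $I$ recording which honest parties abort during the simulated run. The asymmetry I will exploit is that a corrupt $P_1$ learns the keys $k_{01},k_{12},k_{\Partyset}$ but never $k_{02}$; hence the $\PadB{\cdot}$-component of every $\pad$-mask stays pseudorandom and hidden from $P_1$, and---reading $\PiMultM$---the only message $P_0$ ever sends $P_1$ is $\Hash(\Mstar{\wz})$, which equals $\Hash\bigl(\Mask{\wz}-\Mask{\wx}\Mask{\wy}+\MPad{\wz}\bigr)$, a quantity $P_1$ can recompute from its own view. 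So privacy is essentially immediate and the real work lies in faithfully reproducing the honest parties' abort-or-output behaviour.

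In the offline phase, $\SPiMal$ runs the honest pre-processing on behalf of $P_0$, thereby learning all $\pad$-masks and all $\Gammaxy$-values, and, playing $\FTRIPLES$, it knows every random triple $(\wmd,\wme,\wmf)$. It simulates each $\PiMultM$ offline step exactly, receiving from $P_1$ the share $\sqr{\chi}_{P_1}$ and the $\PiTripGen$/$\PiTripCheck$ messages. Since the simulator knows $\Pad{\wx},\Pad{\wy},\MPad{\wx},\MPad{\wy},\MPad{\wz}$ and the $\chi$ that $P_0$ would reconstruct, it knows the exact shared triple $(\wma,\wmb,\wmc)$ determined by $P_1$'s messages; by Lemma~\ref{lemma:TripCheck} together with the consistency checks internal to $\PiRecM$, $P_1$ passes $\PiTripCheck$ only if $\wmc=\wma\wmb$ and it deviates in no $\PiRecM$ sub-call, and otherwise some honest party outputs $\bot$. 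The simulator detects each such deviation, sets $\flag=1$, and adds the affected honest party to $I$.

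In the online phase, $\SPiMal$ extracts $P_1$'s circuit-inputs from the $\Mask{\wx}$ values that $P_1$ sends in $\PiShM(P_1,\cdot)$ (the simulator knows $\Pad{\wx}$), sets the circuit-inputs of $P_0,P_2$ to $0$, and runs the gate-evaluation honestly for $P_0,P_2$; a refusal of $P_1$ to send an expected message is treated as an abort. At each multiplication gate it sends $P_2$'s share of $\Mask{\wz}$, receives $P_1$'s share and the pair $(\Mstar{\wx},\Mstar{\wy})$, and answers, as $P_0$, with $\Hash(\Mstar{\wz})$ computed from the honest masks. By Lemma~\ref{lemma:PiMultM}, applied gate by gate in topological order and across the offline/online boundary, any deviation of $P_1$ that would perturb a wire value---a wrong $\sqr{\Mask{\wz}}_{P_1}$, or a $(\Mstar{\wx},\Mstar{\wy})$ inconsistent with $P_2$'s hash---forces an honest party to output $\bot$; the simulator determines exactly which and updates $I$. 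If every honest party has aborted before output reconstruction, $\SPiMal$ instructs $\FTHREEMPCABORT$ to deliver $\bot$; otherwise it forwards $P_1$'s extracted inputs, obtains the true outputs $(\wy_1,\ldots,\wy_{\OS})$, simulates the output-reconstruction calls to $\PiRecM$ using these true values (adjusting $P_2$'s shares so $P_1$ sees the real result), again detecting any $P_1$-deviation that aborts an honest party, and finally sends $(\SELECT,\{I\})$ to $\FTHREEMPCABORT$.

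Indistinguishability then has two components. \textbf{Privacy:} the simulated transcript is distributed as in the real execution because every $\Mask{\cdot}$-value of a wire not owned by $P_1$ carries a $\PadB{\cdot}$-component that is pseudorandom and unknown to $P_1$ (PRF security of $F$), the $\Gammaxy$- and $\MPad{\cdot}$-values are likewise pseudorandom, and $\Hash(\Mstar{\wz})$ is a deterministic function of quantities already in $P_1$'s view; hence replacing the real inputs of $P_0,P_2$ by $0$ is undetectable. \textbf{Correctness/abort-consistency:} by Lemmas~\ref{lemma:PiMultM} and~\ref{lemma:TripCheck}, whenever $\SPiMal$ does not signal abort it has fed $\FTHREEMPCABORT$ precisely $P_1$'s effective inputs, so the honest parties receive the correct output, and every abort it reports corresponds to a genuine honest-party abort. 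I expect the main obstacle to be exactly this second component: composing the single-gate guarantee of Lemma~\ref{lemma:PiMultM} over the whole circuit and over the two phases so as to show that $P_1$ can \emph{never} steer an honest party to a wrong, non-$\bot$ output, and arguing that the simulator's abort set $I$ is always realizable through the $(\SELECT,\{I\})$ interface of $\FTHREEMPCABORT$---in particular, handling runs in which $P_0$ and $P_2$ abort at different points.
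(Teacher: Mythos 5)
Your proposal follows essentially the same route as the paper's proof: a simulator playing $P_0,P_2$ that emulates $\FSETUP$ and $\FTRIPLES$, extracts $P_1$'s inputs from the masked values, runs the honest parties on dummy (zero) inputs, tracks deviations with a flag and an abort set, adjusts the $\PiRecM$ shares so the adversary sees the true output, and argues indistinguishability from the secrecy of the $\PadB{\cdot}$/$k_{02}$ randomness together with the correctness of $\PiMultM$ and $\PiTripCheck$ (Lemmas~\ref{lemma:PiMultM} and~\ref{lemma:TripCheck}). If anything, your sketch spells out the indistinguishability and abort-consistency argument in more detail than the paper, which simply asserts it after describing the simulator.
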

%-----
%-----
\begin{proof}
	%-----
	The correctness follows similar to Theorem~\ref{lemma:PiMalCaseI}. We now focus on privacy. Let $\Adv$ be a real-world malicious adversary corrupting the evaluator $P_1$ during the protocol $\PiMal$. We present an ideal-world adversary (simulator) $\SPiMal$ for $\Adv$, who plays the roles of honest $P_0, P_2$  and simulates the messages received by $P_1$ during the protocol. $\SPiMal$ invokes the simulator $\SSetupMal$ and learns the shared keys among $P_1$-$P_0$ and $P_1$-$P_2$, namely $k_{01}$ and $k_{12}$ and the key $k_{\Partyset}$. In addition, $\SPiMal$ chooses a random key $k_{02}$. The details of $\SPiMal$ for the offline phase is as follows:	
	\begin{myitemize}
		%---
		\item[--]{\em Offline Step of the instances $\PiShM$ and $\PiRecM$}: Here the simulator has to simulate nothing, as the offline phase involves no communication.
		%----
		\item[--]{\em Offline Step of the instances $\PiMultM(\Wxyzj)$}: On behalf of $P_0$, the simulator computes $\Gammaxyj = \Pad{{\wx}_j} \Pad{{\wy}_j}$. In addition, simulator learns $\GammaxyjA$ that $\Adv$ computes, for the shared key $k_{01}$. With these, simulator computes $\GammaxyjB = \Gammaxyj - \GammaxyjA$. On behalf of $P_2$, simulator computes $\MPad{{\wx}_j}, \MPad{{\wy}_j}, \MPad{{\wz}_j, 1}$ and $\MPad{{\wz}_j, 2}$ using the key $k_{12}$. The simulator receives from $\Adv$, the input with which $P_1$ wants to call $\FTRIPLES$. If this input is $\bot$, then the simulator sets $\flag = 1$. Else the simulator itself honestly performs the steps of $\FTRIPLES$ and generates $\shrd$-sharing of a random multiplication triplet $(\wmd, \wme, \wmf)$. The simulator then receives $\chi_1$ from $\Adv$ on behalf of $P_0$. The simulator then computes $\shr{\wma}, \shr{\wmb}, \shr{\wmc}$ and honestly executes the steps of $\PiTripCheck$ on behalf of $P_0, P_2$. Moreover, the simulator sets $\flag = 1$, if $\chi_1 \neq \MPad{{\wx}_j} \Pad{{\wy}_j, 1} + \MPad{{\wy}_j} \Pad{{\wx}_j, 1} + \MPad{{\wz}_j, 1} - \GammaxyjA$, else the simulator computes $\chi = \chi_1 + \chi_2$.	     	
		%----
	\end{myitemize}
	%----
	The details of $\SPiMal$ for simulating the messages of the online phase are as follows.
	%------
	\begin{myitemize}
		%-----
		\item[--] {\em Online Step of the instances $\PiShM(P_i, {\wx}_j)$}:  If $P_i = P_0$, then on behalf of $P_0$, the simulator sets $\wx_j = 0$ and sends $\Mask{{\wx}_j} = 0 + \Pad{{\wx}_j}$ to $\Adv$. Then on behalf of $P_2$, the simulator receives $\Hash(\Mask{{\wx}_j'})$ from $\Adv$, which $P_1$ wants to send to $P_2$; the simulator sets $\flag = 1$ if it finds that $\Hash(\Mask{{\wx}_j'}) \neq \Hash(\Mask{{\wx}_j})$. If $P_i = P_1$, then on behalf of $P_2$, the simulator receives $\Mask{{\wx}_j}$ from $\Adv$, which $P_1$ wants to send to $P_2$ and extract the input $\wx_j = \Mask{{\wx}_j} - \Pad{{\wx}_j}$ of $P_1$. If $P_i = P_2$, then the simulator sets $\wx_j = 0$ and sends $\Mask{{\wx}_j} = 0 + \Pad{{\wx}_j}$ to $\Adv$ on behalf of $P_2$.			
		%---
		\item[--]{\em Online Step of the instances $\PiMultM(\Wxyzj)$}: On behalf of $P_2$, the simulator honestly sends the $\sqd$-share of $\Mask{{\wz}_j}$ to $\Adv$. Then on behalf of $P_2$, the simulator receives from $\Adv$ the $\sqd$-share of $\Mask{{\wz}_j}$, which $P_1$ wants to send to $P_2$. The simulator checks if this share is correct and accordingly sets $\flag = 1$. The simulator then receives $\Mstar{{\wx}_j}$ and $\Mstar{{\wy}_j}$ from $\Adv$ on behalf of $P_0$, which $P_1$ wants to send to $P_0$. The simulator sets $\flag = 1$, if it finds that $\Mstar{{\wx}_j} \neq \Mask{{\wx}_j} + \delta_{{\wx}_j}$ or $\Mstar{{\wy}_j} \neq \Mask{{\wy}_j} + \delta_{{\wy}_j}$. On behalf of $P_0$, the simulator sends $\Mstar{{\wz}_j} =  - \Pad{{\wy}_j} \cdot \Mstar{{\wx}_j} - \Pad{{\wx}_j} \cdot \Mstar{{\wy}_j} + \delta_{{\wz}_j} + 2\gamma_{{\wx}_j {\wy}_j} + \chi$ to $\Adv$.
		%-----
		\item[--]{\em Obtaining function outputs}: If $\flag$ is set to $1$ during any step of the simulation till now, then the simulator sends $\bot$ to $\FTHREEMPCABORT$. Else the simulator sends inputs $\wx_j$ extracted on behalf of $P_1$ to $\FTHREEMPCABORT$ and receives the function outputs $\wy_1, \ldots, \wy_{\OS}$.
		%---
		\item[--] {\em Simulating the instances of $\PiRecM(\star, \Partyset )$ during the output- reconstruction}:  For $j = 1, \ldots, \OS$, let 	$(\Pad{{\wy}_j, 1}, \Mask{{\wy}_j})$ be the share of $P_1$ available with the simulator, as a part of the simulated output sharing $\shr{{\wy}_j}$. Then as a part of $\PiRecM(\shr{{\wy}_j}, \Partyset)$, on behalf of $P_2$ and $P_0$, the simulator sends $\Mask{{\wy}_j} - \PadA{{\wy}_j} - \wy_j$ and $\Hash(\Mask{{\wy}_j} - \PadA{{\wy}_j} - \wy_j)$ respectively to $\Adv$, which ensures that $\Adv$ reconstructs $\Mask{{\wy}_j} -  \PadA{{\wy}_j} - (\Mask{{\wy}_j} - \PadA{{\wy}_j} - \wy_j) = \wy_j$. On behalf of $P_0$ and $P_2$ respectively, the simulator receives $\Mask{{\wy}_j'}$ and $\Hash(\PadA{{\wy}_j}')$ from $\Adv$, which $P_1$ wants to send to $P_0$ and $P_2$ respectively as a part of $\PiRecM(\shr{{\wy}_j}, \Partyset)$. The simulator initializes the set $I$ to $\emptyset$. The simulator includes $P_0$ to $I$ if it finds that $\Mask{{\wy}_j'} \neq \Mask{{\wy}_j}$. 	Similarly, the simulator includes $P_2$ to $I$, if it finds that $\Hash(\PadA{{\wy}_j}') \neq \Hash(\PadA{y_j})$. The simulator then sends the set $I$ to $\FTHREEMPCABORT$ and terminates.			
		%------	
	\end{myitemize}
	%---						
	It is easy to see that the simulated and  real-world views of the adversary are computationally indistinguishable.
	%----
\end{proof}
%------
%----
\section{Secure Prediction}
\label{app:PrivML}

\begin{lemma}[Correctness]
	\label{lemma:PiDpM}
	%----
	In the protocol $\piBitExtM$, the following holds:
	During the offline phase, honest parties compute either $\sr = \sr_1 \sr_2$ or output $\bot$.  During the online phase, honest parties either obtain $\sign(\sr \sval)$ or output $\bot$.         
\end{lemma}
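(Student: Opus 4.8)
The plan is to decompose $\piBitExtM$ into its sub-protocols, push the ``correct-or-$\bot$'' guarantee through each of them, and then combine the pieces using the sign-of-product observation twice. Throughout I fix an adversary corrupting at most one party and argue that, in each phase, every honest party either aborts or ends up holding exactly the claimed value (resp. sharing).

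\textbf{Offline phase.} First I observe that the $\shrd$-sharings of $\sr_1$ and $\sr_2$ generated non-interactively are trivially correct and consistent among honest parties (their pads are locally sampled and the masked value equals $\sr_1$, resp. $\sr_2$). Applying the correctness of $\PiMultM$ (Lemma~\ref{lemma:PiMultM} for the deviation cases, together with the correctness of its semi-honest core), the instance of $\PiMultM$ run on $\sr_1, \sr_2$ leaves the honest parties holding either $\shr{\sr} = \shr{\sr_1 \sr_2}$ or $\bot$: a corrupt $P_0$ sharing an incorrect product mask, or a corrupt evaluator forcing a wrong $\chi$ or a wrong masked output, is caught by the embedded triple check. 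The boolean sharings of $\msp_1 = \MSB{\sr_1}$, $\msp_2 = \MSB{\sr_2}$ and $\msp = \msp_1 \xor \msp_2$ are correct by construction and by linearity, and by the sign-of-product fact (under the fixed-point encoding the sign of a product is the XOR of the signs of its factors) we get $\msp = \MSB{\sr_1}\xor \MSB{\sr_2} = \MSB{\sr_1 \sr_2} = \MSB{\sr}$. Hence the honest parties leave the offline phase either with $\shr{\sr}$ for $\sr = \sr_1 \sr_2$ together with $\shrB{\msp}$ for $\msp = \MSB{\sr}$, or with $\bot$.

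\textbf{Online phase.} The instance of $\PiMultM$ on $\sr, \sval$ again yields $\shr{\sr \sval}$ or $\bot$ for the honest parties (by the same reasoning as above), and $\PiRecM(P_0, \sr\sval)$, $\PiRecM(P_1, \sr\sval)$ reveal $\sr\sval$ to $P_0$ and $P_1$: since at least one of the two parties holding the missing share is honest, each recipient recovers the correct $\sr\sval$ or detects a mismatch and aborts. Next I analyse the step in which $P_1$ runs $\PiShM(P_1, \msq)$ with $\msq = \MSB{\sr\sval}$ while $P_0$ independently recomputes $\Mask{\msq}$ from its own copy of $\sr\sval$ and the pad $\Pad{\msq}$ (which $P_0$ holds as distributor) and sends $\Hash(\Mask{\msq})$ to $P_2$: if $P_1$ is honest it shares the correct $\msq$; if $P_1$ is corrupt then $P_0$ is honest, $P_0$'s hash commits to the correct $\Mask{\msq}$, and $P_2$ aborts whenever $P_1$'s value to $P_2$ disagrees, so in either case the honest parties hold a correct sharing of $\msq = \MSB{\sr\sval} = \sign(\sr\sval)$ or abort. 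Finally, the local computation $\shrB{\MSB{\sval}} = \shrB{\msp}\xor\shrB{\msq}$ gives the intended output, since by the sign-of-product fact $\MSB{\sr\sval} = \MSB{\sr}\xor \MSB{\sval}$, whence $\msp \xor \msq = \MSB{\sr}\xor\MSB{\sr}\xor\MSB{\sval} = \MSB{\sval}$.

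\textbf{Main obstacle.} I expect the delicate point to be the $\PiShM(P_1, \msq)$ step: one must argue across all three corruption patterns that no corrupt party can make the honest parties accept an incorrect $\msq$, and in particular that it is $P_0$'s \emph{independent} recomputation of $\Mask{\msq}$ (rather than a mere forwarding of $P_1$'s message) that rules out a corrupt $P_1$. A secondary point to state explicitly is the reliance on the fixed-point encoding for ``sign of a product $=$ XOR of signs'', which is where a no-overflow assumption on the relevant products enters, together with the routine bookkeeping that an abort in any earlier sub-protocol short-circuits the later steps and is already covered by the ``or $\bot$'' clause.
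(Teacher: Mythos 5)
Your proof is correct and follows essentially the same route as the paper's: validity of the non-interactive $\shrd$-sharings of $\sr_1,\sr_2$, the correct-or-$\bot$ guarantee of $\PiMultM$ (Lemma~\ref{lemma:PiMultM}) in both phases, reconstruction of $\sr\sval$ towards $P_0,P_1$ using the fact that each missing share is held by two parties of which at least one is honest, and the cross-check of $\Hash(\Mask{\msq})$ from $P_0$ against $P_1$'s sharing so that no corrupt party can force acceptance of a wrong $\msq$. Your extra claims that $\msp=\MSB{\sr}$ and that the final XOR yields $\MSB{\sval}$ go beyond what the lemma actually asserts (and, as you note, rest on the fixed-point no-overflow caveat), but they do not affect the correctness of your argument for the stated lemma.
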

%-----
\begin{proof}
	%----   
	During the offline phase, parties locally set $\shr{\sr_1}_{P_0} = (0,0), \shr{\sr_1}_{P_1} = (\sr_1, 0)$ and $\shr{\sr_1}_{P_2} = (\sr_1,0)$, which effectively assign $\Mask{{\sr_1}} = \sr_1$ and $\Pad{{\sr_1}} = 0$. Hence, the aforementioned way of computing shares non-interactively indeed generates a valid $\shrd$-sharing of $\sr_1$ according to our sharing semantics. Similarly, the $\shrd$-sharing of $\sr_2$ is valid since the parties effectively assign $\Mask{{\sr_2}} = 0$ and $\Pad{{\sr_2}} = -\sr_2$. Given the $\shrd$-sharing of $\sr_1$ and $\sr_2$, it follows from the correctness property of protocol $\PiMultM$ (Lemma~\ref{lemma:PiMultM}) that honest parties compute either $\sr = \sr_1 \sr_2$ or output $\bot$ during the offline phase.
	
	Similar to the offline phase, following the correctness of $\PiMultM$, honest parties either compute $\shrd$-sharing of $\sr \sval$ correctly or output $\bot$ during the online phase. During the reconstruction of $\sr \sval$ towards $P_0, P_1$, since each missing share is held by two other parties and we have at most one corruption, it holds that each of $P_0, P_1$ either obtain $\sr \sval$ or output $\bot$. Now that the value $\sr \sval$ is available with both $P_0$ and $P_1$, when $P_1$ performs $\shrd$-sharing of $\msq = \MSB{\sr \sval}$, party $P_2$ can cross check hash of $\Mask{\msq}$ received from $P_1$ with the one received from $P_0$. Thus a corrupt $P_0$ or $P_1$ cannot force an honest $P_2$ to accept a wrong $\msq$ value. Moreover, the last step where parties compute $\shareB{\cdot}$-shares of $\msp \xor \msq$ is non-interactive. Hence, the correctness of online phase is ensured.
	%----
\end{proof}
\end{document}